\documentclass[%
amsmath,amssymb,
]{revtex4-1}

\usepackage[paperwidth=210mm,paperheight=297mm,centering,hmargin=2.3cm,vmargin=2.7cm]{geometry}
\usepackage{marvosym}
\usepackage{amsfonts}
\usepackage{amssymb}
\usepackage{amsmath}
\usepackage{amsthm}
\usepackage{dcolumn}
\usepackage{bm,bbm}

\usepackage{braket}

\newcommand{\hr}{{\mathcal H}}

\newcommand{\kr}{{\mathcal K}}

\newcommand{\cc}{{\mathbb C}}
\newcommand{\st}{{\mathcal S}}
\newcommand{\rr}{{\mathbb R}}
\newcommand{\nn}{{\mathbb N}}


\newcommand{\expe}{\mathbbm E}
\newcommand{\eins}{\mathbf 1}
\newcommand{\prob}{\mathrm{Pr}}
\newcommand{\id}{\mathrm{id}}

\newcommand{\tr}{\mathrm{tr}}
\newcommand{\supp}{\mathrm{supp}}

\newcommand{\sr}{\mathrm{sr}}
\newcommand{\bS}{\mathbf S}
\newcommand{\conv}{\mathrm{conv}}
\newcommand{\pr}{\mathfrak{P}}
\newcommand{\fS}{\mathfrak{S}}
\newcommand{\rebd}{\mathrm{rebd}}
\newcommand{\ri}{\mathrm{ri}}
\newcommand{\aff}{\mathrm{aff}}
\newcommand{\cL}{\mathcal{L}}
\newcommand{\cN}{\mathcal{N}}

\newtheorem{theorem}{Theorem}

\newtheorem{corollary}[theorem]{Corollary}

\newtheorem{definition}[theorem]{Definition}
\newtheorem{example}[theorem]{Example}

\newtheorem{lemma}[theorem]{Lemma}

\newtheorem{proposition}[theorem]{Proposition}
\newtheorem{remark}[theorem]{Remark}

\begin{document}
\title{Resource cost results for one-way entanglement distillation and state merging of compound 
and arbitrarily varying quantum sources}
\author{H. Boche}%
\email{boche@tum.de.}
\affiliation{ 
Lehrstuhl f\"ur Theoretische Informationstechnik, Technische Universit\"at M\"unchen, 80290 M\"unchen, Germany
}%
\author{G. Jan\ss en}
\email{gisbert.janssen@tum.de}
\affiliation{ 
Lehrstuhl f\"ur Theoretische Informationstechnik, Technische Universit\"at M\"unchen, 80290 M\"unchen, Germany
}%
\date{\today}

\begin{abstract}
 We consider one-way quantum state merging and entanglement distillation under 
 compound and  arbitrarily varying source models. 
 Regarding quantum compound sources, where the source is memoryless, but the source state an unknown
 member of a certain set of density matrices, we continue investigations begun in the work of Bjelakovi\'{c} \emph{et. al.}
 [Universal quantum state merging, J. Math. Phys. \textbf{54}, 032204 (2013)] and determine the classical as well as entanglement cost of state merging. 
 We further investigate quantum state merging and entanglement distillation protocols for arbitrarily varying 
 quantum sources (AVQS). In the AVQS model, the source state is assumed to vary in an arbitrary manner for 
 each source output due to environmental fluctuations or adversarial manipulation. 
 We determine the one-way entanglement distillation capacity for AVQS, where we invoke the famous 
 robustification and elimination techniques introduced by R. Ahlswede. 
 Regarding quantum state merging for AVQS we show by example, that the robustification and elimination 
 based approach generally leads to suboptimal entanglement as well as classical communication rates.
 \end{abstract}
 \maketitle
\begin{section}{Introduction}
Investigations on communication tasks involving bipartite (or multipartite) sources within the local operations and classical 
communications (LOCC) paradigm made a substantial contribution to the progress in quantum 
Shannon theory which took place over the past two decades. \newline 
Especially the role of shared pure entanglement as a communication resource was clarified and substantiated by 
establishment of LOCC protocols inter-converting shared entanglement with optimal rates.\newline
Two prominent tasks, entanglement distillation and quantum state merging are considered in this work.
Quantum state merging was introduced by Horodecki, Oppenheim, and Winter \cite{horodecki07}. In this setting
a bipartite
quantum source described by a quantum state $\rho_{AB}$ shared by communication parties $A$ (sender) and $B$ 
(receiver) is required to be merged at the receivers site by local operations and classical communication
together with shared pure entanglement as resource, such that in the limit of large blocklengths, the source 
is approximately restored on $B'$s site. The optimal asymptotic net entanglement cost was determined in 
Ref. \citenum{horodecki07} 
to be
$
 S(A|B)
$
ebits of shared entanglement per copy of the state, which was shown to be achievable with optimal classical cost 
$
 I(A;E)
$
bits of $A \rightarrow B$ classical side communication per copy ($I(A;E)$ is the quantum mutual information of
A with an environment E purifying $\rho_{AB}$). This result allows interpretation the negative values of 
$S(A|B)$. 
For states with $S(A|B)$ being negative, quantum state 
merging is possible with net production of shared maximal entanglement which may serve as a credit for 
future quantum communication.\newline
Entanglement distillation is in some sense a task subaltern to quantum state merging, since entanglement 
distillation protocols are readily derived from quantum state merging protocols \cite{horodecki07}. 
In this task, a 
given bipartite quantum source has to be transformed into shared maximal entanglement by LOCC in the 
limit of large number of outputs. The optimal entanglement gain was determined in Ref. \citenum{devetak05c}, 
where a connection to secret key distillation from bipartite
quantum states was exhausted.\newline
However, these results were shown under strong idealizations of the sources. It was assumed, that the 
sources where tasks are performed on, are memoryless and perfectly known. 
Since source uncertainties, may they be present due to hardware 
imperfections of the preparation devices and/or manipulation by adversarial communication parties, are 
inherent to all real-life communication settings, this assumption seems rather restrictive. \newline
The contribution of this work is, to partly drop these conditions. We consider entanglement distillation
and quantum state merging in presence of compound and arbitrarily varying quantum sources.  
A compound memoryless source models a preparation device which emits systems, uncorrelated from output
to output, all described by the same given density matrix, which in turn is not perfectly known to 
the communication
parties, but identified as a member of a certain set $\mathcal{X}$ of quantum states. 
Consequently, the communication parties
have to use protocols which are of sufficient fidelity for each member of the set of states generating 
the compound source. \newline 
In the arbitrarily varying source (AVQS) model, the source state can vary from output 
to output over a generating set of states. This variation can be understood as a natural fluctuation as well 
as a manipulation of an adversarial communication party changing the source state from output to output in an 
arbitrary manner. Consequently, the parties are forced to accomplish the tasks with protocols, 
which are robust in the sense, that work with sufficient fidelity for each possible state sequence. 
In this work, we contribute the following. 
Regarding one-way quantum state merging for compound sources, we 
answer a question left open in the preceding work \cite{bjelakovic13}. We derive protocols which beside
being optimal regarding their entanglement cost, also approximate the lowest classical one-way
communication requirements allowed by corresponding converse theorems\cite{horodecki07,bjelakovic13} 
which lower bound the resource requirements for asymptotically faithful merging schemes. \newline
We use the results on one-way entanglement distillation for compound sources established 
	earlier\cite{bjelakovic13} together with the famous elimination and robustification techniques introduced
	by R. Ahlswede \cite{ahlswede78, ahlswede80} to
	determine the capacity for one-way entanglement distillation from AVQS generated by a set 
	$\mathcal{X}$ of states.
	We show, that the one-way entanglement
	distillation capacity in this case, can be expressed by the capacity function of the compound source 
	generated by the convex hull of the set generating the AVQS.\newline
Considering quantum state merging under the AVQS model, we encounter unexpected behavior.
       Opposite to the intuition gathered by previous results from classical as well as quantum Shannon theory,
       the entanglement as well as classical communication resource costs for one-way merging of an AVQS do 
       not match the costs known for the corresponding compound source generated by the 
       convex hull of $\mathcal{X}$ in general. We demonstrate this fact giving a simple example.

\begin{subsection}{Related Work}
The task of entanglement distillation was subject to several investigations in case of perfectly known 
memoryless quantum sources over the past fifteen years. In this work, we generalize a result from 
Ref. \citenum{devetak05c},
where the entanglement distillation capacity with one-way LOCC for perfectly known
memoryless bipartite quantum sources where determined. Quantum state merging was first considered in
 Ref. \citenum{horodecki07}, where the authors determined the entanglement as well as classical cost of quantum state
merging for the scenario with perfectly known density matrix. Both results where partly generalized to the
case of compound memoryless sources in Ref. \citenum{bjelakovic13} within the one-way LOCC scenario.
In this work we continue and complete considerations made therein by determining the optimal classical cost of 
one-way merging for compound quantum sources.\newline
Communication tasks involving arbitrarily varying channels and sources where considered in classical 
information theory from the late 60's. Here we especially mention the robustification 
\cite{ahlswede80, ahlswede86} and elimination \cite{ahlswede78} techniques developed by Ahlswede in the 70's, which are crucial ingredients of our proof
of the one-way entanglement distillation capacity for AVQS. \newline
Arbitrarily varying channels where also considered in quantum Shannon theory. The first result was 
by Ahlswede and Blinovsky \cite{ahlswede07}, who determined the capacity for transmission of classical messages
over an arbitrarily varying channel with classical input and quantum output. A treatment of arbitrarily 
varying quantum channels was done by Ahlswede, Bjelakovi\'{c}, the first author and N\"otzel  
published in 2013 \cite{ahlswede13}. There, they determined the quantum capacity of an arbitrarily varying quantum channel 
for entanglement transmission, entanglement generation as well as strong subspace transmission.
\end{subsection}

\begin{subsection}{Outline}
We set up the notation used in this paper in Section \ref{sect:notation}, where we also state some conventions
and preliminary facts we use freely in our considerations. The basic concepts relevant for this paper are 
concisely stated and and explained in Section \ref{sect:basic_definitions}. \newline
In Section \ref{comp_merging}, we conclude the investigations on quantum state merging for compound 
sources begun in Ref. \citenum{bjelakovic13}. Explicitly, we show existence of universal one-way LOCCs which 
are 
asymptotically optimal regarding the entanglement as well as classical $A \rightarrow B$ communication cost.
For the proof, we use protocols derived in Ref. \citenum{bjelakovic13}, which are optimal regarding their
entanglement cost but require overmuch classical side communication in some cases. These are refined in a 
sufficient way by combination with an entropy estimating instrument used by the sender, where we utilize 
methods from representation theory of the symmetric groups from Refs. \citenum{keyl01} and 
\citenum{christandl06}. 
Section \ref{av_dist} is devoted to determination of the capacity for 
entanglement distillation from an AVQS under restriction to one-way LOCC. 
We first prove an achievability result in case that the AVQS is generated by a finite set $\mathcal{X}$
of bipartite states.
Here we use entanglement distillation schemes with fidelity going to one exponentially fast 
for the compound source generated by the convex hull of $\mathcal{X}$ from Ref. \citenum{bjelakovic13}, 
together with Ahlswede's robustification and elimination techniques. Afterwards, we extend this 
result to the general case approximating the AVQS generating set by suitable finite AVQS. \newline
We also consider the issue of quantum state merging for AVQS and discover a strange feature of the quantum 
state merging task in this scenario. We show in Section \ref{av_merging}, that in general, the entanglement 
as well as classical 
cost of merging an AVQS generated by a set $\mathcal{X}$ of bipartite state are strictly lower than the costs 
of 
merging the corresponding compound source generated by $\conv(\mathcal{X})$. In Section \ref{conclusion}, we
discuss the results obtained.
\end{subsection}
\end{section}

\begin{section}{Notation and Conventions}\label{sect:notation}
 All Hilbert spaces appearing in this work are considered to be finite dimensional complex vector spaces. 
 $\mathcal{L}(\hr)$ is the set of linear maps and $\st(\hr)$ the set of states (density matrices) on a Hilbert 
 space $\hr$ in our notation. We denote the set of quantum channels, i.e. completely positive and trace 
 preserving (c.p.t.p.) maps from $\mathcal{L}(\hr)$ to $\mathcal{L}(\kr)$ by $\mathcal{C}(\hr, \kr)$ and the set 
 of trace-nonincreasing cp maps by $\mathcal{C}^{\downarrow}(\hr,\kr)$ for two Hilbert spaces $\hr$, $\kr$. 
 \newline
 Regarding states on multiparty systems, we freely make use of the following convention for a system consisting
 of some parties $X,Y,Z$, for instance, we denote $\hr_{XYZ} := \hr_{X} \otimes \hr_Y \otimes \hr_Z$, and denote
 the marginals by the letters assigned to subsystems, i.e. $\sigma_{XZ} := \tr_{\hr_Y}(\sigma)$ for $\sigma \in
 \st(\hr_{XYZ})$ and so on. For a bipartite pure state $\ket{\psi}\bra{\psi}$ on a Hilbert space $\hr_{XY}$,
 we denote its Schmidt rank (i.e. number of nonzero coefficients in the Schmidt representation of $\psi$) by 
 $\sr(\psi)$. We define
 \begin{align}
  F(a,b) := \left\|\sqrt{a}\sqrt{b}\right\|_1^2
 \end{align}
 for any two positive semidefinite operators $a,b$ on $\hr$ (this is the quantum fidelity in case that $a$ and
 $b$ are density matrices). If one of the arguments is a pure state, the fidelity is linear in the 
 remaining argument, explicitly $F$ takes the form of an inner product, 
 \begin{align}
  F(\rho, \ket{\psi}\bra{\psi}) = \braket{\psi, \rho \psi}.
 \end{align}
 Relations between $F$ and the trace distance are well known, we will use the inequalities
 \begin{align}
  F(a,\sigma) \geq \tr(a) - \|a - \rho\|_1 \label{fidelity_norm_1}
 \end{align}
 for a matrix $0 \leq a \leq \eins$ and state $\rho$, and
 \begin{align}
  \|\rho - \sigma\|_1 \leq 2 \sqrt{1 - F(\rho,\sigma)}  \label{fidelity_norm_2}
 \end{align}
  for states $\rho,\sigma$. 
 The von Neumann entropy of a quantum state $\rho$ is 
 defined 
 \begin{align}
  S(\rho) := - \tr(\rho \log \rho),
 \end{align}
 where we denote by $\log(\cdot)$ and $\exp(\cdot)$ the base two logarithms and exponentials throughout this paper.
 Given a quantum state
 $\rho$ on $\hr_{XY}$, we denote the conditional von Neumann entropy of $\rho$ given $Y$ by
 \begin{align}
  S(X|Y,\rho) := S(\rho) - S(\rho_Y),
 \end{align}
 the quantum mutual information by
 \begin{align}
  I(X;Y,\rho) := S(\rho_X) + S(\rho_Y) - S(\rho),
 \end{align}
 and the coherent information by
 \begin{align}
  I_c(X\rangle Y, \rho) := S(\rho_Y) - S(\rho) = - S(X|Y,\rho).
 \end{align}
 A special class of channels mapping bipartite systems, which is of crucial importance for our considerations, 
 are one-way LOCC channels, for which we give a concise definition in the following. For more detailed 
 information, the reader is referred to the appendix on one-way LOCCs given in Ref. \citenum{bjelakovic13} 
 and references therein.
 A quantum instrument $\mathcal{T}$ on a Hilbert space $\hr$ is given  by a set 
 $\{\mathcal{T}_k\}_{k=1}^K \subset \mathcal{C}^{\downarrow}(\hr,\kr)$ of trace non-increasing cp maps, 
 such that $\sum_{k=1}^K \mathcal{T}_k$ is  a channel. In this paper, we will only admit instruments with 
 $|K| < \infty$. With bipartite Hilbert spaces $\hr_{AB}$ and 
 $\kr_{AB}$, a channel $\cN \in \mathcal{C}(\hr_{AB},\kr_{AB})$ is an $A \rightarrow B$ (one-way) LOCC channel, 
 if it is a combination of an instrument $\{\mathcal{T}_k\}_{k=1}^K \subset \mathcal{C}^{\downarrow}
 (\hr_A,\kr_A)$ and a family $\{\mathcal{R}_k\}_{k=1}^K \subset \mathcal{C}(\hr_B,\kr_B)$ of channels in the sense, 
 that it can be written in the form
 \begin{align}
  \mathcal{N}(a) = \sum_{k=1}^K (\mathcal{T}_k \otimes \mathcal{R}_k)(a) && (a \in \mathcal{L}(\hr_{AB})).
  \label{locc_definition}
 \end{align}
 The cardinality of the message set for classical transmission from $A$ to $B$ within the application of 
 $\mathcal{N}$ is $K$ (the number of measurement outcomes of the instrument). \newline
 We denote the set of classical probability distributions on a set $\bS$ by $\pr(\bS)$. The $l$-fold Cartesian
 product of $\bS$ will be denoted $\bS^l$ and $s^l := (s_1,...,s_l)$ will be a notation for elements of 
 $\bS^l$. For positive integer $n$, the shortcut $[n]$ is used to abbreviate the set $\{1,...,n\}$.
 For two probability distributions $p,q \in \pr(\bS)$ on a finite set $\bS$, the relative entropy of
 $p$ with respect to $q$ is defined
 \begin{align}
  D(p||q) 
  := \begin{cases}\sum_{s \in \bS} p(s) \log\frac{p(s)}{q(s)} \hspace{0.3cm} \text{if}\ p \ll q \\  
   \infty  \hspace{0.3cm} \text{else}  \end{cases}
  \end{align}
 where $p \ll q$ means $\forall s \in \bS: q(s) = 0 \Rightarrow p(s) = 0$. We denote the Shannon entropy of a
 probability distribution $p$ by $H(p)$.
 For a set $A$ we denote the convex hull of $A$ by $\conv(A)$. If $\mathcal{X} := \{\rho_s\}_{s \in \bS}$ is 
 a finite set of states on a Hilbert space $\hr$, it holds
 \begin{align}
  \conv(\mathcal{X}) = \left\{\rho_p \in \st(\hr):\ \rho_p = \sum_{s\in \bS} p(s) \ \rho_s, \ q \in \pr(\bS)
    \right\}. \label{conv_hull_def}
 \end{align}
 By $\fS_l$, we denote the group of permutations on $l$ elements, in this way
 $\sigma(s^l) = (s_{\sigma(1)},...,s_{\sigma(l)})$ for each $s^l = (s_1,...,s_l)\in \bS^l$ and permutation $\sigma \in
 \fS_l$.\newline
 For any two nonempty sets $\mathcal{X}$, $\mathcal{X}'$ of states on a Hilbert space $\hr$, the Hausdorff
 distance between $\mathcal{X}$ and $\mathcal{X}'$ (induced by the trace norm $\|\cdot\|_1$) is defined by
 \begin{align}
  d_H(\mathcal{X},\mathcal{X}') 
  := \max \left\{ \sup_{\sigma \in \mathcal{X}}\inf_{\sigma' \in \mathcal{X}'}\|\sigma - \sigma'\|_1, 
  \sup_{\sigma' \in \mathcal{X}'}\inf_{\sigma \in \mathcal{X}}\|\sigma - \sigma'\|_1 \right\}.
 \end{align}
 \end{section}

\begin{section}{Basic Definitions} \label{sect:basic_definitions}
 In this section, we define the underlying scenarios, considered in the rest of this paper. 
 Given any set 
 $\mathcal{X}:= \{\rho_s\}_{s \in \bS}\subset \st(\hr)$ of states on a Hilbert space $\hr$, the 
 \emph{compound source generated by $\mathcal{X}$} (or \emph{the compound source $\mathcal{X}$}, 
 for short) is given by the family $\{\{\rho_s^{\otimes l}\}_{s \in \bS}\}_{l \in \nn}$ of states. 
 The above definition models a memoryless quantum source under uncertainty of the statistical parameters.
 The source outputs each system according to a constant density matrix, while the density matrix 
 itself is not known perfectly by the communication parties. It only can be identified as a member of
 $\mathcal{X}$. \newline
 The \emph{arbitrarily varying quantum  source (AVQS) generated by $\mathcal{X}$} 
 (or \emph{the AVQS $\mathcal{X}$})
 is given by the family  $\{\{\rho_{s^l}\}_{s^l \in S^l}\}_{l \in \nn}$, where we use the definition
 \begin{align}
  \rho_{s^l} := \rho_{s_1} \otimes ... \otimes \rho_{s_l}
 \end{align}
 for each member $s^l = (s_1,...,s_l)$ of $\bS^l$. 
 In the AVQS model, the source density matrix can be chosen from the set $\mathcal{X}$ 
 independently for each output. The variation in the source state models hardware imperfections, where 
 the source is subject to fluctuations in the state on one hand. On the other hand, this definition also
 can be understood as a powerful communication attack, where the statistical parameters of the source are, 
 to some extend, perpetually manipulated by an adversarial communication party.
 \begin{subsection}{Quantum State Merging} \label{state_merging_definitions}
  We first give a concise notion of the protocols we admit for quantum state merging. We are interested in 
  the entanglement as well as classical resource costs of quantum state merging. \newline 
  A quantum channel $\mathcal{M}$ is an \emph{$(l,k_l,D_l)$ $A \rightarrow B$ merging} 
  for bipartite sources on $\hr_{AB} := \hr_{A} \otimes \hr_B$, if it is an $A \rightarrow B$ LOCC channel
  (according to the definition from (\ref{locc_definition}))
  \begin{align}
   \mathcal{M}: 
   \cL(\mathcal{K}_{0,AB}^l \otimes \hr_{AB}^{\otimes l}) \rightarrow 
   \cL(\mathcal{K}_{1,AB}^l \otimes \hr_{B'B}^{\otimes l}),
  \end{align}
   with $k_l := \dim \kr_{A,0}^l / \dim \kr_{A,1}^l$, where we assume $\kr_{A,i} \simeq \kr_{B,i}$ ($i=1,2$),
   and 
   \begin{align}
    \mathcal{M}(x) = \sum_{k=1}^{D_l} \mathcal{A}_k \otimes \mathcal{B}_k(x). &&(x \in \cL(\mathcal{K}_{0,AB}^l 
    \otimes \hr_{AB}^{\otimes l}))
   \end{align}
   where $\{\mathcal{A}_k\}_{k=1}^{D_l} \subset \mathcal{C}^{\downarrow}(\kr_{0,A}^l \otimes 
   \hr_{A}^{\otimes l},\kr_{1,A}^l)$ constitutes an instrument and $\{\mathcal{B}_k\}_{k=1}^{D_l}
   \subset \mathcal{C}(\kr_{B,0}^l \otimes \hr_{B}^{\otimes l}, \kr_{B,1}^l \otimes \hr_{B'B}^{\otimes l})$ 
   is a set of channels depending on the parameter $k \in [D_l]$.  
   The spaces $\kr_{AB,0}^l, \kr_{AB,1}^l$ are understood to represent bipartite systems shared by $A$ and 
   $B$, which carry the input 
   and output entanglement resources used in the process.  
   As a convention, we will incorporate the maximally entangled states $\phi_i^l \in \st(\kr_{AB,i}^l)$, $i = 0,1$ into 
   the definition of the protocol, it holds
   \begin{align}
    k_l :=  \frac{\dim \mathcal{K}^l_{0,A}}{\dim \mathcal{K}^l_{1,A}} = 
	 \frac{\dim \mathcal{K}^l_{0,B}}{\dim \mathcal{K}^l_{1,B}}  
	 = \frac{\sr(\phi_0^l)}{\sr(\phi_1^l)}.
    \end{align}  
   We define the \emph{merging fidelity} of $\mathcal{M}_l$ given a state 
   $\rho^l \in \st(\hr_{AB}^{\otimes l})$ by 
    \begin{align}
     F_m(\rho^l, \mathcal{M}_l) := F\left(\mathcal{M}_l\otimes \id_{\hr_E^l}(\phi_0^l \otimes 
      \psi^l), \phi_1^l \otimes \psi'^l\right). \label{merging_fid_def}
    \end{align}
   Here, $\psi^l$ is a purification of $\rho^l$ with an environmental system described on an additional 
   Hilbert space $\hr_E^l$ (usually $\hr_E^l = \hr_{E}^{\otimes l}$ with some space $\hr_E$), and 
   $\psi'^l$ is a state identical to $\psi^l$ but defined on $\hr_{B'B}^{\otimes l}$ completely under control 
   of $B$. 
   It was shown in Ref. \citenum{bjelakovic13} (Lemma 1), that the r.h.s. of (\ref{merging_fid_def}) does 
   not depend on the chosen purification (which justifies the definition of $F_m$), and that the function 
   $F_m$ is convex in the first and linear in the second argument. For the rest of this section, we assume
   $\mathcal{X}:= \{\rho_s\}_{s \in \bS}$ to be any set of bipartite states on $\hr_{AB}$.
  \begin{definition}\label{comp_merging_achievable_rate}
   A number $R_q \in \rr$ is called an \emph{achievable entanglement cost for $A \rightarrow B$ merging
   of the compound source $\mathcal{X}$ with classical communication rate $R_c$}, if 
   there exists a sequence $\{\mathcal{M}_l\}_{l\in \nn}$ of $(l,k_l,D_l)$ $A\rightarrow B$ mergings, 
   such that the conditions 
   \begin{enumerate}
    \item $\underset{l \rightarrow \infty}{\lim}\ \underset{\rho \in \mathcal{X}}{\inf} 
	   \ F_m(\rho^{\otimes l},\mathcal{M}_l) = 1 $ 
    \item $\underset{l \rightarrow \infty}{\limsup}\ \frac{1}{l}\log k_l \leq R_q$
    \item $\underset{l \rightarrow \infty}{\limsup}\ \frac{1}{l} \log D_l \leq R_c$
   \end{enumerate}
    are satisfied. 
    \end{definition}
    In the following definition, priority lies on the optimal entanglement consumption (or gain) of 
    merging processes, while the classical communication requirements are of subordinate priority. However,
    the classical communication is required to be rate bounded in the asymptotic limit. Since the 
    classical communication requirements
    are of interest as well, we also determine the optimal classical communication cost in Section 
    \ref{comp_merging}. 
    \begin{definition}
    The \emph{$A \rightarrow B$ merging cost} $C_{m,\rightarrow}^{AV}(\mathcal{X})$ 
    of the compound source $\mathcal{X}$ is defined by
  \begin{align}
   C_{m,\rightarrow}(\mathcal{X}) := \inf\left\{R_q \in \rr: \ \begin{array}{l}
	      R_q \ \textrm{is an achievable entanglement cost for}\ A \rightarrow B \\ 
	      \textrm{merging of the compound source} \ \mathcal{X}\ \textrm{with}\\
	      \textrm{some classical communication rate}\ R_c
	  \end{array} \right\}.
  \end{align}	
  \end{definition}
  We recall the following theorem proven in Ref. \citenum{bjelakovic13}
  \begin{theorem}[cf. Ref. \citenum{bjelakovic13}] \label{before_merging_cost}
   \begin{align}
    C_{m,\rightarrow}(\mathcal{X}) = \sup_{\rho \in \mathcal{X}} \ S(A|B,\rho).
   \end{align}

  \end{theorem}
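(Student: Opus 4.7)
My plan is to prove the two matching inequalities $C_{m,\rightarrow}(\mathcal{X})\ge\sup_{\rho\in\mathcal{X}} S(A|B,\rho)$ (converse) and $C_{m,\rightarrow}(\mathcal{X})\le\sup_{\rho\in\mathcal{X}} S(A|B,\rho)$ (achievability). The converse is essentially free: any universal sequence $\{\mathcal{M}_l\}_{l\in\nn}$ of $(l,k_l,D_l)$ mergings with $\inf_{\rho\in\mathcal{X}} F_m(\rho^{\otimes l},\mathcal{M}_l)\to 1$ in particular merges the memoryless source $\rho^{\otimes l}$ for every fixed $\rho\in\mathcal{X}$, so the single-state converse from Ref.~\citenum{horodecki07} forces $\limsup_l \frac{1}{l}\log k_l \ge S(A|B,\rho)$ for each such $\rho$; taking the supremum over $\mathcal{X}$ yields the lower bound.

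For the achievability direction I would first treat the case of a finite generating set $\mathcal{X}=\{\rho_1,\ldots,\rho_N\}$ and then pass to the general case by an $\epsilon$-net argument. The core is a one-shot decoupling construction implemented universally over $\mathcal{X}$. Alice projects her system onto a universal typical subspace of $\hr_A^{\otimes l}$ built from Schur--Weyl duality in the spirit of Refs.~\citenum{keyl01,christandl06}, applies a Haar-random unitary inside that subspace, and performs a random coarse-grained measurement corresponding to a compression of dimension $\exp(l(\sup_\rho S(A|B,\rho)+\delta))$. She transmits the outcome (of cardinality $D_l$, controlling the classical rate $R_c$) to Bob. By the standard decoupling estimate, the post-measurement state on Alice's remaining register and the environment $E$ is exponentially close to a product state for each fixed $\rho\in\mathcal{X}$, and Uhlmann's theorem then supplies a Bob-side recovery isometry reconstructing the purification on $B'B$. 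A union bound over the finite set absorbs the $N$ different cases at vanishing cost in the rate, yielding universally small merging infidelity at entanglement rate $\sup_\rho S(A|B,\rho)+\delta$.

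To handle arbitrary $\mathcal{X}$, for every $\epsilon>0$ I would choose a finite $\epsilon$-net $\mathcal{X}_\epsilon\subset\mathcal{X}$ in trace norm; the finite-case protocol merges $\mathcal{X}_\epsilon$ with fidelity approaching one, and the trace-distance continuity of the merging fidelity (via \eqref{fidelity_norm_2}) together with the Alicki--Fannes continuity of $S(A|B,\cdot)$ transfers the merging statement to all of $\mathcal{X}$ at rate $\sup_{\rho\in\mathcal{X}} S(A|B,\rho)+O(\epsilon)$; letting $\delta,\epsilon\to 0$ closes the bound. I expect the principal obstacle to lie in arranging a \emph{single} typical projector and a \emph{single} Haar-random-unitary decoupling inequality that is valid uniformly over $\mathcal{X}$ at the sharp rate $\sup_\rho S(A|B,\rho)$; once the representation-theoretic universal typical projection of the correct dimension is in place, the decoupling analysis and the continuity-based extension to arbitrary $\mathcal{X}$ are largely routine.
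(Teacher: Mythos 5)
Your overall route is the same one the paper relies on: this theorem is recalled from Ref.~\citenum{bjelakovic13}, and both there and in the proof of Proposition \ref{pre_comp_merg} of the present paper the achievability is obtained exactly as you sketch --- a decoupling-based universal merging protocol for finite generating sets, extended to arbitrary $\mathcal{X}$ by a net argument, with the converse reduced to the single-state converse of Ref.~\citenum{horodecki07} applied to each $\rho\in\mathcal{X}$ separately. Your converse paragraph and your finite-set construction are in order as a plan.

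The genuine gap is in your passage from a finite net to general $\mathcal{X}$. You propose to fix an $\epsilon$-net $\mathcal{X}_\epsilon$ once and for all and to transfer the fidelity statement from net points to nearby states by trace-distance continuity, sending $\epsilon\to 0$ at the end. This fails because the relevant continuity is between $l$-fold tensor powers: if $\|\rho-\sigma\|_1\le\epsilon$, the best one can say for purifications is $\|\psi_\rho^{\otimes l}-\psi_\sigma^{\otimes l}\|_1 = O(\sqrt{l\epsilon})$, so for any \emph{fixed} $\epsilon$ the merging fidelity of off-net states degrades without bound as $l\to\infty$ and the condition $\lim_l\inf_{\rho\in\mathcal{X}}F_m(\rho^{\otimes l},\mathcal{M}_l)=1$ is not met; a diagonal argument over $\epsilon$ does not rescue this, since for no fixed $\epsilon$ is the limit statement true. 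The repair --- and this is exactly what the paper does in the proof of Proposition \ref{pre_comp_merg} --- is to let the net parameter shrink with blocklength, e.g. $\tau_l=2^{-l\theta'}$, at which point the net cardinality grows like $(3/\tau_l)^{2(\dim\hr_{AB})^2}$, i.e.\ exponentially in $l$. One then genuinely needs the finite-set protocol to have merging infidelity decaying exponentially in $l$ with an exponent that can be made to dominate the exponent of the net-cardinality growth (the paper's bound is $1-N_{\tau_l}^2\cdot 2^{-l\theta}-4\sqrt{l\tau_l}$, with $\theta'$ chosen as $\min\{\theta/8(\dim\hr_{AB})^2,\delta/4\}$). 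So the exponential convergence of the finite-set decoupling step is not an optional refinement but a load-bearing ingredient of the extension to arbitrary $\mathcal{X}$; your plan should make this quantitative interplay explicit. A second, smaller omission: the classical-communication side must remain rate-bounded, and with shrinking nets the message count picks up an extra additive term $\frac{1}{l}\log N_{\tau_l}$, which stays bounded only because $N_{\tau_l}$ grows at most exponentially --- worth a sentence in a complete write-up, even though Definition \ref{comp_merging_achievable_rate} only asks for some finite $R_c$.
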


  \begin{definition}\label{avqs_merging_achievable_rate}
   A number $R_q \in \rr$ is called an \emph{achievable entanglement cost for $A \rightarrow B$ merging 
   of the AVQS $\mathcal{X}$ with classical communication rate $R_c$} if there exists a sequence 
   $\{\mathcal{M}_l\}_{l\in \nn}$ of $(l,k_l,D_l)$ $A\rightarrow B$ mergings satisfying 
   \begin{enumerate}
    \item $\underset{l \rightarrow \infty}{\lim}\ \underset{s^l \in \bS^l}{\inf} 
	    F_m(\rho_{s^l},\mathcal{M}_l) = 1 $ 
    \item $\underset{l \rightarrow \infty}{\limsup}\ \frac{1}{l}\log k_l \leq R_q$
    \item $\underset{l \rightarrow \infty}{\limsup}\ \frac{1}{l} \log D_l \leq R_c$.
   \end{enumerate}
   \end{definition}
   \begin{definition}
    The $A \rightarrow B$ merging cost $C_{m,\rightarrow}^{AV}(\mathcal{X})$ of the AVQS $\mathcal{X}$ is 
    defined by
  \begin{align}
   C_{m,\rightarrow}^{AV}(\mathcal{X}) := \inf\left\{R_q \in \rr: \ \begin{array}{l}
	      R_q \ \textrm{is an achievable entanglement cost for } \ A \rightarrow B\ \textrm{merging}\\ 
	      \ \textrm{of the AVQS} \ \mathcal{X}\ \textrm{with some classical communication rate}\  R_c
	 \end{array} \right\}
  \end{align}	
  \end{definition}
 \end{subsection}
   
 \begin{subsection}{Entanglement Distillation}
  Concerning entanglement distillation, we are interested in the asymptotically entanglement gain of one-way 
  LOCC distillation
  procedures. We use the following definitions.
 
  \begin{definition}\label{avqs_distillation_achievable_rates}
   A non-negative number $R$ is an \emph{achievable $A \rightarrow B$ entanglement distillation rate 
   for the AVQS generated by a set $\mathcal{X}$ with classical rate $R_c$}, 
   if there exists a sequence 
   $\{\mathcal{D}_l\}_{l \in \nn}$ of $A \rightarrow B$ LOCC channels,
   \begin{align}
    \mathcal{D}_l = \sum_{m=1}^{M_l} \mathcal{A}_{m,l} \otimes \mathcal{B}_{m,l} && (l \in \nn)
   \end{align}
   such that the conditions
   \begin{enumerate}
    \item $\underset{l \rightarrow \infty}{\lim}\ \underset{s^l \in \bS^l}{\inf} F(\mathcal{D}_l(\rho_{s^l}),\phi_l)
            =1$
    \item $\underset{l \rightarrow \infty}{\liminf}\ \frac{1}{l} \log \sr(\phi_l) \geq R$
    \item $\underset{l \rightarrow \infty}{\limsup} \frac{1}{l} \log M_l \leq R_c$
    \end{enumerate}
   are fulfilled, where $\phi_l$ is a maximally entangled state shared by $A$ and $B$ for each $l \in \nn$.
   \end{definition}
   In this paper, we will be primarily interested in the entanglement gain of one-way entanglement 
   distillation. Regarding the
   classical communication cost of entanglement distillation, no general cost results are known even in case 
   that the source is memoryless with perfectly known source state \cite{devetak05c}.
     \begin{definition}
   The $A \rightarrow B$ entanglement distillation capacity for the AVQS generated by $\mathcal{X}$ is defined 
   \begin{align}
    D^{AV}_{\rightarrow}(\mathcal{X}) := \sup\left\{R :\begin{array}{l} R \ \textrm{is an achievable} \ A \rightarrow B 
			       \ \textrm{entanglement distillation rate for} \\ \textrm{the AVQS 
					      }\ \mathcal{X} \textrm{with some classical communication rate}\ R_c\end{array}\right\}.
   \end{align}
  \end{definition}
  The corresponding definitions for achievable rates and entanglement distillation capacity of compound sources
  can be easily guessed (see Ref. \citenum{bjelakovic13}). To introduce some notation we use in this paper, 
  we state the a theorem from Ref. \citenum{devetak05c}, where the 
  $A \rightarrow B$ entanglement distillation capacity 
  $D_{\rightarrow}(\rho)$ of a memoryless bipartite quantum source with
  perfectly known density matrix $\rho$ was considered. 
  \begin{theorem}[Ref. \citenum{devetak05c}, Theorem 3.4]
   Let $\rho$ be a state on $\hr_{AB}$. It holds
   \begin{align}
    D_{\rightarrow}(\rho) = \lim_{k \rightarrow \infty} \frac{1}{k} \sup_{\mathcal{T} \in \Theta_k} 
    D^{(1)}(\rho^{\otimes k}, \mathcal{T}) \label{single_state_ent_dist}
   \end{align}
    with
   \begin{align}
     D_{\rightarrow}^{(1)}(\sigma, \mathcal{T}) := \sum_{\substack{j \in [J]: \\ \lambda_{j}(\sigma)\neq 0}} 
     \lambda_j(\sigma) \ I_c(A\rangle B, \sigma_j),
     \label{ent_dist_cap_func_oneshot}
   \end{align}
    where $\Theta_k$ is the set of finite-valued quantum instruments on $A$'s site, i.e. 
    \begin{align}
     \Theta_k := \left\{ \{\mathcal{T}_j\}_{j=1}^J \subset \mathcal{C}^{\downarrow}(\hr_A^{\otimes k}, \kr_A): 
	\sum_{j=1}^J \mathcal{T}_j \in \mathcal{C}(\hr_A^{\otimes k}, \kr_A),\ J < \infty, \
	\dim\kr_A < \infty \right\}. \label{instrument_set_defined}
    \end{align}
    For each state $\sigma$ and quantum instrument $\mathcal{T} := \{\mathcal{T}_j\}_{j=1}^J$ on $A$'s site 
    and definitions
    \begin{align}
     \lambda_j(\sigma) := \tr(\mathcal{T}_j(\sigma_A)), \hspace{0.1cm}\text{and} \hspace{0.3cm}
     \sigma_j := \frac{1}{\lambda_j}(\sigma)(\mathcal{T}_j \otimes \id_{\hr_B})(\sigma)
    \end{align}
    for each $j$ with $\lambda_j(\sigma) \neq 0$.
  \end{theorem}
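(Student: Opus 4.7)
The plan is to prove the formula in two parts: an achievability direction establishing that every rate strictly below the right-hand side can be attained by a suitable sequence of one-way LOCC distillation protocols, and a converse bounding every achievable rate by the same regularized quantity. The regularization $\lim_{k\to\infty}\frac{1}{k}\sup_{\mathcal{T}\in\Theta_k}$ is unavoidable in advance because blocked instruments of length $k$ can exploit correlations across source copies that no product instrument can, and no matching single-letter formula is known.

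For achievability I would first handle the pure-state case. Given a purification $|\psi\rangle_{ABE}$ of $\rho$, a random-coding hashing protocol (as in Devetak--Winter) implements a CSS/stabilizer-type code on Alice's side, transmits the classical syndrome at vanishing quantum cost, and lets Bob apply a coherent decoder; this yields $S(\rho_B)-S(\rho_{AB})=I_c(A\rangle B,\psi)$ ebits per copy with fidelity tending to one. The general (mixed) case is then reduced to this one by preceding the distillation with an instrument $\mathcal{T}=\{\mathcal{T}_j\}_{j=1}^J\in\Theta_k$ applied by Alice to blocks of $k$ copies of $\rho$: an outcome $j$ of asymptotic relative frequency $\lambda_j(\rho^{\otimes k})$ leaves the parties with a bipartite state close to $\sigma_j^{\otimes l_j}$, on which the pure-state hashing procedure is applied coherently to extract $I_c(A\rangle B,\sigma_j)$ ebits per effective copy. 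Summing over $j$ and invoking typicality of the outcome empirical distribution produces the rate $\frac{1}{k}D^{(1)}_{\rightarrow}(\rho^{\otimes k},\mathcal{T})$; optimizing over $\mathcal{T}$ and letting $k\to\infty$ delivers achievability of the right-hand side.

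For the converse I would take any sequence $\{\mathcal{D}_l\}_{l\in\nn}$ of one-way LOCC distillation protocols with $\sr(\phi_l)$ maximally entangled output and fidelity tending to $1$, write $\mathcal{D}_l=\sum_m\mathcal{A}_{m,l}\otimes\mathcal{B}_{m,l}$, and regard $\{\mathcal{A}_{m,l}\}$ as an element of $\Theta_l$. The data processing inequality applied to Bob's channels $\mathcal{B}_{m,l}$, combined with the behaviour of the coherent information under a classical flag held by Alice, yields $I_c(A\rangle B,\mathcal{D}_l(\rho^{\otimes l}))\leq D^{(1)}_{\rightarrow}(\rho^{\otimes l},\{\mathcal{A}_{m,l}\})$. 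A Fannes-type continuity estimate then converts the high-fidelity condition on $\phi_l$ into $\frac{1}{l}\log\sr(\phi_l)\leq \frac{1}{l}I_c(A\rangle B,\mathcal{D}_l(\rho^{\otimes l}))+o(1)$, and passing to the limit bounds the achievable rate by $\lim_{k\to\infty}\frac{1}{k}\sup_{\mathcal{T}\in\Theta_k}D^{(1)}_{\rightarrow}(\rho^{\otimes k},\mathcal{T})$. The main obstacle is the mixed-state achievability step: one must verify that Alice's classical flag $j$ can be used as coherent side information for the subsequent hashing codes without destroying the correlations with Bob's share, which requires carefully interleaving the classical announcement with a coherent decoder on Bob's side, essentially a quantum Slepian--Wolf binning. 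The converse, once one has the right conversion of fidelity into coherent information, is comparatively routine.
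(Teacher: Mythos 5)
The paper does not prove this statement at all: it is quoted verbatim from Devetak and Winter \cite{devetak05c} (their Theorem 3.4) purely to fix the notation $D^{(1)}_{\rightarrow}$ and $\Theta_k$ used later, so there is no internal argument to compare yours against. Your outline reproduces the structure of the original Devetak--Winter proof --- the hashing inequality $D_{\rightarrow}(\sigma)\geq I_c(A\rangle B,\sigma)$ obtained from a coherentified secret-key/random-binning protocol on a purification, instrument preprocessing on blocks of $k$ copies combined with typicality of the outcome statistics to reach $\tfrac{1}{k}D^{(1)}_{\rightarrow}(\rho^{\otimes k},\mathcal{T})$, and a converse from monotonicity of the coherent information under $A\rightarrow B$ LOCC together with a Fannes-type continuity estimate --- and it is sound at the level of a sketch. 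Two caveats if you were to write it out in full: what you call the ``pure-state case'' is really the mixed-state hashing inequality (the purification $\psi_{ABE}$ is only a bookkeeping device, and the post-instrument states $\sigma_j$ remain mixed in general, so nothing is ever genuinely reduced to pure inputs --- the instrument serves to improve the rate, not to purify); and that hashing inequality is itself the substantial content of the theorem, whose proof via one-way secret-key distillation and its coherent implementation you invoke but do not supply.
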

  \begin{remark}
   It is known \cite{devetak05c}, that the limit in (\ref{single_state_ent_dist}) exists for each 
   state, and maximization over instruments in this formula is always realized by an instrument
   $\mathcal{T} = \{\mathcal{T}_j\}_{j=1}^J$ with $J \leq \dim \hr_A^{2k}$ and the operation
   $\mathcal{T}_j$ described by only one Kraus operator for $1 \leq j \leq J$.
   \end{remark}
  In order to obtain a compact notation for the capacity functions arising in the entanglement distillation  
  scenarios we consider in this paper, we introduce a one-way LOCC $\hat{\mathcal{T}} := 
  \sum_{j=1}^J \mathcal{T}_j \otimes \ket{e_j}\bra{e_j}$ for each instrument $\{\mathcal{T}_j\}_{j=1}^J$ 
  with domain $\hr_A$ and an orthonormal system $\{e_j\}_{j=1}^J$ in a suitable space 
  $\hr'_B \simeq \cc^{J}$ assigned to $B$, it holds
  \begin{align}
   D^{(1)}(\sigma,\mathcal{T}) = I_c(A\rangle BB',\hat{\mathcal{T}}(\sigma))\label{ent_dist_cap_func_oneshot_2}
  \end{align}
  in (\ref{ent_dist_cap_func_oneshot}) for each given state $\sigma$. 
  \end{subsection}
 \end{section}
\begin{section}{Quantum State Merging for Compound Quantum Sources}\label{comp_merging}
 In this section, we derive, for any given bipartite compound source $\mathcal{X}$, asymptotically faithful 
 state merging protocols, which are approximately optimal regarding their entanglement as well as 
 classical $A \rightarrow B$ communication cost given the corresponding converse statement \cite{bjelakovic13}.
 While the merging cost was determined in Ref. \citenum{bjelakovic13} before (see Theorem 
 \ref{before_merging_cost} above also), the protocols used there, are 
 suboptimal, in general, regarding their classical $A \rightarrow B$ communication requirements.
 However, it was shown there (see Section V in Ref. \citenum{bjelakovic13}), that 
 \begin{align}
  R_c = \sup_{\rho \in \mathcal{X}} \ I(A;E,\rho)
 \end{align}
 (supremum of the quantum mutual information between $A$ and a purifying environment $E$) is a lower bound
 on the $A \rightarrow B$ classical communication cost for merging a compound source $\mathcal{X}$ by 
 by protocols which have fidelity one in the limit of large blocklengths. \newline
 Proposition \ref{comp_merging_complete} below states, that this bound actually is achievable, and thus 
 together with results from Ref. \citenum{bjelakovic13} provides a full solution of the quantum state merging 
 problem for compound quantum sources.  
 The assertions proved in this section will be utilized in Section \ref{av_merging}, where we compare 
 the $A \rightarrow B$ merging as well as the classical communication cost of a certain AVQS merging protocol 
 for a set $\mathcal{X}$ with the optimal costs of state merging protocols for the compound source generated 
 by $\conv(\mathcal{X})$.\newline
 The preliminary Proposition \ref{pre_comp_merg} below is a slight generalization of Theorem 6 in 
 Ref. \citenum{bjelakovic13}. It states existence of protocols achieving the optimal entanglement cost, but
 with generally suboptimal classical communication rates. However, these protocols will be utilized to 
 derive protocols suitable for the proof of Proposition \ref{comp_merging_complete}.
  \begin{proposition}[cf. Ref. \citenum{bjelakovic13}, Theorem 6] \label{pre_comp_merg}
  Let $\mathcal{X} \subset \st(\hr_{AB})$. 
  For each $\delta > 0$, there is a number $l_0 \in \nn$, such that for each blocklength 
  $l > l_0$ there is an $(l,k_l,D_l)$ $A \rightarrow B$ merging $\mathcal{M}_l$, such that
  \begin{align}
   \inf_{\rho \in \mathcal{X}}\ F(\rho^{\otimes l},\mathcal{M}_l) \geq 1 - 2^{-lc_1}
  \end{align}
  with a a constant $c_1 = c_1(\mathcal{X},\delta) > 0$,  
  \begin{align}
  \frac{1}{l}\log k_l \leq \sup_{\rho \in \mathcal{X}} S(A|B,\rho) + \delta
  \label{comp_merg_prop_pre_1}
  \end{align}
  and
  \begin{align}
   \frac{1}{l} \log D_l \leq  \sup_{\rho \in \mathcal{X}}S(\rho_{A}) + \sup_{\rho \in \mathcal{X}} S(A|B,\rho) + \delta.
  \end{align}
 \end{proposition}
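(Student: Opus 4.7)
The plan is to adapt the construction underlying Theorem 6 of Ref.~\citenum{bjelakovic13}, which already yields both the exponentially small error and the entanglement rate bound (\ref{comp_merg_prop_pre_1}). The genuinely new content here is the explicit bound on the classical cost $D_l$, so the main work is to count the number of classical outcomes of the instrument produced there and to verify that it is controlled by $\sup_{\rho\in\mathcal{X}}S(\rho_A)+\sup_{\rho\in\mathcal{X}}S(A|B,\rho)$.

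First I would recall the structure of the merging protocol from Ref.~\citenum{bjelakovic13}. Alice begins by projecting her $l$-fold system $\hr_A^{\otimes l}$ onto a universal typical subspace $P_l$; via the representation-theoretic tools of Refs.~\citenum{keyl01} and \citenum{christandl06} one can choose $P_l$ so that $\inf_{\rho\in\mathcal{X}}\tr(P_l\,\rho_A^{\otimes l})\to 1$ exponentially fast while simultaneously
\begin{align}
  \tr P_l \;\leq\; 2^{l[\sup_{\rho\in\mathcal{X}}S(\rho_A)+\delta/3]} .
\end{align}
Set $\tilde d_l:=\tr P_l$, $K_0:=\dim\kr_{0,A}^l$ and $K_1:=\dim\kr_{1,A}^l$. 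On top of $P_l$, the FQSW-style construction of Ref.~\citenum{bjelakovic13} produces an instrument $\{\mathcal{A}_k\}_{k=1}^{D_l}$ of the form $\mathcal{A}_k(x)=Q_k U x U^* Q_k$, where $U$ is a suitably chosen unitary on $\kr_{0,A}^l\otimes P_l\hr_A^{\otimes l}$ and $\{Q_k\}_{k=1}^{D_l}$ is a partition of the identity into orthogonal projections each of rank $K_1$, together with a matching family $\{\mathcal{B}_k\}_{k=1}^{D_l}$ of channels on $B$'s side obtained via Uhlmann's theorem. The entanglement-rate bound (\ref{comp_merg_prop_pre_1}) and the fidelity estimate $\inf_{\rho\in\mathcal{X}}F_m(\rho^{\otimes l},\mathcal{M}_l)\geq 1-2^{-lc_1}$ then follow directly from that theorem after replacing $\delta$ by $\delta/3$.

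Counting dimensions, $\sum_k Q_k=\idn$ on a space of dimension $K_0\,\tilde d_l$ with each $Q_k$ of rank $K_1$, so that
\begin{align}
  D_l \;=\; \frac{K_0\,\tilde d_l}{K_1} \;=\; k_l\,\tilde d_l .
\end{align}
Combining the bound on $\log k_l$ from (\ref{comp_merg_prop_pre_1}) with the above bound on $\log\tilde d_l$ gives
\begin{align}
  \frac{1}{l}\log D_l \;\leq\; \sup_{\rho\in\mathcal{X}}S(A|B,\rho)+\sup_{\rho\in\mathcal{X}}S(\rho_A)+\delta
\end{align}
for all sufficiently large $l$, as claimed.

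The step I expect to be the main obstacle is the very first one: arranging the universal typical projection $P_l$ so that its dimension is simultaneously controlled by $\sup_{\rho\in\mathcal{X}}S(\rho_A)$ for every $\rho\in\mathcal{X}$, even when $\mathcal{X}$ is an infinite set of bipartite states. This is handled by the fact that the Keyl/Christandl--Mitchison projections depend only on the Young-diagram spectral data of $\rho_A^{\otimes l}$, and are therefore automatically uniform over any family of single-copy marginals whose spectra lie in a common compact set; any residual dependence on $\mathcal{X}$ is absorbed into $\delta$ by approximating $\{\rho_A:\rho\in\mathcal{X}\}$ with a polynomial-size Hausdorff $\delta$-net and invoking continuity of $\rho\mapsto S(\rho_A)$.
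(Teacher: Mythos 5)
Your counting of the classical messages is the right mechanism and matches how the bound arises in the source paper: Alice's instrument partitions a space of dimension $\dim\kr_{0,A}^l\cdot\tr P_l$ into outcomes of rank $\dim\kr_{1,A}^l$, so $D_l=k_l\cdot\tr P_l$ and the rate bound $\sup_\rho S(\rho_A)+\sup_\rho S(A|B,\rho)+\delta$ follows by adding the two exponents; the paper's proof simply imports this as eq.~(101) of Ref.~\citenum{bjelakovic13} rather than re-deriving it. So the finite-set half of your argument is sound.

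The gap is in the passage to arbitrary (not necessarily finite or countable) $\mathcal{X}$. You identify the universal typical projection as the main obstacle, but that is the easy part --- the Keyl/Christandl--Mitchison projections are indeed state-independent. What is not automatically universal is the rest of the protocol: the decoupling unitary and, crucially, Bob's recovery channels $\{\mathcal{B}_k\}$, which in Ref.~\citenum{bjelakovic13} are constructed for a \emph{finite} set and carry a fidelity guarantee that degrades with the cardinality of that set (the bound has the form $1-N^2\cdot 2^{-l\theta}$). A fixed-mesh Hausdorff $\delta$-net, as you propose, cannot close this gap: for $\rho\in\mathcal{X}$ at distance $\tau$ from its nearest net point $\rho_i$ one only controls $\|\rho^{\otimes l}-\rho_i^{\otimes l}\|_1\leq l\tau$, so the fidelity loss for off-net states is of order $\sqrt{l\tau}$, which diverges for fixed $\tau$ as $l\to\infty$. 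The paper's proof therefore takes a blocklength-dependent net with mesh $\tau_l=2^{-l\theta'}$, with $\theta'$ chosen small against the finite-set error exponent $\theta$ (and against $\delta$), so that both $N_{\tau_l}^2\cdot 2^{-l\theta}$ and $4\sqrt{l\tau_l}$ decay exponentially, and then uses Fannes' inequality to transfer the entropic bounds from the net back to $\mathcal{X}$ with an error $\overline{\nu}(\tau_l)\to 0$. Without this shrinking-net device your argument does not deliver fidelity converging to one on all of $\mathcal{X}$, let alone the exponential rate $1-2^{-lc_1}$ claimed in the proposition.
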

 \begin{proof}
  The assertion to prove includes both, a strengthening of the fidelity convergence rates in Ref. 
  \citenum{bjelakovic13},
  Theorem 4 to exponentially decreasing trade-offs, and a generalization of Theorem 6 in 
  Ref. \citenum{bjelakovic13} to arbitrary (not necessary finite or countable) sets of states.\newline
  Approximating $\mathcal{X}$ by a $\tau_l$-net 
  $\mathcal{X}_{\tau_l} := \{\rho_i\}_{i = 1}^{N_{\tau_l}} \subset \st(\hr_A \otimes \hr_B)$ for each 
  blocklength $l$ (see Ref. \citenum{bjelakovic13} 
  for details) and using the result for finite sets, we infer by careful observation of the merging fidelities 
  in Ref. \citenum{bjelakovic13} (see eqns. (36), (37), and (58) therein), that for given $\delta > 0$ and large enough blocklength $l$, there exists a 
  $(l,k_l,D_l)$ $A\rightarrow B$ merging $\mathcal{M}_l$, where
  \begin{align}
   \inf_{\rho \in \mathcal{X}} \ F_m(\rho^{\otimes l},\mathcal{M}_l) 
   \geq 1 - N_{\tau_l}^2 \cdot 2^{-l \theta} - 4\sqrt{l \cdot \tau_l} \label{pre_comp_merg_prf_1}
  \end{align}
  is valid for the merging fidelities with a constant 
  $\theta = \theta(\delta) > 0$, and
  \begin{align}
   \frac{1}{l} \log k_l \leq \underset{\rho \in \mathcal{X}}{\sup}\ S(A|B,\rho) + \frac{\delta}{2}.
   \label{pre_comp_merg_prf_1_1}
  \end{align}
  (see (57) in Ref. \citenum{bjelakovic13}). Moreover, we can bound the number of messages for the 
  classical
  $A \rightarrow B$-communication (see (101) in Ref. \citenum{bjelakovic13}) by,
  \begin{align}
   \frac{1}{l}\log D_l 
   &\leq \underset{1 \leq i \leq N_{\tau_l}}{\max}\ S(\rho_{A,i})
   + \underset{1 \leq i \leq N_{\tau_l}}{\max}\ S(A|B,\rho_i) + \frac{\delta}{2} + 
   \frac{1}{l}\log N_{\tau_l} \\
   &\leq \underset{\rho \in \mathcal{X}}{\sup}\ S(\rho_{A})
   + \underset{\rho \in \mathcal{X}}{\sup}\ S(A|B,\rho) + \overline{\nu}(\tau_l) + \frac{\delta}{2} + 
   \frac{1}{l}\log N_{\tau_l}, \label{pre_comp_merg_prf_clss}
  \end{align}
  where the summand $\overline{\nu}(\tau_l) := 3 \tau_l \log \frac{\dim\hr_{AB}}{\tau_l}$ follows from 
  threefold application of Fannes' inequality \cite{fannes73},  i.e. 
  \begin{align}
   \left|\underset{1 \leq i \leq N_{\tau_l}}{\max} S(\rho_{A,i})
   + \underset{1 \leq i \leq N_{\tau_l}}{\max} S(A|B,\rho_i) 
   - \underset{\rho \in \mathcal{X}}{\sup}\ S(\rho_{A})
   + \underset{\rho \in \mathcal{X}}{\sup}\ S(A|B,\rho)\right| \leq \overline{\nu}(\tau_l).
  \end{align}
  Due to the bound given in Ref. \citenum{bjelakovic13}, Lemma 9, it is known, that the nets can be chosen 
  with cardinality bounded by
  \begin{align}
   N_{\tau_l} \leq \left(\frac{3}{\tau_l} \right)^{2(\dim \hr_{AB})^2}.
  \end{align}
   for each $l \in \nn$. Choosing net parameter $\tau_l = 2^{-l \theta'}$ with $\theta' := 
   \min\{\theta/8(\dim\hr_{AB})^2, \delta/4\}$ for each $l$, we infer
   \begin{align}
    \inf_{\rho \in \mathcal{X}} \ F_m(\rho^{\otimes l},\mathcal{M}_l) 
   \geq 1 - 2^{-l \frac{\theta}{2}} - 2^{-l\frac{\theta'}{4}}
   \geq 1 - 2^{-l c_1} \label{pre_comp_merg_prf_2}
   \end{align}
   with a constant $c_1 = c_1(\delta) > 0$, and
   \begin{align}
    \frac{1}{l}\log D_l 
      \leq \underset{\rho \in \mathcal{X}}{\sup} \ S(\rho_{A})
   + \underset{\rho \in \mathcal{X}}{\sup} \ S(A|B,\rho) + \delta \label{pre_comp_merg_prf_3}
   \end{align}
   from (\ref{pre_comp_merg_prf_clss}) if $l$ is large enough, to satisfy $\overline{\nu}(\tau_l) 
   \leq \frac{\delta}{4}$. Collecting the bounds in (\ref{pre_comp_merg_prf_1_1}), 
   (\ref{pre_comp_merg_prf_2}), and (\ref{pre_comp_merg_prf_3}), we are done. 
 \end{proof}
 Before we state and prove Proposition \ref{comp_merging_complete}, we collect some results from representation
 theory of the symmetric groups, which we utilize in the proof.\newline 
 We denote by $YF_{d,l}$ 
 the set of young frames with at most $d$ rows and $l$ boxes for $d,l \in \nn$. A young frame $\lambda \in 
 YF_{d,l}$ is determined by a tuple $(\lambda_1,...,\lambda_d)$ of nonnegative integers summing to $l$.
 The box-lengths $\lambda_1,...,\lambda_d$ of $\lambda$ define a probability
 distribution $\overline{\lambda}$ on $[d]$ in a natural way 
 via the definition $\overline{\lambda}(i) := \frac{1}{l}\lambda_i$ 
 for each $1 \leq i \leq d$. 
 To each Young frame $\lambda \in YF_{d,l}$, there is an invariant subspace of $(\cc^d)^{\otimes l}$, 
 and we denote by $P_{\lambda,l}$ the projector onto the subspace belonging to $\lambda$.\newline
 Theorem \ref{spectrum_estimation} below allows, to asymptotically estimate the spectrum of a density operator 
 $\rho$ by projection valued measurements on i.i.d. sequences of the form $\rho^{\otimes l}$, and is an 
 important ingredient of our proof of Proposition \ref{comp_merging_complete}. 
 A variant of the first statement of the theorem was first proven in by Keyl and Werner\cite{keyl01}. 
 The actual bounds stated below are from Ref. \citenum{christandl06}, while the remaining statements of the 
 theorem are well-known facts in group representation theory (Ref. \citenum{christandl06} and references 
 therein are recommended for further information). 
 \begin{theorem}[cf. Refs. \citenum{keyl01} and \citenum{christandl06}]\label{spectrum_estimation}
  The following assertions are valid for each $d, l \in \nn$.
 \begin{enumerate}
  \item For $\lambda \in YF_{d,l}$ and $\rho \in \st(\cc^d)$, it holds
  \begin{align}
    \tr(P_{\lambda,l}\rho^{\otimes l}) \leq (l+1)^{d(d-1)/2} \exp(-l D(\overline{\lambda}||r))
  \end{align}
    where $\overline{\lambda} \in \mathfrak{P}([d])$ is the probability distribution given by the normalized 
    box-lengths of $\lambda$, and $r$ is the probability distribution on $[d]$ induced by the decreasingly 
    ordered spectrum of $\rho$ (with multiplicities of eigenvalues counted). 
    \label{spectrum_estimation_1}
  \item $|YF_{d,l}| \leq (l+1)^d$. \label{spectrum_estimation_2}
  \item For $\lambda, \lambda' \in YF_{d,l}$, it holds $P_{\lambda,l} P_{\lambda',l} = 0$ if 
  $\lambda \neq \lambda'$. \label{spectrum_estimation_3}
 \end{enumerate}
\end{theorem}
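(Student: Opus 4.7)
The plan is to organise everything around the Schur--Weyl decomposition of $(\cc^d)^{\otimes l}$. The commuting actions of the unitary group $U(d)$ (diagonal) and the symmetric group $\fS_l$ (permutations) yield an isomorphism $(\cc^d)^{\otimes l} \cong \bigoplus_{\lambda \in YF_{d,l}} U_\lambda \otimes W_\lambda$, where $U_\lambda$ is the irreducible polynomial representation of $U(d)$ with highest weight $\lambda$ and $W_\lambda$ is the Specht module of $\fS_l$ corresponding to the same partition. By construction, $P_{\lambda,l}$ is the orthogonal projector onto the $\lambda$-isotypic component $U_\lambda\otimes W_\lambda$.

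First I would dispatch the bookkeeping parts (\ref{spectrum_estimation_2}) and (\ref{spectrum_estimation_3}). For (\ref{spectrum_estimation_3}), distinct $\lambda,\lambda'$ index disjoint summands in the direct sum above, so the associated projectors annihilate each other. For (\ref{spectrum_estimation_2}), a Young frame in $YF_{d,l}$ is determined by a nonincreasing tuple $(\lambda_1,\ldots,\lambda_d)$ of nonnegative integers summing to $l$; since each $\lambda_i$ lies in $\{0,1,\ldots,l\}$, the crude count already gives $|YF_{d,l}| \leq (l+1)^d$.

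The substantive statement is (\ref{spectrum_estimation_1}). Because $P_{\lambda,l}$ commutes with $U^{\otimes l}$ for every $U\in U(d)$, the trace $\tr(P_{\lambda,l}\rho^{\otimes l})$ is a spectral invariant of $\rho$. Diagonalising $\rho = V\,\mathrm{diag}(r)\,V^\dagger$ gives $\tr(P_{\lambda,l}\rho^{\otimes l}) = \tr(P_{\lambda,l}\,\mathrm{diag}(r)^{\otimes l}) = \dim(W_\lambda)\, s_\lambda(r)$, where $s_\lambda$ is the Schur polynomial in $d$ variables. I would then bound this product in two classical steps: (i) use that $s_\lambda(r)$ is a sum over semistandard Young tableaux of shape $\lambda$ with entries in $[d]$, each contributing a monomial $\prod_i r_i^{\mu_i}$ with $(\mu_i)$ dominated (after reordering) by $\lambda$, so that for decreasingly ordered $r$ the largest term is $r^\lambda := \prod_i r_i^{\lambda_i}$ and $s_\lambda(r)\leq \dim(U_\lambda)\cdot r^\lambda$; (ii) combine with $\dim(W_\lambda)\leq \binom{l}{\lambda_1,\ldots,\lambda_d}$ and the sharp method-of-types inequality $\binom{l}{\lambda}\,r^\lambda \leq \exp(-l\,D(\overline{\lambda}\|r))$, which follows from the fact that $\binom{l}{\lambda}\,\overline{\lambda}^\lambda \leq 1$. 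Finally, the Weyl dimension bound $\dim(U_\lambda)\leq (l+1)^{d(d-1)/2}$ produces the claimed prefactor.

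The hard part is squeezing the polynomial prefactor down to $(l+1)^{d(d-1)/2}$ rather than a cruder $(l+1)^{d^2}$: splitting the trace as $\dim(W_\lambda)\cdot s_\lambda(r)$ and bounding each factor separately already forces one to invoke the tight Weyl estimate, whose proof relies on counting Gelfand--Tsetlin patterns (or equivalently the product formula for $\dim(U_\lambda)$). The cleanest route is the Christandl--Mitchison approach, in which one works directly with the Schur--Weyl isomorphism and sidesteps a separate treatment of $\dim(W_\lambda)$ and $\dim(U_\lambda)$; otherwise the steps are assemblage of standard facts from representation theory and the method of types.
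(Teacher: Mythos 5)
Your proposal is correct and is essentially the argument of the cited references: the paper itself gives no proof of this theorem (it is quoted from Keyl--Werner and Christandl--Mitchison), and your Schur--Weyl route --- $\tr(P_{\lambda,l}\rho^{\otimes l})=\dim(W_\lambda)\,s_\lambda(r)$, the dominance bound $s_\lambda(r)\leq \dim(U_\lambda)\,r^\lambda$ for decreasingly ordered $r$, the type inequality $\binom{l}{\lambda}r^\lambda\leq \exp(-lD(\overline{\lambda}\|r))$, and the Weyl dimension estimate $\dim(U_\lambda)\leq (l+1)^{d(d-1)/2}$ --- is exactly how the stated prefactor is obtained there. Parts 2 and 3 are handled correctly as well.
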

 \begin{lemma}[Refs. \citenum{winter99}, \citenum{ogawa07}]\label{tender_operator}
 Let $\tau$, $X$ be matrices with $\tau \geq 0$, $\tr(\tau) \leq 1$, and $0 \leq X \leq 1$, $\epsilon \in 
 (0,1)$. If $\tr(\rho X) \geq 1 - \epsilon$, it holds
 \begin{align}
  \|\sqrt{X}\rho\sqrt{X} - \rho \|_1 \leq 2 \sqrt{\epsilon}.
 \end{align}
\end{lemma}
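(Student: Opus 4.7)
The plan is to bound $\|\sqrt{X}\rho\sqrt{X}-\rho\|_1$ by splitting the difference into two one-sided pieces, each handled by a Cauchy--Schwarz style estimate for the trace norm, and exploiting the simple operator inequality $(\mathbf{1}-\sqrt{X})^2\leq\mathbf{1}-X$, which in turn follows from $\sqrt{X}\geq X$ for $0\leq X\leq\mathbf{1}$.

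First I would write
\begin{align}
\sqrt{X}\rho\sqrt{X}-\rho = \sqrt{X}\rho(\sqrt{X}-\mathbf{1}) + (\sqrt{X}-\mathbf{1})\rho
\end{align}
and apply the triangle inequality for $\|\cdot\|_1$. For each of the two summands I would invoke the standard factorization bound $\|CD\|_1\leq\|C\|_2\,\|D\|_2$ (where $\|\cdot\|_2$ denotes the Hilbert--Schmidt norm) after inserting $\sqrt{\rho}\sqrt{\rho}=\rho$. Explicitly,
\begin{align}
\|(\mathbf{1}-\sqrt{X})\rho\|_1 \leq \|(\mathbf{1}-\sqrt{X})\sqrt{\rho}\|_2\,\|\sqrt{\rho}\|_2,
\end{align}
and analogously for the other term with $\sqrt{X}\sqrt{\rho}$ on the left.

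The key computation is the Hilbert--Schmidt bound
\begin{align}
\|(\mathbf{1}-\sqrt{X})\sqrt{\rho}\|_2^{\,2} = \tr\!\bigl((\mathbf{1}-\sqrt{X})^2\rho\bigr) \leq \tr\!\bigl((\mathbf{1}-X)\rho\bigr) \leq \epsilon,
\end{align}
where the operator inequality $(\mathbf{1}-\sqrt{X})^2 = \mathbf{1}-2\sqrt{X}+X\leq\mathbf{1}-X$ uses $\sqrt{X}\geq X$ on $0\leq X\leq\mathbf{1}$, and the final step uses the hypothesis $\tr(\rho X)\geq 1-\epsilon$ together with $\tr(\rho)\leq 1$. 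The remaining factors satisfy $\|\sqrt{\rho}\|_2^{\,2}=\tr(\rho)\leq 1$ and $\|\sqrt{X}\sqrt{\rho}\|_2^{\,2}=\tr(X\rho)\leq 1$, so each of the two summands is bounded by $\sqrt{\epsilon}$, giving the claimed $2\sqrt{\epsilon}$.

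There is no real obstacle here; the only point requiring care is the operator inequality $\sqrt{X}\geq X$ for $0\leq X\leq\mathbf{1}$, which I would justify by functional calculus (applying the scalar inequality $\sqrt{t}\geq t$ on $[0,1]$ to the spectrum of $X$). Everything else is a two-line triangle inequality plus Cauchy--Schwarz in Hilbert--Schmidt form.
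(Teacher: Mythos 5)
Your proof is correct. The paper does not prove this lemma itself---it only cites Winter and Ogawa--Nagaoka---and your argument (splitting $\sqrt{X}\rho\sqrt{X}-\rho$ into two one-sided terms, applying H\"older in the form $\|CD\|_1\leq\|C\|_2\,\|D\|_2$ after inserting $\sqrt{\rho}\sqrt{\rho}=\rho$, and using the operator inequality $(\mathbf{1}-\sqrt{X})^2\leq\mathbf{1}-X$) is exactly the standard argument from the cited Ogawa--Nagaoka reference that yields the sharp constant $2\sqrt{\epsilon}$ in place of Winter's original $\sqrt{8\epsilon}$.
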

 The following proposition is the main result of this section.
\begin{proposition}\label{comp_merging_complete}
 Let $\mathcal{X} \subset \st(\hr_{AB})$ be a set of states on $\hr_{AB}$.
 For each $\delta > 0$, there exists a number $l_0 = l_0(\delta)$, such that for each $l > l_0$ there  
 is an $(l,k_l,D_l)$ $A \rightarrow B$ merging $\mathcal{M}_l$ with
 \begin{align}
  \underset{\rho \in \mathcal{X}}{\inf} \ F_m(\rho^{\otimes l}, \mathcal{M}_l) \geq 1 - 2^{-lc_2} 
  \label{comp_merging_complete_3}
 \end{align}
 with a constant $c_2=c_2(\mathcal{X},\delta) > 0$, 
 \begin{align}
  \frac{1}{l} \log k_l \leq \sup_{\rho \in \mathcal{X}} S(A|B, \rho) + \delta, \label{comp_merging_complete_1}
 \end{align}
 and
 \begin{align}
  \frac{1}{l}\log D_l \leq \sup_{\rho \in \mathcal{X}} I(A;E,\rho) + \delta, \label{comp_merging_complete_2}
 \end{align}
 where the quantum mutual information in (\ref{comp_merging_complete_2}) is evaluated on the AE marginal state 
 of any purification $\psi$ of $\rho$ (notice, that the above abuse of notation does not lead to ambiguities,
 since $I(A;E, \tr_{AE}(\psi)) = S(\rho_A) + S(A|B,\rho)$ holds for any purification $\psi$ of $\rho$).
\end{proposition}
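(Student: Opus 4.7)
The plan is to refine the merging protocols of Proposition \ref{pre_comp_merg}, whose classical-cost bound $\sup_{\rho \in \mathcal{X}} S(\rho_A) + \sup_{\rho \in \mathcal{X}} S(A|B,\rho)$ can strictly exceed the target $\sup_{\rho \in \mathcal{X}} I(A;E,\rho) = \sup_{\rho \in \mathcal{X}}[S(\rho_A) + S(A|B,\rho)]$ whenever the two suprema are attained at different states. To close this gap we let Alice first coarsely estimate $S(\rho_A)$ via a spectrum-estimating projective instrument, which allows her to pre-select a subprotocol tailored to a narrow slab of $\mathcal{X}$ on which $S(\rho_A)$ is essentially constant; within such a slab the two suprema collapse to $\sup_{\rho \in \mathcal{X}} I(A;E,\rho)$ up to a small additive error.

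Concretely, fix $\eta := \delta/4$, cover $\mathcal{X}$ by $N = O(1/\eta)$ subsets $\mathcal{X}_1,\ldots,\mathcal{X}_N$ in which $S(\rho_A)$ varies by at most $2\eta$, and for each $i$ invoke Proposition \ref{pre_comp_merg} with parameter $\delta/4$ to obtain a merging $\mathcal{M}^{(i)}_l$ of exponentially small infidelity uniformly on $\mathcal{X}_i$, of entanglement cost at most $\sup_{\rho \in \mathcal{X}} S(A|B,\rho) + \delta/4$, and of classical cost at most $\sup_{\rho \in \mathcal{X}_i} S(\rho_A) + \sup_{\rho \in \mathcal{X}_i} S(A|B,\rho) + \delta/4 \leq \sup_{\rho \in \mathcal{X}} I(A;E,\rho) + \delta$, the latter inequality following by picking $\rho^\ast \in \mathcal{X}_i$ nearly attaining $\sup_{\rho \in \mathcal{X}_i} S(A|B,\rho)$ and noting that $S(\rho^\ast_A) \geq \sup_{\rho \in \mathcal{X}_i} S(\rho_A) - 2\eta$. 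The combined protocol $\mathcal{M}_l$ has Alice perform the spectrum-estimating instrument $\{P_{\lambda,l}\}_{\lambda \in YF_{\dim \hr_A,\,l}}$ of Theorem \ref{spectrum_estimation}, derive from the outcome $\lambda$ an index $i(\lambda)$ for any slab whose entropy window contains $H(\overline{\lambda})$, transmit $i(\lambda)$ to Bob using $\log N = O(1)$ bits, and then both parties execute $\mathcal{M}^{(i(\lambda))}_l$.

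For the fidelity analysis, Theorem \ref{spectrum_estimation}(\ref{spectrum_estimation_1}) combined with Pinsker's inequality and continuity of the Shannon entropy yields that $|H(\overline{\lambda}) - S(\rho_A)| > \eta$ forces $D(\overline{\lambda}\,\|\,\mathrm{spec}(\rho_A)) \geq c(\eta,\dim \hr_A) > 0$, so $\tr(P_{\lambda,l}\rho_A^{\otimes l}) \leq \mathrm{poly}(l)\cdot 2^{-lc}$ for every such $\lambda$; summing over these $\lambda$'s and using part (\ref{spectrum_estimation_2}) to polynomially bound $|YF_{\dim\hr_A,l}|$ yields $\tr((P_{i_0} \otimes \idn_B)\,\rho_{AB}^{\otimes l}) \geq 1 - 2^{-lc''}$ with $P_{i_0} := \sum_{\lambda:\,i(\lambda)=i_0} P_{\lambda,l}$ for any $i_0$ with $\rho \in \mathcal{X}_{i_0}$. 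Lemma \ref{tender_operator} then shows that the post-measurement state on the correct branch is trace-norm $2\cdot 2^{-lc''/2}$-close to $\rho_{AB}^{\otimes l}$; composing with the exponential merging fidelity of $\mathcal{M}^{(i_0)}_l$ on $\rho^{\otimes l}$ and the convexity of $F_m$ in its first argument yields (\ref{comp_merging_complete_3}). The main obstacle is obtaining a uniform exponential rate across the $N$ subprotocols; this is resolved because $N$ depends only on $\delta$, so the minimum of the finitely many positive exponents delivered by Proposition \ref{pre_comp_merg} remains positive, while the $\log N = O(1)$ overhead and the polynomial count $|YF_{\dim\hr_A,l}|$ contribute $o(1)$ to both rates, delivering (\ref{comp_merging_complete_1}) and (\ref{comp_merging_complete_2}).
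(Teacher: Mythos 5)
Your overall strategy --- decompose $\mathcal{X}$ into slabs of nearly constant $S(\rho_A)$, have Alice select a subprotocol from Proposition \ref{pre_comp_merg} via the Keyl--Werner spectrum-estimating instrument, and control the error with Theorem \ref{spectrum_estimation} plus the gentle-measurement Lemma \ref{tender_operator} --- is exactly the one the paper uses, and the rate bookkeeping (the two suprema collapsing to $\sup_{\rho}I(A;E,\rho)$ on a narrow slab, the $\log N = O(1)$ classical overhead) is sound. But one step fails as written: the claim that for $\rho \in \mathcal{X}_{i_0}$ one has $\tr\bigl((P_{i_0}\otimes\idn_B)\rho_{AB}^{\otimes l}\bigr) \geq 1 - 2^{-lc''}$ with $P_{i_0} = \sum_{\lambda :\, i(\lambda) = i_0} P_{\lambda,l}$. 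Theorem \ref{spectrum_estimation} only guarantees that, up to exponentially small probability, the outcome $\lambda$ satisfies $|H(\overline{\lambda}) - S(\rho_A)| \leq \eta$; it does not force $H(\overline{\lambda})$ into the \emph{same} entropy window as $S(\rho_A)$. If $S(\rho_A)$ sits within $\eta$ of the boundary between the windows of $\mathcal{X}_{i_0}$ and $\mathcal{X}_{i_0+1}$, the estimate lands in the adjacent window with probability that need not vanish at all (for a state exactly on the boundary it is of order one, since $H(\overline{\lambda})$ fluctuates around $S(\rho_A)$ with no exponential suppression on either side). In that event $i(\lambda) = i_0+1$ and the parties run $\mathcal{M}^{(i_0+1)}_l$, a protocol for which you have no fidelity guarantee on $\rho$ because $\rho \notin \mathcal{X}_{i_0+1}$. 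Your bound (\ref{comp_merging_complete_3}) therefore does not follow for such boundary states, and since the slabs are fixed in advance while $\mathcal{X}$ is arbitrary, boundary states cannot be wished away.

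The repair is the one the paper adopts: build each subprotocol not for the slab $\mathcal{X}_i$ alone but for the enlarged set $\tilde{\mathcal{X}}_i := \bigcup_{|j-i|\leq 1}\mathcal{X}_j$, so that whichever of the (at most three) neighboring outcomes the instrument produces, the selected subprotocol still covers the true state; the high-probability event is then ``outcome in $n(i)$'' rather than ``outcome equal to $i$'', and the entropy spread on $\tilde{\mathcal{X}}_i$ grows only to $3\eta$, which is absorbed into $\delta$. Two smaller points you should also address to make the construction a single $(l,k_l,D_l)$ merging: (i) the $N$ subprotocols may have different entanglement ratios $k_l^{(i)}$ and different output resource states, so they must all be padded to the worst rate and a common maximally entangled output (partial tracing and local unitaries suffice and do not affect the classical rate); (ii) when $\sup_{\rho}S(A|B,\rho) \geq 0$ the bookkeeping is cleanest after tensoring with a fixed maximally entangled state to make the conditional entropy negative, which changes neither $I(A;E,\rho)$ nor the net entanglement cost.
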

\begin{remark}
 Regarding the classical communication cost a quite restrictive converse statement was shown to be 
 valid\cite{bjelakovic13}. Asymptotically faithful one-way state merging schemes demand classical 
 communication at rate 
 \begin{align}
  R_c \geq \sup_{\rho \in \mathcal{X}} I(A;E,\rho)
 \end{align}
 regardless of the entanglement rate achieved, i.e. even investing more entanglement resources (choosing 
 protocols with suboptimal merging rates) does not lead to a reduction of the classical communication cost 
 in a significant way.
\end{remark}

\begin{proof}[Proof of Proposition \ref{comp_merging_complete}]
 One half of the above assertion was already proven (see Ref. \citenum{bjelakovic13} and 
 Proposition \ref{pre_comp_merg} at the beginning of this section). Explicitly, it was 
 shown there, that $A \rightarrow B$ LOCC channels exist for each set of bipartite states, which for 
 sufficiently large blocklengths fulfill the conditions formulated in (\ref{comp_merging_complete_3}) and 
 (\ref{comp_merging_complete_1}). We complete the proof by demonstrating, that also the constraint (\ref{comp_merging_complete_2}) on the 
 classical $A \rightarrow B$ communication rate can be met simultaneously with 
 (\ref{comp_merging_complete_3}) and (\ref{comp_merging_complete_1}) by certain 
 protocols. 
 The strategy of our proof will be as follows. We decompose $\mathcal{X}$ into disjoint subsets 
 $\mathcal{X}_1,...,\mathcal{X}_N$, each containing only states with approximately equal 
 entropy on the $A$-marginal system and combine an entropy estimating instrument on the $A$-system with 
 a suitable merging scheme for each set $\mathcal{X}_i$ according to Proposition
 \ref{pre_comp_merg}. We fix $\delta > 0$,  and assume, to simplify the argument, that 
 \begin{align}
  \tilde{s} := \underset{\rho \in \mathcal{X}}{\sup}\ S(A|B,\rho) < 0 \label{cond_entr_neg}
 \end{align}
 holds (i.e. merging is possible without input entanglement resources for large enough blocklengths). 
 Otherwise the argument below can be carried out using further input entanglement and wasting it 
 before action of the protocol. We define $d := \dim \hr_A$ and fix $\eta \in (0,1]$
 to be determined later. Consider the sequence 
 \begin{align}
 s_0 := 0 < s_1 <...< s_{N} := \log d, \hspace{0.3cm} s_i := s_{i-1} + \eta\ \text{for each}\ 1 \leq i < N.
 \end{align}
 Define Intervals $I_1 := [s_0,s_1]$ and 
 $I_i := (s_{i-1}, s_i]$ for $i = 2,...,N$, which generate a decomposition of $\mathcal{X}$ into disjoint
 sets $\mathcal{X}_1,...,\mathcal{X}_{N}$ by definitions
 \begin{align}
  \mathcal{X}_i := \{\rho \in \mathcal{X}: S(\rho_A) \in I_i\} && (i \in [N]), \label{comp_decomp}
 \end{align}
 and set
 \begin{align}
  \tilde{\mathcal{X}}_i := \bigcup_{j \in n(i)} \mathcal{X}_i (i \in [N])
 \end{align}
 where $n(i)$ is defined $n(i) := \{j \in [N]: |j-i| \leq 1\}$ for all $i$. In order to construct an entropy estimating instrument in the $A$ marginal systems, we define 
 an operation $\mathcal{P}_l^{(i)} \in \mathcal{C}^{\downarrow}(\hr_{AB}^{\otimes l},\hr_{AB}^{\otimes l})$ by 
 \begin{align}
  \mathcal{P}_l^{(i)}(\cdot) := p_{i,l} \otimes \eins_{\hr_B^{\otimes l}}(\cdot)
  p_{i,l}^\ast \otimes \eins_{\hr_B^{\otimes l}} \text{with}  \hspace{0.3cm} \hspace{0.3cm} 
      p_{i,l} := \sum_{\substack{\lambda \in YF_{d,l}:\\ H(\overline{\lambda}) \in I_i}} P_{\lambda,l}
 \end{align}
 for each $i \in [N]$ using the notation from Theorem \ref{spectrum_estimation}. Notice, that $p_1,...,p_N$ 
 form a projection valued measure on $\hr_A^{\otimes l}$ due to Theorem 
 \ref{spectrum_estimation}.\ref{spectrum_estimation_3}. By construction, we have for 
 each state $i \in [N]$, $\rho \in \mathcal{X}_i$,
 \begin{align}
  \sum_{j \in [N]\setminus n(i)} \tr(\mathcal{P}_l^{(j)}(\rho^{\otimes l})) 
    &= \sum_{j \in [N]\setminus n(i)} \tr(p_{i,l} \rho_A^{\otimes l}) \label{entropy_estimation_1}\\
    &= \sum_{\substack{\lambda \in YF_{d,l}:\\|H(\lambda)-S(\rho_A)|\geq \eta}} \tr(P_{\lambda,l}
	\rho_A^{\otimes l}) \label{entropy_estimation_2}\\
    &\leq |YF_{d,l}|\cdot (l+1)^{d(d-1)/2}  \\
      &\times \ \exp\left(-l\left(\underset{r:H(r) \in I_i}{\min}
	  \underset{\substack{\lambda \in YF_{d,l}: \\ |H(\overline{\lambda})-H(r)|\geq \eta}}{\min}
	   D(\overline{\lambda}||r)\right) \right) \label{entropy_estimation_3},
    \end{align}
 where (\ref{entropy_estimation_1}) and (\ref{entropy_estimation_2}) are valid due
 to construction and (\ref{entropy_estimation_3}) follows from Theorem 
 \ref{spectrum_estimation}.\ref{spectrum_estimation_1}. Since the 
 relative entropy term in the exponent on the r.h.s. of (\ref{entropy_estimation_3}) is bounded away from 
 zero for each fixed number $\eta>0$ (consult the appendix of this paper for a proof of this fact), i.e.
  \begin{align}
 \underset{r:H(r) \in I_i}{\min}
	  \underset{\substack{\lambda \in YF_{d,l}: \\ |H(\overline{\lambda})-H(r)|\geq \eta}}{\min}
	   D(\overline{\lambda}||r) \geq 2c_3 && (i \in [N])\label{entropy_estimation_constant_bound}
 \end{align}
with a constant $c_3 = c_3(\eta)> 0$, and the functions outside the exponential term are growing 
polynomially for $l \rightarrow \infty$ (see Theorem \ref{spectrum_estimation}.\ref{spectrum_estimation_2})
, we deduce 
\begin{align}
\sum_{j \in [N]\setminus n(i)} \tr(\mathcal{P}_l^{(j)}(\rho^{\otimes l})) \geq 2^{-lc_3} &&(i \in [N])
\label{entropy_estimation_4}
\end{align}
provided that $l$ is large enough. \newline
Define index sets $J := \{i: \mathcal{X}_i \neq \emptyset\}$ and $\tilde{J} := 
\{i: \tilde{\mathcal{X}}_i \neq \emptyset\}$. We know from Proposition \ref{pre_comp_merg}, 
that for each sufficiently large $l$, we find an $(l,k_l,D_l^{(i)})$ $A \rightarrow B$ merging
$\widetilde{\mathcal{M}}^{(i)}_l$ for each $i \in \tilde{J}$ such that
\begin{align}
 \inf_{\rho \in \tilde{\mathcal{X}}_i}\ F_m(\rho^{\otimes l},\widetilde{\mathcal{M}}_l^{(i)})
 &\geq 1 - 2^{-l\tilde{c}_i} \label{comm_sub_comp_2}
\end{align}
 holds with a constant $\tilde{c}_i > 0$, 
 \begin{align}
    -\frac{1}{l}\log k_l \leq \sup_{\rho \in \tilde{\mathcal{X}}_i} S(A|B,\rho)  + \frac{\delta}{2}
   \label{comm_sub_comp_3}
 \end{align}
 and 
 \begin{align}
  \frac{1}{l} \log D_l^{(i)} &\leq \sup_{\rho \in \tilde{\mathcal{X}}_i} S(\rho_A) + 
	\sup_{\rho \in \tilde{\mathcal{X}}_i} S(A|B,\rho) + \frac{\delta}{2} \label{comm_sub_comp_4}
  \end{align}
 for the classical $A \rightarrow B$ communication rate. By construction of the sets 
 $\tilde{\mathcal{X}}_i$, $i \in \tilde{J}$, it also holds
 \begin{align}
 I(A;E,\rho) \ = \ S(\rho_A) + S(A|B,\rho) \ \geq \ \sup_{\rho \in \tilde{\mathcal{X}}_i} S(\rho_A) 
 - 3 \eta + S(A|B,\rho)
 \end{align}
 for each $\rho \in \tilde{\mathcal{X}}_i$. Taking suprema over the set $\tilde{\mathcal{X}}_i$ on both sides of
 the above inequality in combination with (\ref{comm_sub_comp_4}) leads us to the estimate
\begin{align}
 \frac{1}{l}\log D^{(i)}_l \ \leq \ \sup_{\rho \in \tilde{\mathcal{X}}_i} I(A;E,\rho) + \frac{\delta}{2}
		    + 3 \eta  
                     \ \leq \ \sup_{\rho \in \mathcal{X}} I(A;E,\rho) + \frac{\delta}{2} + 3 \eta
 \label{comm_opt_comp_1}
 \end{align}
 for each $i \in [\tilde{J}]$.  Combining the entropy estimating instrument 
 $\{\mathcal{P}_l^{(j)}\}_{j=1}^N$ with the corresponding merging protocols, we define
 \begin{align}
  \mathcal{M}_l(\cdot) := \sum_{i = 1}^N \widetilde{\mathcal{M}}_l^{(i)} \circ \mathcal{P}_l^{(i)}(\cdot). 
  \label{comp_opt_merg_def}
 \end{align}
 The maps $\widetilde{\mathcal{M}}_l^{(i)}$ are yet undefined for all numbers 
 $i \in [N] \setminus \widetilde{J}$.
 Since they will not be relevant for the fidelity, they may be defined by any trivial local 
 operations, with 
 $D_l^{(i)} = 1$ for $i \in [N] \setminus \widetilde{J}$. Moreover, we assume, that the merging rate of 
 $\mathcal{M}_l^{(i)}$ for each $i$ is stuck to the the worst and each $\mathcal{M}_l^{(i)}$ outputs approximately
 the same maximally entangled resource output state $\phi_l$. We can always achieve this by partial tracing and
 local unitaries, which do not further affect the classical communication rates.\newline
 By inspection of the definition in (\ref{comp_opt_merg_def}) one readily verifies, that $\mathcal{M}_l$ is, 
 in fact, an 
 $(l,k_l,D_l)$ $A \rightarrow B$ merging, with
 \begin{align}
  D_l = \sum_{i \in \tilde{J}} D^{(i)}_l + |N - \widetilde{J}|,
\end{align}
 and therefore, classical communication rate bounded by
 \begin{align}
  \frac{1}{l} \log D_l &= \frac{1}{l}\log \left( \sum_{i \in \tilde{J}} D_l^{(i)} + 
			|N - \widetilde{J}|\right)  \\
		      &\leq \frac{1}{l}\log\left(N \cdot \max_{i \in [N]} D_l^{(i)}\right)	\\
		      &\leq \sup_{\rho \in \mathcal{X}} I(A;E,\rho) + \frac{\delta}{2} + 3 \eta 
		      + \frac{\log N}{l} 
		      \label{comp_opt_15}. 
 \end{align}
 It remains to show, that we achieve achieve merging fidelity one  with $\{\mathcal{M}_l\}_{l \in \nn}$ 
 for each $\rho \in \mathcal{X}$ with exponentially decreasing trade-offs for large enough blocklengths.
 Assume $\rho$ is a member of $\mathcal{X}_i$ for any index $i \in J$. Then, it holds
 \begin{align}
  F_m(\rho^{\otimes l},\mathcal{M}_l) 
  & \geq \sum_{j \in n(i)} F_m(\rho^{\otimes l},\widetilde{\mathcal{M}}_l^{(j)}\circ \mathcal{P}_l^{(j)} 
  	  ) \label{comp_opt_fid_bound_1}\\
  & = \sum_{j \in n(i)} F(\rho^{\otimes l},\widetilde{\mathcal{M}}_l^{(j)}\circ \widetilde{\mathcal{P}}_l^{(i)})
   - \sum_{j \in n(i)} \sum_{\substack{k \in n(i)\\k\neq j}} F(\rho^{\otimes l},
     \widetilde{\mathcal{M}}_l^{(j)}\circ \mathcal{P}_l^{(k)}). \label{comp_opt_fid_bound_2} 
  \end{align}
The inequality above holds, because the merging fidelity is linear in the operation and all summands are 
nonnegative together with the definition of $\mathcal{M}_l$. The equality is by some zero-adding of terms 
an using the definition 
$\widetilde{\mathcal{P}}_l^{(i)} := \sum_{j \in n(i)} \mathcal{P}_l^{(j)}$ together with linearity 
of the merging fidelity in the operation again. We bound the terms in (\ref{comp_opt_fid_bound_2}) separately.
Beginning with the second term, we notice, that the fidelity is homogeneous in its inputs and bounded by one 
for states, it holds
\begin{align}
  F(\widetilde{\mathcal{M}}_l^{(j)}\circ \mathcal{P}_l^{(k)} \otimes \id_{\hr_E^{\otimes n}}(\psi_l), 
 \phi_l \otimes \psi_l') \leq \tr(\mathcal{P}_l^{(k)}(\rho_A^{\otimes n})) \label{comp_opt_fid_bound_4}.
\end{align}
Summing up the bounds in (\ref{comp_opt_fid_bound_4}), rearranging the summands and using the definition of 
$\widetilde{\mathcal{P}}_l^{(i)}$, we obtain the bound
\begin{align}
  \sum_{j \in n(i)} \sum_{\substack{k \in n(i)\\k\neq j}} F_m(\rho^{\otimes l},
    \widetilde{\mathcal{M}}_l^{(j)}\circ \mathcal{P}_l^{(k)}) 
    & \leq \sum_{j \in n(i)} \sum_{\substack{k \in n(i)\\k\neq j}} \tr(\mathcal{P}_l^{(k)}(\rho^{\otimes n})) \\
    & = (|n(i)|-1) \, \tr(\widetilde{\mathcal{P}}_l^{(i)}(\rho^{\otimes l}))  \\
    & \leq |n(i)| - 1. \label{comp_opt_fid_td_0}
\end{align} 
To bound the first terms in (\ref{comp_opt_fid_bound_2}), we use the relation between fidelity and trace norm 
in (\ref{fidelity_norm_1}). It then holds, for each $j \in n(i)$, 
\begin{align}
 F_m(\rho^{\otimes l}, \widetilde{\mathcal{M}}_l^{(j)}\circ \widetilde{\mathcal{P}}_l^{(i)})   
  \ \geq \ \tr(\widetilde{\mathcal{P}}_l^{(i)}(\rho^{\otimes l})) - \|\widetilde{\mathcal{M}}_l^{(j)}\circ \widetilde{\mathcal{P}}_l^{(i)}  
   \otimes \id_{\hr_E^{\otimes l}}(\psi_l) - \phi_l \otimes \psi_l' \|_1. \label{comp_opt_fid_td_1}
\end{align}
For the second term in (\ref{comp_opt_fid_td_1}) it holds by zero adding, triangle inequality and monotonicity 
of the trace norm under action of partial traces
\begin{align}
  \|\widetilde{\mathcal{M}}_l^{(j)}\circ \widetilde{\mathcal{P}}_l^{(i)}  
   \otimes \id_{\hr_E^{\otimes l}}(\psi_l) - \phi_l \otimes \psi_l' \|_1  
  \ \leq \ \|\widetilde{\mathcal{M}}_l^{(j)} \otimes \id_{\hr_{E}^{\otimes l}}(\psi_l), \phi_l \otimes (\psi')^{\otimes l} \|_1 
    + \|\widetilde{\mathcal{P}}_l^{(i)}(\rho^{\otimes l})- \rho^{\otimes l} \|_1. \label{comp_opt_fid_td_2}
 \end{align}
 We further yield the bound 
 \begin{align}
  &\|\widetilde{\mathcal{M}}_l^{(j)} \otimes \id_{\hr_{E}^{\otimes l}}(\psi_l) - \phi_l \otimes \psi'_l \|_1 \\
  & \leq 2 \left(1 - F(\widetilde{\mathcal{M}}_l^{(j)} \otimes \id_{\hr_{E}^{\otimes l}}(\psi_l),
   \phi_l \otimes \psi_l')\right)^{\frac{1}{2}}  \\
  & \leq 2 \cdot 2^{-l\frac{\tilde{c}_i}{2}} \label{comp_opt_fid_td_3}
 \end{align}
 by (\ref{fidelity_norm_2}) together with (\ref{comm_sub_comp_2}), and 
 \begin{align}
  \|\widetilde{\mathcal{P}}_l^{(i)}(\rho^{\otimes l})- \rho^{\otimes l} \|_1 
   \ \leq \ 2 \sqrt{1 - \tr(\widetilde{\mathcal{P}}_l^{(i)}(\rho^{\otimes l}))}  
   \ \leq \ 2 \cdot 2^{-l\frac{c_3}{2}}, \label{comp_opt_fid_td_35}
 \end{align}
 where the first inequality is by Lemma \ref{tender_operator}, and the second inequality is valid due to the 
 bound in (\ref{entropy_estimation_4}) along with the fact, that (because $p_{1,l},....,p_{N,l}$ is a 
 resolution of the identity into pairwise orthogonal projections)
 \begin{align}
  1 - \tr(\widetilde{P}^{(i)}(\rho^{\otimes l})) 
  \ = \ \tr\left(\left(\id_{\hr_{AB}^{\otimes l}}-\widetilde{\mathcal{P}}_l^{(i)}\right)(\rho^{\otimes l})\right)   
  \ = \ \sum_{j \in [N]\setminus n(i)} \tr(\mathcal{P}_l^{(j)}(\rho^{\otimes l}))
 \end{align}
 holds.  We define the constant $c_4$ by $c_4 := \min\{\tilde{c}_1,...,\tilde{c}_N,c_3\}.$
 Combining (\ref{comp_opt_fid_td_1}) with (\ref{comp_opt_fid_td_2})-(\ref{comp_opt_fid_td_35}) leads us to 
 the estimate
 \begin{align}
  F_m(\rho^{\otimes l}\widetilde{\mathcal{M}}_l^{(j)}\circ \widetilde{\mathcal{P}}_l^{(i)})  
   &\geq \tr(\widetilde{\mathcal{P}}_l^{(i)}(\rho^{\otimes l}))  - 4 \cdot 2^{-l \frac{c_4}{2}}  \\
   &\geq 1 - 5 \cdot 2^{-l\frac{c_4}{2}} \label{comp_opt_fid_td_4}
 \end{align}
 for each $j \in n(i)$, where the last of the above inequalities, again is by the bound in 
 (\ref{entropy_estimation_4}). By inserting the bounds given in (\ref{comp_opt_fid_td_0}) and 
 (\ref{comp_opt_fid_td_4}) into (\ref{comp_opt_fid_bound_2}), we
 yield
 \begin{align}
   F_m(\rho^{\otimes l},\mathcal{M}_l)  
  & \geq |n(i)| (1 - 5 \cdot 2^{-l\frac{c_4}{2}}) - (|n(i)|-1)  \\
  & \geq 1 - 5|n(i)|\cdot 2^{-l\frac{c_4}{2}}  \\
  & \geq 1 - 15 \cdot 2^{-l\frac{c_4}{2}} \label{comp_opt_fid_td_5}.
  \end{align}
  If we now choose $\eta$ small enough and assume $l_0$ large enough, to suffice
  \begin{align}
   3 \eta + \frac{\log N}{l_0} \leq \delta,  
  \end{align}
  (\ref{comm_sub_comp_3}), (\ref{comp_opt_15}), and (\ref{comp_opt_fid_td_5}) show, that $\mathcal{M}_l$
  has the desired properties for each $l > l_0$. \newline
  The assertion can be proven for the remaining case $\tilde{s} \geq 0$ by
  considering a compound set $\{\rho \otimes \phi_0 : \rho \in \mathcal{X}\}$ with a maximally
  entangled state $\phi_0$ having Schmidt rank large enough to ensure
  $ 
   \sup_{\rho \in \mathcal{X}} S(A|B,\rho \otimes \phi_0) < 0 
  $ 
  and repeat the argument given above for the first case (note, that $I(A;E,\rho \otimes \phi_0) = 
  I(A;E,\rho)$ holds for each state $\rho \in \mathcal{X}$).
 \end{proof}
 \begin{corollary}
  Asymptotically faithful $A \rightarrow B$-one-way quantum state merging of a compound source 
  $\mathcal{X}$ is possible with (quantum) merging cost
  \begin{align}
   C_{m,\rightarrow}(\mathcal{X}) = \sup_{\rho \in \mathcal{X}} S(A|B,\rho)
  \end{align}
  and classical cost 
  \begin{align}
   R_c(\mathcal{X}) = \sup_{\rho \in \mathcal{X}} I(A;E,\rho)
  \end{align}
  (again with the quantum mutual information evaluated on the $AE$ marginal system of a purification 
  $\psi \in \st(\hr_{ABE})$ of $\rho$ for each $\rho \in \mathcal{X}$). \newline
  Especially, the above lines show, that the merging cost as well as the classical $A \rightarrow B$
  communication cost exhibit regular behavior: If two nonempty sets $\mathcal{X}, \mathcal{X}'$ 
  are near in the Hausdorff distance (see Section \ref{sect:notation} for a definition), 
  the costs will be nearly equal as well.
 \end{corollary}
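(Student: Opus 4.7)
The plan is to assemble the corollary from Proposition~\ref{comp_merging_complete} together with the converse statements already cited from Ref.~\citenum{bjelakovic13}, and then establish the Hausdorff regularity separately by a Fannes-type argument.

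First I would address the achievability. Proposition~\ref{comp_merging_complete} produces, for each $\delta > 0$ and each sufficiently large blocklength $l$, an $(l,k_l,D_l)$ one-way merging $\mathcal{M}_l$ with merging fidelity at least $1 - 2^{-l c_2}$ uniformly over $\rho \in \mathcal{X}$, and with $\frac{1}{l}\log k_l \leq \sup_{\rho \in \mathcal{X}} S(A|B,\rho) + \delta$ and $\frac{1}{l}\log D_l \leq \sup_{\rho \in \mathcal{X}} I(A;E,\rho) + \delta$. A standard diagonal argument (pick $\delta_n \downarrow 0$, then $l_n > l_0(\delta_n)$ growing fast enough, and concatenate the resulting protocols) produces a single sequence $\{\mathcal{M}_l\}_{l\in\nn}$ whose $\limsup$ rates are bounded by $\sup_{\rho} S(A|B,\rho)$ and $\sup_{\rho} I(A;E,\rho)$ respectively, and whose fidelity tends to one uniformly over $\mathcal{X}$, as required by Definition~\ref{comp_merging_achievable_rate}.

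For the converse, I would invoke the two statements already recorded in the excerpt. The lower bound $C_{m,\rightarrow}(\mathcal{X}) \geq \sup_{\rho \in \mathcal{X}} S(A|B,\rho)$ is one half of Theorem~\ref{before_merging_cost}. The companion bound $R_c \geq \sup_{\rho \in \mathcal{X}} I(A;E,\rho)$ on the classical cost is the converse from Section~V of Ref.~\citenum{bjelakovic13} quoted in the Remark preceding this proof; as emphasised there, it is valid irrespective of the entanglement rate chosen, so in particular it applies to the protocols above which already saturate the entanglement bound. Matching these with the achievability of the previous paragraph pins down both costs exactly.

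Finally, for regularity under the Hausdorff distance, I would observe that both functionals $\rho \mapsto S(A|B,\rho)$ and $\rho \mapsto I(A;E,\rho) = S(\rho_A) + S(A|B,\rho)$ are continuous on $\st(\hr_{AB})$ with a Fannes-type modulus $\overline{\nu}$ depending only on the fixed dimensions $\dim\hr_A,\dim\hr_B$ (exactly as used in the proof of Proposition~\ref{pre_comp_merg} for the estimate on $\frac{1}{l}\log D_l$). Given nonempty $\mathcal{X},\mathcal{X}' \subset \st(\hr_{AB})$ with $d_H(\mathcal{X},\mathcal{X}') \leq \tau$, for every $\rho \in \mathcal{X}$ one can pick $\rho' \in \mathcal{X}'$ with $\|\rho-\rho'\|_1$ arbitrarily close to $\tau$, so $|S(A|B,\rho)-S(A|B,\rho')| \leq \overline{\nu}(\tau)$ and likewise for $I(A;E,\cdot)$; taking suprema and symmetrising in the definition of $d_H$ yields $|C_{m,\rightarrow}(\mathcal{X})-C_{m,\rightarrow}(\mathcal{X}')| \leq \overline{\nu}(\tau)$ and $|R_c(\mathcal{X})-R_c(\mathcal{X}')| \leq 2\overline{\nu}(\tau)$, which goes to zero as $\tau \to 0$. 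The only mild subtlety is that the suprema need not be attained, but since $\overline{\nu}$ is independent of the states themselves an $\varepsilon$-argument handles this routinely; there is no substantive obstacle, the corollary being essentially a repackaging of Proposition~\ref{comp_merging_complete} together with the previously established converses.
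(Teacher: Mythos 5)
Your proposal is correct and matches the paper's (implicit) route: the corollary is stated without a separate proof precisely because it is the immediate assembly of Proposition~\ref{comp_merging_complete} (achievability of both rates), Theorem~\ref{before_merging_cost} and the quoted classical-cost converse from Ref.~\citenum{bjelakovic13} (optimality), and Fannes-type continuity of $S(A|B,\cdot)$ and $S(\cdot_A)$ for the Hausdorff regularity. Your handling of the non-attained suprema and the $\delta\downarrow 0$ step is the standard bookkeeping the paper leaves to the reader; nothing is missing.
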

 \end{section}
\begin{section}{Entanglement Distillation for Arbitrarily Varying Quantum Sources} \label{av_dist}
 In this section, we prove a regularized formula for the one-way entanglement distillation capacity where 
 the source is an AVQS generated by a set $\mathcal{X} \subset \st(\hr_{A} \otimes \hr_B)$. \newline 
 We first prove the achievability part in case that $\mathcal{X}$ is finite, where we derive suitable 
 one-way entanglement distillation protocols for the AVQS $\mathcal{X}$ from entanglement distillation 
 protocols which are universal for the compound source $\conv(\mathcal{X})$ with fidelity approaching one
 exponentially fast. In a second step, we generalize this result allowing $\mathcal{X}$ to be any (not 
 necessarily finite or countable) set on $\hr_A \otimes \hr_B$. To this end, we approximate $\mathcal{X}$ 
 by a polytope (which is known to be the convex hull of a finite set of states), where we utilize methods 
 we borrow from Ref. \citenum{ahlswede13}. First we state some facts concerning the continuity of the 
 one-way entanglement distillation capacity functions. 
  \begin{subsection}{Continuity of Entanglement Distillation Capacities}
  Continuity was shown for the capacity functions appearing in coding theorems of several quantum channel 
  coding scenarios\cite{leung09}, here we state and prove uniform continuity for the entanglement distillation 
  capacity functions.
  \begin{lemma}\label{coherent_cont_set}
   Let $\mathcal{Y},\mathcal{Y}' \subset \st(\hr_X \otimes \hr_Y)$ be two nonempty sets of bipartite states
   with Hausdorff distance $0 \leq d_H(\mathcal{Y},\mathcal{Y}') < \epsilon \leq \frac{1}{2}$. It holds 
   for each $k \in \nn$ and c.p.t.p map $\mathcal{N}$ with domain $\mathcal{L}(\hr_{XY}^{\otimes k})$ 
   \begin{align}
    \left|\inf_{\tau \in \mathcal{Y}}I_c(X\rangle Y, \mathcal{N}(\tau^{\otimes k})) 
     -\inf_{\sigma \in \mathcal{Y}'}I_c(X\rangle Y, \mathcal{N}(\sigma^{\otimes k}))\right| 
     \leq k \nu(\epsilon), \label{coherent_cont_set_1}
   \end{align}
   where the function $\nu$ is defined by $\nu(x):= 4 x \log \dim \hr_X + 2 h(x)$ for $x \in (0 \frac{1}{2})$ 
   and $h$ being the binary entropy $h(x) := -x\log x - (1-x) \log(1-x)$.
   \end{lemma}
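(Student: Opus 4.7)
The plan is to localise the continuity price to a single copy via a telescoping in the $k$-fold tensor product and then to sum the per-copy penalties. The per-copy penalty should come from a Fannes-type continuity estimate for the coherent information $I_c(X\rangle Y,\cdot)=-S(X|Y,\cdot)$ in the form of Alicki--Fannes--Winter applied to the conditional entropy, so that only the $X$-system dimension enters and the pair $(4\log\dim\hr_X,2h)$ appears as the constants of $\nu$.

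First I would exploit the Hausdorff hypothesis to choose, for an arbitrary $\tau\in\mathcal{Y}$, some $\sigma\in\mathcal{Y}'$ with $\|\tau-\sigma\|_1<\epsilon$. Then I would interpolate between $\tau^{\otimes k}$ and $\sigma^{\otimes k}$ by setting
\begin{align*}
\xi_l := \sigma^{\otimes l}\otimes\tau^{\otimes(k-l)}\qquad (l=0,1,\dots,k),
\end{align*}
so that $\xi_0=\tau^{\otimes k}$, $\xi_k=\sigma^{\otimes k}$, and neighbouring states satisfy $\|\xi_l-\xi_{l+1}\|_1=\|\tau-\sigma\|_1<\epsilon$. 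Monotonicity of the trace norm under c.p.t.p.\ maps preserves this single-copy distance after $\mathcal{N}$, i.e.\ $\|\mathcal{N}(\xi_l)-\mathcal{N}(\xi_{l+1})\|_1<\epsilon$ for every $l$.

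Next I would apply the Fannes-type continuity bound to each interpolation step: since each pair $(\mathcal{N}(\xi_l),\mathcal{N}(\xi_{l+1}))$ lies at trace distance at most $\epsilon\leq 1/2$,
\begin{align*}
\bigl|I_c(X\rangle Y,\mathcal{N}(\xi_l))-I_c(X\rangle Y,\mathcal{N}(\xi_{l+1}))\bigr|\leq 4\epsilon\log\dim\hr_X+2h(\epsilon)=\nu(\epsilon).
\end{align*}
Summing over $l=0,\dots,k-1$ via the triangle inequality gives the desired pointwise bound
\begin{align*}
\bigl|I_c(X\rangle Y,\mathcal{N}(\tau^{\otimes k}))-I_c(X\rangle Y,\mathcal{N}(\sigma^{\otimes k}))\bigr|\leq k\nu(\epsilon).
\end{align*}

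To pass to the infima, I would read the last display as $\inf_{\sigma'\in\mathcal{Y}'}I_c(X\rangle Y,\mathcal{N}((\sigma')^{\otimes k}))\leq I_c(X\rangle Y,\mathcal{N}(\tau^{\otimes k}))+k\nu(\epsilon)$ for every $\tau\in\mathcal{Y}$, and then take the infimum over $\tau$; this yields one direction of (\ref{coherent_cont_set_1}). The reverse inequality follows by interchanging the roles of $\mathcal{Y}$ and $\mathcal{Y}'$, which is legitimate because the Hausdorff distance is symmetric. The main obstacle is selecting the correct conditional-entropy form of the continuity bound so that the per-step penalty is controlled by $\log\dim\hr_X$ alone; with the AFW-style inequality this is free, and the telescoping in the input automatically supplies the claimed linear-in-$k$ dependence. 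Using instead a naive Fannes bound applied directly to $\tau^{\otimes k}$ and $\sigma^{\otimes k}$ would produce the wrong scaling, involving either $\log\dim\hr_{XY}$ or a spurious factor arising from $k\log\dim\hr_{XY}$.
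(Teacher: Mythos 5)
Your proposal is correct and follows essentially the same route as the paper: interpolating between $\tau^{\otimes k}$ and $\sigma^{\otimes k}$ via the hybrid states $\sigma^{\otimes l}\otimes\tau^{\otimes(k-l)}$, using monotonicity of the trace norm under $\mathcal{N}$ and the Alicki--Fannes continuity bound for the conditional entropy at each step, then telescoping. Your explicit treatment of the passage to infima is slightly more careful than the paper, which simply asserts that the general case follows from the definition of the Hausdorff distance.
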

  \begin{proof}
  We show this assertion with sets containing only one state defined $\mathcal{Y} := \{\tau\}, \mathcal{Y}' := \{\sigma\}$.
  The general assertion in (\ref{coherent_cont_set_1}) follows directly by definition of the Hausdorff distance. 
  The argument parallels the one given in Ref. \citenum{leung09}, Theorem 6 for continuity of the the entropy 
  exchange for channels. 
  Introduce a state $\gamma_{k,n} := \tau^{\otimes n} \otimes \sigma^{\otimes(k-n)}$ for each 
  $0 \leq n \leq k$. By assumption, it holds
  \begin{align}
  \|\gamma_{k,n-1} - \gamma_{k,n}\|_1 \leq \epsilon  \label{einzelabstand}
  \end{align}
  for each $0 < n \leq k$, which implies, via the Alicki-Fannes inequality \cite{alicki03} for the 
  conditional von Neumann entropy
  \begin{align}
   \left|I_c(X\rangle Y, \mathcal{N}(\gamma_{k,n-1})) - 
   I_c(X\rangle Y, \mathcal{N}(\gamma_{k,n}))\right| \leq \nu(\epsilon) \label{alickiabstand}
  \end{align}
  for each $0 < n \leq k$ by (\ref{einzelabstand}) and monotonicity of the trace 
  distance under action of $\mathcal{N}$. 
  Further, it holds
  \begin{align}
  &\left|I_c(X \rangle Y, \mathcal{N}(\tau^{\otimes k})) - 
  I_c(X \rangle Y, \mathcal{N}(\sigma^{\otimes k}))\right| \\
  &= \left|I_c(X\rangle Y, \mathcal{N}(\gamma_{k,k})) - 
   I_c(X\rangle Y, \mathcal{N}(\gamma_{k,0}))\right|  \\
  & = \left|\sum_{n=1}^k \left(I_c(X\rangle Y, \mathcal{N}(\gamma_{k,n-1})) - 
   I_c(X\rangle Y, \mathcal{N}(\gamma_{k,n}))\right)\right|  \\
  & \leq \sum_{n=1}^k\left|I_c(X\rangle Y, \mathcal{N}(\gamma_{k,n-1})) - 
   I_c(X\rangle Y, \mathcal{N}(\gamma_{k,n}))\right| , \label{alicki_ende}
  \end{align}
  where the first equality above is by definition, and the second by adding some zeros. Estimating each 
  summand in (\ref{alicki_ende}) by (\ref{alickiabstand}) concludes the proof.
  \end{proof}
  \begin{corollary} \label{cont_coroll}
   The one-way entanglement distillation capacity $D_\rightarrow$ for memoryless sources with 
   perfectly known source state in (\ref{single_state_ent_dist}) is a uniformly continuous function 
   (considering the trace distance). Explicitly, it holds for $\rho, \sigma \in \st(\hr_A \otimes \hr_B)$ with
   $\|\rho - \sigma\|_1 < \epsilon \leq \frac{1}{2}$, it holds
   \begin{align}
    |D_{\rightarrow}(\rho) - D_{\rightarrow}(\sigma)| \leq \nu(\epsilon).
   \end{align}
   \end{corollary}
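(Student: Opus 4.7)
The plan is to deduce uniform continuity of $D_\rightarrow$ directly from its regularized expression \eqref{single_state_ent_dist} by invoking Lemma \ref{coherent_cont_set} uniformly over the instrument set $\Theta_k$. Since the Hausdorff distance between the singletons $\{\rho\}$ and $\{\sigma\}$ equals $\|\rho-\sigma\|_1 < \epsilon$, the hypothesis of Lemma \ref{coherent_cont_set} is satisfied at every blocklength, and no further preparation of $\rho,\sigma$ is needed.

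First I would fix $k\in\nn$ and an arbitrary instrument $\mathcal{T}\in\Theta_k$. Using the one-way LOCC reformulation \eqref{ent_dist_cap_func_oneshot_2}, the quantity $D^{(1)}(\rho^{\otimes k},\mathcal{T})$ coincides with $I_c(A\rangle BB',\hat{\mathcal{T}}(\rho^{\otimes k}))$ for the associated c.p.t.p.\ map $\hat{\mathcal{T}}$, whose domain is $\mathcal{L}(\hr_{AB}^{\otimes k})$. Applying Lemma \ref{coherent_cont_set} with $\mathcal{Y}=\{\rho\}$, $\mathcal{Y}'=\{\sigma\}$ and $\mathcal{N}=\hat{\mathcal{T}}$ then yields
\begin{align}
  \bigl|D^{(1)}(\rho^{\otimes k},\mathcal{T}) - D^{(1)}(\sigma^{\otimes k},\mathcal{T})\bigr| \leq k\,\nu(\epsilon),
\end{align}
and, crucially, this bound is uniform in $\mathcal{T}$.

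Next I would take the supremum over $\mathcal{T}\in\Theta_k$ on both sides using the elementary estimate $|\sup_\mathcal{T} f(\mathcal{T}) - \sup_\mathcal{T} g(\mathcal{T})| \leq \sup_\mathcal{T}|f(\mathcal{T})-g(\mathcal{T})|$, then divide by $k$. The factor $k$ on the right-hand side is absorbed exactly by the $1/k$ normalization in \eqref{single_state_ent_dist}, leaving the constant $\nu(\epsilon)$ on the right-hand side for every $k$. The limits $k\to\infty$ on both sides exist by the remark following \eqref{instrument_set_defined}, so passing to the limit preserves the inequality and delivers $|D_\rightarrow(\rho)-D_\rightarrow(\sigma)|\leq \nu(\epsilon)$.

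I do not anticipate a genuine obstacle: Lemma \ref{coherent_cont_set} performs the substantive work, and the remaining steps are the swap-of-suprema estimate and the termwise limit, both routine. The only fine point to keep in mind is that the coherent information in \eqref{ent_dist_cap_func_oneshot_2} is taken with respect to the extended output bipartition $A\,|\,BB'$ of $\hat{\mathcal{T}}$, so Lemma \ref{coherent_cont_set} must be read as permitting arbitrary output bipartitions of $\mathcal{N}$; this is transparent from its Alicki--Fannes-based proof, which is insensitive to the output structure beyond the monotonicity of the trace distance used there.
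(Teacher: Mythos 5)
Your proposal is correct and follows essentially the same route as the paper: the paper states Corollary \ref{cont_coroll} as an immediate consequence of Lemma \ref{coherent_cont_set} applied to singletons, and the explicit steps you supply (uniform bound over $\mathcal{T}\in\Theta_k$ via the sup-swap estimate, division by $k$, passage to the limit) are exactly those the paper carries out for the general compound-set version in the proof of Corollary \ref{comp_ent_dist_gen}.2. Your remark that the lemma must be read as tolerating the $A\,|\,BB'$ output bipartition of $\hat{\mathcal{T}}$ is a fine point the paper also uses implicitly, and your justification of it is sound.
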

 \end{subsection}
\begin{subsection}{AVQS Generated by Finite Sets}
 In this section, we assume $\mathcal{X}$ to be a finite set of bipartite states. We show, that 
 sequences of one-way entanglement distillation protocols for
 the compound source $\conv(\mathcal{X})$ with fidelity going to one exponentially fast
 can be modified to faithful entanglement distillation schemes for the AVQS $\mathcal{X}$.
 We apply Ahlswede's robustification \cite{ahlswede86} and elimination \cite{ahlswede78} techniques. 
 This method of proof is well-known in classical information theory, and found application also in the 
 quantum setting where it was shown to be a useful approach to determine the entanglement transmission 
 capacity of arbitrarily varying quantum channels (AVQC)\cite{ahlswede13}.
 Proposition \ref{comp_ent_dist_ach_thm} below is a generalization and sharpening of Lemma 12 in Ref. 
 \citenum{bjelakovic13} required for our considerations. It asserts achievability of each rate below
 the one-way entanglement capacity for a compound source generated by a set $\mathcal{Y}$, where we 
 drop the condition of finiteness imposed on $\mathcal{Y}$ in Ref. \citenum{bjelakovic13}, Lemma 12. Moreover,
 we show, that each of these rates is achievable by protocols with fidelity approaching one exponentially 
 fast. 
  \begin{proposition}\label{comp_ent_dist_ach_thm}
  Let $\mathcal{Y} \subset \st(\hr_A \otimes \hr_B)$ be a set of bipartite states. For each $k \in \nn, 
  \delta>0$, there exists a number $l_0 = 
  l_0(k,\delta)$ and a constant $c_5 = c_5(k,\delta, \mathcal{X}) > 0$, such for each $l > l_0$, there exists  
  an $A \rightarrow B$ LOCC $\mathcal{D}_l$ fulfilling 
  \begin{align}
   \inf_{\rho \in \mathcal{X}} \ F(\mathcal{D}_l(\rho^{\otimes l}), \phi_l) \geq 1 - 2^{-lc_5},
  \end{align}
  where $\phi_l$ is a maximally entangled state shared by $A$ and $B$ with
  \begin{align}
   \frac{1}{l} \log\sr(\phi_l) \geq \lim_{k \rightarrow \infty} \frac{1}{k} 
   \sup_{\mathcal{T} \in \Theta_k} \inf_{\rho \in \mathcal{X}} D_{\rightarrow}^{(1)}(\rho^{\otimes k}, \mathcal{T}) - \delta.
  \end{align}
  The function $D_{\rightarrow}^{(1)}$ is defined in (\ref{ent_dist_cap_func_oneshot}), and $\Theta_k$ is 
  defined as in (\ref{instrument_set_defined}) for each $k \in \nn$.
 \end{proposition}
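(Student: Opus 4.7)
The plan is to lift a one-way entanglement distillation scheme from the class of states produced by a sender-side instrument to the original compound source, and to derive the regularized capacity formula by optimizing over such instruments. First I would fix $k\in\nn$ and, given $\delta>0$, choose an instrument $\mathcal{T}\in\Theta_k$ on $\hr_A^{\otimes k}$ whose induced one-way LOCC $\hat{\mathcal{T}}$ (introduced in equation (\ref{ent_dist_cap_func_oneshot_2})) satisfies
\begin{align}
\inf_{\rho\in\mathcal{Y}}\ \frac{1}{k}I_c\!\left(A\rangle BB',\hat{\mathcal{T}}(\rho^{\otimes k})\right) \ \geq\ \lim_{k\to\infty}\frac{1}{k}\sup_{\mathcal{T}\in\Theta_k}\inf_{\rho\in\mathcal{Y}} D_{\rightarrow}^{(1)}(\rho^{\otimes k},\mathcal{T})-\frac{\delta}{2}.
\end{align}
After this step the problem reduces to distilling entanglement from the blocked compound source $\widehat{\mathcal{Y}}:=\{\hat{\mathcal{T}}(\rho^{\otimes k})\}_{\rho\in\mathcal{Y}}$ at a rate (per block) at least the infimum of the coherent informations, using only LOCC (in fact only local operations, since $\hat{\mathcal{T}}$ has already absorbed the one-way message into a classical register held by $B$).

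Second, I would invoke the universal one-way hashing distillation scheme for compound sources established in Ref.\ \citenum{bjelakovic13} (Lemma 12) applied to $\widehat{\mathcal{Y}}$. For a finite $\mathcal{Y}$, or equivalently a finite $\widehat{\mathcal{Y}}$, the random-coding argument used there actually produces fidelity of the form $1-\mathrm{poly}(n)\cdot 2^{-n\beta}$ for some $\beta=\beta(\delta,k,\mathcal{Y})>0$; this is visible from the large-deviation bounds on the typical projectors and random unitary averaging in the hashing bound, and gives the exponential decay claimed in the statement once $n=\lfloor l/k\rfloor$. Composing these two layers yields a one-way LOCC $\mathcal{D}_l$ on $l\approx nk$ copies with $\frac{1}{l}\log \sr(\phi_l)$ within $\delta$ of the regularized rate and with merging fidelity $1-2^{-lc_5}$ uniformly over $\rho\in\mathcal{Y}$.

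Third, to drop the finiteness hypothesis on $\mathcal{Y}$ I would approximate $\widehat{\mathcal{Y}}$ (equivalently $\mathcal{Y}$) by a $\tau$-net $\mathcal{Y}_\tau=\{\rho_1,\dots,\rho_{N_\tau}\}$ in trace distance, with $N_\tau\leq(3/\tau)^{2(\dim\hr_{AB})^2}$ as in Ref.\ \citenum{bjelakovic13}, and design the protocol on the finite net. Transferring fidelity bounds from the net to $\mathcal{Y}$ uses monotonicity of trace distance under $\mathcal{D}_l$ plus inequality (\ref{fidelity_norm_2}), producing a loss of order $\sqrt{l\tau}$ in fidelity; Corollary \ref{cont_coroll} guarantees that the per-letter rate drops by at most $\nu(\tau)$. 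Choosing $\tau=2^{-l\alpha}$ for a small enough $\alpha>0$ depending on $\beta$ and $\dim\hr_{AB}$ preserves exponential convergence while forcing the rate loss to be $o(1)$, yielding the desired $(l_0,c_5)$.

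The main obstacle I expect is keeping the fidelity decay genuinely exponential when $\mathcal{Y}$ is infinite: one must simultaneously control the polynomial net cardinality $N_\tau$, the $\sqrt{l\tau}$ penalty from the Hausdorff approximation, and the random-coding exponent for each element of the net, and balance $\tau=\tau(l)$ so that none of these terms dominates. A secondary subtlety is that the instrument-optimal $\mathcal{T}$ depends on $\mathcal{Y}$, so one must verify that a single $\mathcal{T}$ works simultaneously well enough for every member of the approximating net; uniform continuity of $\rho\mapsto D_\rightarrow^{(1)}(\rho^{\otimes k},\mathcal{T})$ (as in Lemma \ref{coherent_cont_set}) ensures this with a negligible rate loss, which completes the argument.
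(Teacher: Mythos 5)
Your proposal is correct and follows essentially the same route as the paper: fix a near-optimal instrument $\mathcal{T}\in\Theta_k$, apply a universal compound-source distillation protocol to the post-instrument states $\{\hat{\mathcal{T}}(\rho^{\otimes k})\}_{\rho\in\mathcal{Y}}$, and then optimize over instruments. The only organizational difference is that the paper invokes Proposition \ref{pre_comp_merg} --- the merging result, which already packages the $\tau$-net reduction to finite sets and the exponential fidelity bound for arbitrary $\mathcal{Y}$ --- so it never needs to redo the net approximation and rate-continuity bookkeeping at the distillation level as you do in your third step; both routes bottom out in the same random-coding construction from Ref.\ \citenum{bjelakovic13}.
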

 \begin{proof}
  The line of proof is similar to that of the proofs given for Lemma 12 and Theorem 8 in Ref. \citenum{bjelakovic13} where we replace
  usage of Theorem 4 therein by the sharper and more general result Prop. \ref{pre_comp_merg} proven in Sect. \ref{comp_merging} 
  above. We only briefly indicate the line of proof and restrict ourselves to the case $k = 1$. For other 
  $k$, the argument is nearly the same. For given instrument $\mathcal{T}$ on $A's$ systems, and $\delta > 0$, 
  we apply Proposition \ref{pre_comp_merg} to the set $\{\hat{\mathcal{T}}(\rho)\}_{\rho \in \mathcal{X}}$ 
  (remember the notation introduced in (\ref{ent_dist_cap_func_oneshot_2})). In this way, we find for
  each large enough $l$ an $A \rightarrow B$ LOCC $\mathcal{D}_l$ (incorporating $\hat{\mathcal{T}}$), 
  such that
  \begin{align}
   \inf_{\rho \in \mathcal{X}} F(\mathcal{D}_l(\rho^{\otimes l}), \phi_l) \geq 1 - 2^{-lc_5}
  \end{align}
   holds with a maximally entangled state $\phi_l$ with
   \begin{align}
    \frac{1}{l}\log \sr(\phi_l) 
    &\geq - \sup_{\rho \in \mathcal{X}} S(A|BB',\hat{\mathcal{T}}(\rho)) - \frac{\delta}{2}\\
    &= \inf_{\rho \in \mathcal{X}} I_c(A \rangle BB', \hat{\mathcal{T}}(\rho)) - \frac{\delta}{2}.
   \end{align}
   Since this can be done for each instrument $\mathcal{T}$ on $A$'s site. Maximization over instruments 
   on $A$'s site shows the assertion.
 \end{proof}
 Proposition \ref{comp_ent_dist_ach_thm} allows to drop the finiteness condition on the compound generating
 set in Ref. \citenum{bjelakovic13}, Theorem 8. We obtain the following corollary.
 \begin{corollary}[cf. \citenum{bjelakovic13}, Ref. 8] \label{comp_ent_dist_gen}
  Let $\mathcal{Y} \subset \st(\hr_A \otimes \hr_B)$. 
  \begin{enumerate}
  \item It holds
  \begin{align}
   D_{\rightarrow}(\mathcal{Y}) = \lim_{k \rightarrow \infty} \frac{1}{k} 
   \sup_{\mathcal{T} \in \Theta_k} \inf_{\rho \in \mathcal{Y}} D_{\rightarrow}^{(1)}(\rho^{\otimes k}, 
   \mathcal{T}),
  \label{comp_distill_1}
  \end{align}
  where the set $\Theta_k$ is defined as in (\ref{instrument_set_defined}) for each $k \in \nn$. 
  \item
  The function in (\ref{comp_distill_1}) behaves regular for compound sources 
  in the following sense. If $\mathcal{Y},
  \mathcal{Y}' \subset \st(\hr_{AB})$ are two nonempty sets of bipartite states with 
  $d_H(\mathcal{Y},\mathcal{Y}') < \delta \leq \frac{1}{2}$, it holds
  \begin{align}
   |D_{\rightarrow}(\mathcal{Y}) - D_{\rightarrow}(\mathcal{Y}')| \leq \nu(\delta) \label{cont_dist_coroll_1}
  \end{align}
 \end{enumerate}
 \end{corollary}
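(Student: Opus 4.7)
The strategy for Part 1 is to pair the achievability bound furnished by Proposition \ref{comp_ent_dist_ach_thm} with a converse obtained by approximating $\mathcal{Y}$ by a finite $\tau$-net $\mathcal{Y}_\tau$ and invoking the capacity theorem for finite compound sources stated in Ref. \citenum{bjelakovic13}, Theorem 8; the gap introduced by this discretisation is controlled by Lemma \ref{coherent_cont_set}. Once (\ref{comp_distill_1}) is in hand, Part 2 will follow essentially for free by applying the same continuity lemma to the capacity formula just established.

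For the direct inequality in (\ref{comp_distill_1}), Proposition \ref{comp_ent_dist_ach_thm} applied to $\mathcal{Y}$ yields, for every $\delta > 0$, a sequence of one-way LOCC distillation channels whose minimum fidelity over $\mathcal{Y}$ tends to one exponentially fast and whose rate is at least $\lim_k \frac{1}{k} \sup_{\mathcal{T} \in \Theta_k} \inf_{\rho \in \mathcal{Y}} D^{(1)}_\rightarrow(\rho^{\otimes k}, \mathcal{T}) - \delta$. For the converse, I would first pass to the closure $\overline{\mathcal{Y}}$ (a short continuity-of-fidelity argument shows that this leaves both sides of (\ref{comp_distill_1}) unchanged), then fix $\tau > 0$ and pick a finite $\tau$-net $\mathcal{Y}_\tau \subset \overline{\mathcal{Y}}$ with $d_H(\mathcal{Y}_\tau, \overline{\mathcal{Y}}) \leq \tau$. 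Any asymptotically faithful protocol for $\overline{\mathcal{Y}}$ is a fortiori faithful for $\mathcal{Y}_\tau$, hence $D_\rightarrow(\overline{\mathcal{Y}}) \leq D_\rightarrow(\mathcal{Y}_\tau)$, and the latter is evaluated by the finite-case capacity formula from Ref. \citenum{bjelakovic13}, Theorem 8. Using the identification $D^{(1)}_\rightarrow(\rho^{\otimes k}, \mathcal{T}) = I_c(A\rangle BB', \hat{\mathcal{T}}(\rho^{\otimes k}))$ from (\ref{ent_dist_cap_func_oneshot_2}) together with Lemma \ref{coherent_cont_set} applied to $\mathcal{N} = \hat{\mathcal{T}}$, the inner infima over $\mathcal{Y}_\tau$ and $\overline{\mathcal{Y}}$ differ by at most $k \nu(\tau)$, uniformly in $\mathcal{T}$. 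Because $|\sup_\mathcal{T} f - \sup_\mathcal{T} g| \leq \sup_\mathcal{T} |f - g|$, dividing by $k$ before sending first $k \to \infty$ and then $\tau \to 0$ absorbs the factor $k$ and drives the gap to zero, delivering the matching upper bound.

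Part 2 then comes almost for free: for every $k$ and $\mathcal{T} \in \Theta_k$, Lemma \ref{coherent_cont_set} applied to $\hat{\mathcal{T}}$ bounds the difference of the inner infima over $\mathcal{Y}$ and $\mathcal{Y}'$ by $k \nu(\delta)$; taking $\sup_\mathcal{T}$, normalising by $1/k$, and passing to the limit via (\ref{comp_distill_1}) produces (\ref{cont_dist_coroll_1}). The main technical point requiring care lies in the converse of Part 1, namely that the factor $k \nu(\tau)$ produced by Lemma \ref{coherent_cont_set} is genuinely absorbed by the $1/k$ normalisation; this works precisely because the net parameter $\tau$ may be chosen independently of the block length and because the outer limit in $k$ is taken before $\tau$ is sent to zero. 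Verifying that the closure step is inconsequential for both sides of (\ref{comp_distill_1}) is routine, relying once again on continuity of the fidelity and of the quantities appearing under the $\inf_\rho$.
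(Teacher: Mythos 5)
Your achievability step and your Part 2 argument coincide with the paper's: achievability is read off directly from Proposition \ref{comp_ent_dist_ach_thm}, and the continuity claim follows by applying Lemma \ref{coherent_cont_set} to the inner infima for each fixed $k$ and $\mathcal{T}$, then passing the resulting $k\nu(\delta)$ bound through $\sup_{\mathcal{T}}$ and the $1/k$ normalisation (the paper phrases this via a near-optimal instrument $\mathcal{Q}$ and a symmetrisation in $\mathcal{Y},\mathcal{Y}'$, which is the same computation as your $|\sup f - \sup g| \le \sup|f-g|$). Where you genuinely diverge is the converse of Part 1. The paper simply asserts that the converse argument of Theorem 8 in Ref. \citenum{bjelakovic13} carries over verbatim to arbitrary generating sets, with no discretisation at all. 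You instead treat the finite-set capacity theorem as a black box and recover the general converse by a $\tau$-net approximation of $\overline{\mathcal{Y}}$: monotonicity of the capacity under restriction to the subset $\mathcal{Y}_\tau$, the finite-case formula, and Lemma \ref{coherent_cont_set} to control the $k\nu(\tau)$ discrepancy, which survives the $1/k$ normalisation only as $\nu(\tau)$ and is then removed by sending $\tau\to 0$ after $k\to\infty$. This is a valid alternative; it costs you the (routine but necessary) checks that the compact set $\overline{\mathcal{Y}}$ admits a finite net and that passing to the closure changes neither side of (\ref{comp_distill_1}), and it buys you independence from the internal structure of the finite-case converse proof. The order of limits you flag ($\tau$ fixed independently of the blocklength, $k\to\infty$ taken first) is exactly the point that needs care, and you handle it correctly.
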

 \begin{proof}
    Achievability of the r.h.s. in (\ref{comp_distill_1}) directly follows from Proposition 
    \ref{comp_ent_dist_ach_thm}. For the converse statement, we refer to the proof of Theorem 8 in Ref. 
    \citenum{bjelakovic13} for finite sets of states. The argument given there directly carries over to the 
    general case.    
    It remains to show validity of the inequality in (\ref{cont_dist_coroll_1}). Assume $d_H(\mathcal{Y},\mathcal{Y}') 
    < \delta \leq \frac{1}{2}$. Let $\tau>0$ be an arbitrary but fixed 
    number, and $\mathcal{Q}$ be an instrument with domain $\mathcal{L}(\hr_A^{\otimes l})$, such that
    \begin{align}
    \inf_{\rho \in \mathcal{Y}} I_c(A\rangle BB',\hat{\mathcal{Q}}(\rho^{\otimes l})) 
     \geq \sup_{\mathcal{T}\in \Theta_k}\inf_{\rho \in \mathcal{Y}} I_c(A\rangle BB',\hat{\mathcal{T}}(\rho^{\otimes l}))
     - \tau \label{cont_coroll_2}
    \end{align}
    holds, where we used our notation from (\ref{ent_dist_cap_func_oneshot_2}). Lemma \ref{coherent_cont_set}
    implies 
    \begin{align}
     \inf_{\rho \in \mathcal{Y}} I_c(A\rangle BB',\hat{\mathcal{Q}}(\rho^{\otimes l}))
     \geq \inf_{\rho \in \mathcal{Y}'} I_c(A\rangle BB',\hat{\mathcal{Q}}(\rho^{\otimes l})) - k\nu(\delta),
    \end{align}
    which, together with (\ref{cont_coroll_2}) implies
    \begin{align}
     \sup_{\mathcal{T}\in \Theta_k}\inf_{\rho \in \mathcal{Y}'} I_c(A\rangle BB',\hat{\mathcal{T}}(\rho^{\otimes l}))
     \geq \sup_{\mathcal{T}}\inf_{\rho \in \mathcal{Y}} I_c(A\rangle BB',\hat{\mathcal{T}}(\rho^{\otimes l}))
      - \tau - k\nu(\delta).
     \end{align} 
     Since the above line of reasoning also holds with $\mathcal{Y},\mathcal{Y}'$ interchanged and $\tau$ 
     can be chosen arbitrarily small, we obtain
     \begin{align}
     \left|\sup_{\mathcal{T}\in \Theta_k} \inf_{\rho \in \mathcal{Y}}  D^{(1)}(\rho^{\otimes k}, \mathcal{T})
     - \sup_{\mathcal{T}\in \theta_k} \inf_{\rho \in \mathcal{Y}'}  D^{(1)}(\rho^{\otimes k}, \mathcal{T}) \right| 
     \leq k\nu(\delta). 
    \end{align}The above inequality together with the first assertion of the corollary proves the second one.
 \end{proof}
 The following theorem is the core of the robustification technique. It was first proven in 
 Ref. \citenum{ahlswede80}.
 The version below (with a better constant) is from Ref. \citenum{ahlswede86}. 
 \begin{theorem}[Robustification technique, cf. Theorem 6 in Ref. \citenum{ahlswede86}]\label{robustification-technique}\ \\
Let $\bS$ be a set with $|\bS|<\infty$ and $l\in\nn$. If a function $f:\bS^l\to [0,1]$ satisfies
\begin{equation}\label{eq:robustification-1}
 \sum_{s^l\in\bS^l}f(s^l)q(s_1)\cdot\ldots\cdot q(s_l)\geq 1-\gamma
\end{equation}
for each type $q$ of sequences in $\bS^l$ for some $\gamma\in [0,1]$, then
\begin{equation}\label{eq:robustification-2}
  \frac{1}{l!}\sum_{\sigma\in \fS_l}f(\sigma(s^l))\ge 1-(l+1)^{|\bS  |}\cdot \gamma\qquad \forall s^l\in \bS^l.
\end{equation}
\end{theorem}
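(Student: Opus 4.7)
The plan is to exploit the symmetry of the i.i.d.\ distribution $q^l$ on type classes and combine this with the standard type-counting lower bound. Fix an arbitrary $s^l \in \bS^l$, and let $q$ denote its empirical type (a type of sequences of length $l$, so the hypothesis of the theorem is applicable to this particular $q$). The first observation is that the permutation average $\frac{1}{l!}\sum_{\sigma\in\fS_l} f(\sigma(s^l))$ coincides with the uniform average of $f$ over the type class $T_q^l := \{t^l \in \bS^l: \text{type of } t^l = q\}$. Indeed, every $t^l \in T_q^l$ is attained by exactly $l!/|T_q^l|$ permutations of $s^l$ (the order of the stabilizer of $s^l$ in $\fS_l$ is the same for all elements of the orbit), so
\begin{equation*}
 \frac{1}{l!}\sum_{\sigma\in\fS_l} f(\sigma(s^l)) \ = \ \frac{1}{|T_q^l|}\sum_{t^l\in T_q^l} f(t^l).
\end{equation*}

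The second step is to observe that $q^l(t^l) := q(t_1)\cdots q(t_l)$ depends only on the type of $t^l$, so the restriction of $q^l$ to $T_q^l$ is uniform with total mass $q^l(T_q^l)$. Splitting the hypothesis \eqref{eq:robustification-1}, applied to this specific $q$, by isolating the contribution of $T_q^l$ and bounding $f \le 1$ off this type class by its total probability yields
\begin{equation*}
 1 - \gamma \ \le \ \bigl(1 - q^l(T_q^l)\bigr) + \frac{q^l(T_q^l)}{|T_q^l|}\sum_{t^l\in T_q^l} f(t^l),
\end{equation*}
which rearranges immediately to
\begin{equation*}
 \frac{1}{|T_q^l|}\sum_{t^l\in T_q^l} f(t^l) \ \ge \ 1 - \frac{\gamma}{q^l(T_q^l)}.
\end{equation*}

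The final ingredient is the classical type-class lower bound $q^l(T_q^l) \ge (l+1)^{-|\bS|}$, valid for any type $q$ of length-$l$ sequences, which asserts that i.i.d.\ sampling from $q$ produces a sequence of its own type with probability at least $(l+1)^{-|\bS|}$. Inserting this bound into the previous inequality and combining with the symmetry identity of the first step gives exactly \eqref{eq:robustification-2}. The main obstacle here is not really analytical but combinatorial: one must carefully justify the ``equal-weight'' identity for the orbit of $s^l$ under $\fS_l$ (so that the permutation average coincides with the uniform type-class average), and make sure that the hypothesis \eqref{eq:robustification-1} is legitimately applied to the empirical type of the arbitrary $s^l$ under consideration. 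Once these bookkeeping points are handled, the proof reduces to a short rearrangement together with the type-class probability lower bound.
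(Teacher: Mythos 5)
Your proof is correct and complete. Note that the paper itself does not prove this theorem --- it is imported verbatim from Ahlswede's work (Theorem 6 of the cited 1986 paper) --- and your argument is precisely the standard one from that source: the permutation average of $f$ over the orbit of $s^l$ equals the uniform average over the type class $T_q^l$ of its empirical type $q$, the hypothesis applied to that same $q$ isolates the contribution of $T_q^l$ (using $f\le 1$ off the type class), and the classical bound $q^l(T_q^l)\ge (l+1)^{-|\bS|}$ (maximality of the own type class among the at most $(l+1)^{|\bS|}$ type classes) yields the stated constant.
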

 The following theorem is the main result of this section.
\begin{theorem}\label{av_dist_ach_fin}
 Let $\mathcal{X}:= \{\rho_s\}_{s \in \bS} \subset \st(\hr_A \otimes \hr_B), |\bS|\leq \infty$. 
 For the AVQS generated by $\mathcal{X}$, it holds
 \begin{align}
  D_{\rightarrow}^{AV}(\mathcal{X}) 
  &\geq D_{\rightarrow}\left(\conv(\mathcal{X})\right) = \lim_{k \rightarrow \infty} 
  \sup_{\mathcal{T}\in \Theta_k} \inf_{p \in \pr(\mathbf{S})} D^{(1)}(\mathcal{T},\rho_p^{\otimes k}),
  \label{av_dist_ach_fin_1}
 \end{align}
 where we use the definition
 \begin{align}
  \rho_{p} := \sum_{s \in \mathbf{S}}\ p^l(s^l) \ \rho_{s^l} \label{av_dist_ach_fin_2}
 \end{align}
 for each $p \in \pr(\mathbf{S})$.
\end{theorem}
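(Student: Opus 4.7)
The plan is to reduce the AVQS distillation problem to the corresponding compound source problem for $\conv(\mathcal{X})$ by means of Ahlswede's robustification and elimination techniques. Since $\{\rho_p : p \in \pr(\bS)\}$ coincides with $\conv(\mathcal{X})$, Corollary \ref{comp_ent_dist_gen} identifies $D_{\rightarrow}(\conv(\mathcal{X}))$ with the multi-letter expression on the right-hand side of (\ref{av_dist_ach_fin_1}). It therefore suffices to show that every $R < D_{\rightarrow}(\conv(\mathcal{X}))$ is an achievable $A \rightarrow B$ entanglement distillation rate for the AVQS generated by $\mathcal{X}$.

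Fix such an $R$. Proposition \ref{comp_ent_dist_ach_thm} applied to $\mathcal{Y} = \conv(\mathcal{X})$ supplies, for all sufficiently large $l$, a one-way LOCC $\mathcal{D}_l$ producing a maximally entangled state $\phi_l$ with $\frac{1}{l} \log \sr(\phi_l) \geq R$ and $F(\mathcal{D}_l(\rho^{\otimes l}), \phi_l) \geq 1 - 2^{-lc}$ for all $\rho \in \conv(\mathcal{X})$, with some constant $c > 0$. Because $\mathcal{D}_l$ is linear and the fidelity is linear in its state argument when the second argument is pure, we obtain, for every type $q$ of sequences in $\bS^l$,
\[
\sum_{s^l \in \bS^l} q(s_1)\cdots q(s_l)\, f(s^l) \;=\; F(\mathcal{D}_l(\rho_q^{\otimes l}), \phi_l) \;\geq\; 1 - 2^{-lc},
\]
where $f(s^l) := F(\mathcal{D}_l(\rho_{s^l}), \phi_l) \in [0,1]$. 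This is precisely the hypothesis (\ref{eq:robustification-1}) of the robustification theorem with $\gamma = 2^{-lc}$, so Theorem \ref{robustification-technique} yields
\[
\frac{1}{l!} \sum_{\sigma \in \fS_l} f(\sigma(s^l)) \;\geq\; 1 - (l+1)^{|\bS|} \cdot 2^{-lc} \qquad (s^l \in \bS^l),
\]
and the right-hand side still decays exponentially in $l$.

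Let $\pi_\sigma$ denote the bilocal unitary permuting the tensor factors of $\hr_{AB}^{\otimes l}$ according to $\sigma$; each party implements its half of $\pi_\sigma$ as a local unitary once $\sigma$ is known to both. Since $f(\sigma(s^l)) = F(\mathcal{D}_l \circ \pi_\sigma(\rho_{s^l}), \phi_l)$, the averaged protocol $\overline{\mathcal{D}}_l := \frac{1}{l!} \sum_\sigma \mathcal{D}_l \circ \pi_\sigma$ achieves worst-case fidelity $\geq 1 - (l+1)^{|\bS|} 2^{-lc}$. Its naive implementation, however, needs $\log(l!) = \Theta(l \log l)$ bits of common randomness, which would blow up the classical rate. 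Here Ahlswede's elimination step enters: by a Chernoff bound on $N_l := l^2$ uniformly and independently drawn permutations $\sigma_1, \ldots, \sigma_{N_l}$, combined with a union bound over the $|\bS|^l$ singly-exponentially many sequences in $\bS^l$, one can extract a deterministic set $\Sigma_l \subset \fS_l$ with $|\Sigma_l| \leq l^2$ such that $\frac{1}{|\Sigma_l|} \sum_{\sigma \in \Sigma_l} f(\sigma(s^l)) \geq 1 - 2(l+1)^{|\bS|} 2^{-lc}$ for every $s^l \in \bS^l$. The sender chooses $\sigma \in \Sigma_l$ locally and communicates it to the receiver at an extra cost of only $\lceil \log |\Sigma_l| \rceil = O(\log l)$ bits of $A \rightarrow B$ classical communication; both parties then apply the corresponding $\pi_\sigma$ before running $\mathcal{D}_l$.

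The resulting one-way LOCC $\mathcal{D}'_l$ inherits the exponentially small worst-case infidelity over the AVQS, delivers a maximally entangled output of Schmidt rank $\geq 2^{lR}$, and uses at most $l^2 \cdot M_l$ classical messages, so its classical rate differs from that of $\mathcal{D}_l$ by $\frac{2 \log l}{l} \rightarrow 0$. By Definition \ref{avqs_distillation_achievable_rates}, $R$ is thus an achievable $A \rightarrow B$ distillation rate for the AVQS $\mathcal{X}$, which completes the proof. The main technical obstacle is the elimination step: turning the convex combination indexed by the enormous set $\fS_l$ into a genuine one-way LOCC of negligible classical overhead requires the probabilistic selection together with a careful comparison of the concentration exponent to $|\bS|^l$, and this is exactly what the exponentially fast fidelity convergence from Proposition \ref{comp_ent_dist_ach_thm} makes possible; the linearity of the fidelity in one pure argument is the other key ingredient, since without it the average fidelity bound for $\rho_q^{\otimes l}$ would not translate into the pointwise bound over $\bS^l$ that feeds Theorem \ref{robustification-technique}.
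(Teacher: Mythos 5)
Your proposal follows the paper's route exactly — robustification via Theorem \ref{robustification-technique} applied to $f(s^l) = F(\mathcal{D}_l(\rho_{s^l}),\phi_l)$, using linearity of the fidelity in the mixed argument to verify the type-wise hypothesis, followed by Ahlswede-style elimination of the permutation randomness — but the elimination step as you state it has a quantitative gap. You claim that $N_l = l^2$ i.i.d.\ uniform permutations suffice to extract a deterministic set $\Sigma_l$ with
\begin{align}
\frac{1}{|\Sigma_l|}\sum_{\sigma\in\Sigma_l} f(\sigma(s^l)) \geq 1 - 2(l+1)^{|\bS|}2^{-lc}
\qquad \text{for all } s^l\in\bS^l,
\end{align}
i.e.\ polynomially many permutations \emph{and} exponentially small worst-case infidelity. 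The concentration-plus-union-bound argument cannot deliver both. Writing $\epsilon_l := (l+1)^{|\bS|}2^{-lc}$ for the mean of $g = 1-f$ under a random permutation, any Chernoff/Bernstein bound for the event that the empirical mean of $N_l$ samples exceeds a threshold $\nu_l$ gives a failure probability no better than roughly $2^{-N_l(\nu_l - 2\epsilon_l)}$ per sequence; to survive the union bound over the $|\bS|^l$ sequences you need $N_l(\nu_l-2\epsilon_l) \gtrsim l\log|\bS|$. With your choice $\nu_l = 2\epsilon_l$ the exponent vanishes identically, and even with $\nu_l$ merely exponentially small, $N_l$ must grow exponentially (of order $\nu_l^{-1}\cdot l\log|\bS|$) for the bound to close. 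So the claimed $\Sigma_l$ is not produced by your argument.

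There are two consistent ways to repair it, and the paper takes the first: choose $K_l = 2^{l\theta}$ permutations and threshold $\nu_l = 2^{-l\kappa}$ with $0<\kappa<\theta<c$, so that $K_l(\nu_l-2\epsilon_l) = 2^{l(\theta-\kappa)}(1-o(1)) \gg l\log|\bS|$; the classical overhead is then a constant rate $\theta$ per symbol (not $O(\log l / l)$ as you assert), which is harmless because $\theta$ can be fixed arbitrarily small and Definition \ref{avqs_distillation_achievable_rates} only demands \emph{some} finite classical rate. Alternatively, one may keep $N_l$ polynomial (of order $l^2\log|\bS|$) at the price of relaxing the worst-case infidelity to $O(1/l)$; this still satisfies condition 1 of Definition \ref{avqs_distillation_achievable_rates} and hence still proves the theorem as stated, but forfeits the exponential convergence that the paper carries through (and which is what makes the robustification step itself viable against the $(l+1)^{|\bS|}$ prefactor). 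Everything else in your write-up — the identification of $\{\rho_p\}_{p\in\pr(\bS)}$ with $\conv(\mathcal{X})$, the reduction to Proposition \ref{comp_ent_dist_ach_thm}, the bilocal implementation of the permutations, and the appeal to Corollary \ref{comp_ent_dist_gen} for the rightmost equality — matches the paper and is correct.
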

\begin{remark} 
 The above statement actually holds with equality in (\ref{av_dist_ach_fin_1}) which we show in the proof
 of Corollary \ref{dist_gen_corr} below.
\end{remark}
\begin{proof}
 We show, that each rate $R$, which is achievable for $A \rightarrow B$ entanglement distillation for 
 the compound source generated by $\conv(\mathcal{X})$, is also an achievable rate for $A \rightarrow B$ 
 entanglement distillation for the AVQS 
 generated by $\mathcal{X}$.
 We indicate the elements of $\conv(\mathcal{X})$ by probability distributions on $\bS$, 
 since 
 \begin{align}
  \conv(\mathcal{X}) = \left\{\rho_p:\ \rho_p = \sum_{s \in \bS} p(s) \rho_s, \ p 
  \in\pr(\bS)\right\} \label{conv_hull_dist}
 \end{align}
 holds.
  We know from Proposition \ref{comp_ent_dist_ach_thm}, that for an achievable $A \rightarrow B$ 
  entanglement distillation rate $R$ for the compound source generated by 
 $\conv(\mathcal{X})$, $\delta > 0$ and each sufficiently large 
 blocklength $l$, there exists a one-way LOCC channel $\tilde{\mathcal{D}}_l$, such that the condition
 \begin{align}
  \underset{p \in \pr(\bS)}{\min} F(\tilde{\mathcal{D}}_l(\rho_p^{\otimes l}),\phi_l) \geq 1 - 2^{-lc_5} 
  \label{av_dist_ach_fin_prf_1}
 \end{align}
 is fulfilled with a maximally entangled state $\phi_l$ shared by $A$ and $B$, 
 such that
 \begin{align}
  \frac{1}{l}\log \sr(\phi_l) \geq R - \delta \label{av_dist_ach_fin_prf_2}
 \end{align}
 holds. Note that the minimization in (\ref{av_dist_ach_fin_prf_1}) is because of (\ref{conv_hull_dist}). 
 We define a function $f: \bS^l \rightarrow [0,1]$ by $f(s^l) := F(\tilde{\mathcal{D}}_l(\rho_{s^l}),\phi_l)$ 
 for each $s^l \in \bS^l$, and infer from (\ref{av_dist_ach_fin_prf_1}), that
 \begin{align}
  \sum_{s^l \in \bS^l} p(s_1)\cdot...\cdot p(s_l) \ f(s^l)\  \geq 1 - 2^{-lc_5}
 \end{align}
 holds for each $p \in \pr(\bS)$ with a constant $c_5 > 0$. Let
 \begin{align}
  \mathcal{U}_{\sigma}(\cdot) := U_{A,\sigma} \otimes U_{B,\sigma}(\cdot)U_{A,\sigma}^\ast 
				\otimes U_{B,\sigma}^\ast,
 \end{align}
 for each permutation $\sigma \in \fS_l$, be the unitary channel, which permutes the tensor factors in 
 $\hr_{AB}^{\otimes l}$ according to $\sigma$, 
 (with unitary matrices $U_{A,\sigma}$, $U_{B,\sigma}$ permuting the tensor bases on $\hr_A^{\otimes l}$ resp.
 $\hr_B^{\otimes l}$). It holds
 \begin{align}
  \rho_{\sigma(s^l)} = \mathcal{U}_{\sigma}(\rho_{s^l}),
 \end{align}
 and consequently
 \begin{align}
  f(\sigma(s^l)) 
   = F(\tilde{\mathcal{D}}_l\circ \mathcal{U}_{\sigma}(\rho_{s^l}),\phi_l) \label{rob_func}
 \end{align}
 for each $s^l \in \bS^l, \sigma \in \fS_l$. The functions in (\ref{rob_func}) fulfill the conditions of 
 Theorem \ref{robustification-technique}, which in turn implies, that
 \begin{align}
  (1 - (l+1)^{|\bS|})\cdot 2^{-lc_5} 
  & \leq \frac{1}{l!}\sum_{\sigma\in \fS_l} F(\tilde{\mathcal{D}}_l \circ 
  \mathcal{U}_\sigma(\rho_{s^l}),\phi_l) \\
  & = F(\hat{\mathcal{D}}_l(\rho_{s^l}),\phi_l)
 \end{align}
 is valid with the definition $\hat{\mathcal{D}}_l := 
 \frac{1}{l!}\sum_{\sigma \in \fS_l} \tilde{\mathcal{D}}_l \circ \mathcal{U}_\sigma$. Notice,
 that $\hat{\mathcal{D}}_l$ is an $A \rightarrow B$ LOCC channel either. However, $\hat{\mathcal{D}}_l$ 
 is not a reasonable protocol for entanglement distillation regarding the classical communication cost.
 Implementation of $\hat{\mathcal{D}}_l$ demands $A \rightarrow B$
 communication of a number of classical messages increased by a factor $l!$ compared to the requirements 
 of $\tilde{\mathcal{D}}_l$,
 which leads to super-exponential growth of required classical messages and consequently unbounded classical 
 communication rates. We remark here, that for a 
 coordination of the permutations in $\hat{\mathcal{D}}_l$, common randomness accessible to $A$ and $B$, 
 which is known to be a weaker resource than $A \rightarrow B$ communication, would suffice. Nonetheless, 
 the asymptotic common randomness consumption of the protocol would be above any rate either. We will a 
 apply the well-known derandomization technique which first appeared in Ref. \citenum{ahlswede78} to 
 construct $A \rightarrow B$ LOCC channel with reasonable classical communication requirements (actually, 
 we will show, that we can approximate the classical cost of $A \rightarrow B$ distillation of the compound 
 source $\conv(\mathcal{X})$. \newline
 Let $X_1,...,X_{K_l}$ be a sequence of i.i.d. random variables, each distributed uniformly on $\fS_l$. 
 We define a function $g: \fS_l \times \bS^l \rightarrow [0,1]$ by
 \begin{align}
  g(\sigma,s^l) = 1 - F(\tilde{\mathcal{D}}_l\circ\mathcal{U}_\sigma (\rho_{s^l}),\phi_l) && 
  (\sigma \in \fS_l , s^l \in \bS^l).
 \end{align}
 One readily verifies, that
 \begin{align}
  \expe\left[g(X_1,s^l)\right] 
  = 1 - F(\hat{\mathcal{D}}_l(\rho_{s^l}),\phi_l) 
  \leq (l+1)^{|\bS|} \ 2^{-lc_5} 
  := \epsilon_l \label{derand_before}
 \end{align}
  holds for each $s^l \in \bS^l$. Thus, for each $s^l \in \bS^l$, and $\nu_l \in (0,1)$, we yield 
  \begin{align}
   \prob\left(\sum_{k=1}^{K_l} g(X_k,s^l) > K_l \nu_l \right) 
   & = \prob\left(\prod_{k=1}^{K_l} \exp(g(X_k,s^l)) > 2^{K_l \nu_l} \right) \label{derand_0}\\
   & \leq 2^{-K_l\nu_l} \cdot \expe\left[\exp(g(X_k,s^l)\right]^{K_l} \label{derand_1}\\
   & \leq 2^{-K_l\nu_l} \cdot  (1 + \expe\left[\exp(g(X_k,s^l)\right])^{K_l} \label{derand_2}\\
   & \leq 2^{-K_l\nu_l} \cdot 2^{K_l\log(1 + \epsilon_l)} \label{derand_3}\\
   & \leq 2^{-K_l(\nu_l-2\epsilon_l)}. \label{derand_4}
  \end{align}
  Eq. (\ref{derand_1}) above is by Markov's inequality, (\ref{derand_2}) follows from the fact, that 
  $\exp(x)\leq 1+x$ holds for $x \in [0,1]$, (\ref{derand_3}) is by (\ref{derand_before}), and (\ref{derand_4}) 
  follows from the inequality $\log(1+x) \leq 2x$ being valid for $x \in (0,1)$. 
  From (\ref{derand_0})-(\ref{derand_4}) and application of de Morgan's laws, it follows
  \begin{align}
   \prob\left(\forall s^l \in \bS^l: \frac{1}{K_l}\sum_{k=1}^{K_l} g(X_k,s^l) \leq \nu_l\right) 
   & \geq 1 - |\bS|^l \cdot 2^{-K_l(\nu_l - 2 \epsilon_l)} \label{av_dist_ach_fin_prf_derand_0} \\
   &\geq 1 - 2^{-l\frac{(\theta - \kappa)}{2}}, \label{av_dist_ach_fin_prf_derand}
  \end{align}
  for large enough $l$,where the last line results from the choosing $\nu_l = 2^{-l\kappa}$ and 
  $K_l = 2^{l\theta}$ with $\theta, \kappa > 0$. 
  If we choose $\kappa$ and $\theta$ in a way, that they fulfill $0 < \kappa < \theta < c$,
  the r.h.s. of (\ref{av_dist_ach_fin_prf_derand}) is strictly positive and we find a
  realization $\sigma_1,...,\sigma_{K_l}$ of $X_1,...,X_{K_l}$, such that for each $s^l \in \bS^l$
  \begin{align}
   2^{-l\kappa}
   &\geq \frac{1}{K_l} \sum_{k=1}^{K_l} g(\sigma_k,s^l)  \\
   & = 1 - \frac{1}{K_l}
	\sum_{k=1}^{K_l}F(\tilde{\mathcal{D}}_l\circ \mathcal{U}_{\sigma_k}(\rho_{s^l}),\phi_l)  \\
   & = 1 - F(\mathcal{D}_l(\rho_{s^l}),\phi_l) \label{av_dist_ach_fin_prf_derand_2},
  \end{align}
  where we defined $\mathcal{D}_l := \frac{1}{K_l}\sum_{k=1}^{K_l}\tilde{\mathcal{D}_l}\circ 
  \mathcal{U}_{\sigma_k}$.
  With (\ref{av_dist_ach_fin_prf_2}) and (\ref{av_dist_ach_fin_prf_derand_2}), it is shown, that for each 
  sufficiently large blocklength $l$, we find a one-way entanglement distillation protocol with
  \begin{align}
   \min_{s^l \in \bS^l} \ F(\mathcal{D}_l(\rho_{s^l}),\phi_l) \geq 1 -2^{-l \kappa}, \hspace{.3cm} \textrm{and}  \hspace{.7cm}
   \frac{1}{l}\log \sr(\phi_l) \geq R - \delta.
  \end{align}
  Notice, that the number of different classical messages to be communicated by $A$ within application of  
  $\mathcal{D}_l$ is increased by a factor $2^{l\theta}$ compared to the message transmission demanded 
  by $\tilde{\mathcal{D}}_l$, i.e. the communication rate
  is increased by $\theta$ (which we can choose to be an arbitrarily small fixed number).   
  \end{proof}
 \end{subsection}
 \begin{subsection}{General AVQS}
 In this section, we generalize the results of the preceding section, admitting the AVQS to be generated by 
 any not necessarily finite or countable set $\mathcal{X}$ of states on $\hr_A \otimes \hr_B$.
 We approximate the closed convex hull of  $\mathcal{X}$ by 
 a polytope, which is known as the convex hull of a finite set of points and apply Theorem 
 \ref{av_dist_ach_fin}, together with continuity properties of the capacity function.
 The proof strategy has some similarities with the argument given in Ref. \citenum{ahlswede13} for entanglement 
 transmission over general arbitrarily varying quantum channels. 
 To prepare ourselves for the approximation, we need some notation and results from convex geometry which we 
 state first.
 For a subset $A$ of a normed space $(V,\|\cdot\|)$, $\overline{A}$ is the closure and $\aff A$ is the affine 
 hull of $A$. If $A$ is a 
 convex set, the relative interior $\ri A$ is the interior and the relative boundary $\rebd A$ of $A$ 
 are the interior and boundary of $A$ regarding the topology on $\aff A$ induced by $\|\cdot\|$. 
  \begin{lemma}[Ref. \citenum{ahlswede13}, Lemma 34]\label{haussdorff_1}
  Let $A$, $B$ be compact sets in  $\cc^n$ with $A \subset B$ and
  \begin{align}
   d_H(\rebd B, A) = t > 0,
  \end{align}
  where $\| \cdot \|$ denotes any norm on $\cc^n$. Let $P$ a polytope with $A \subset P$ and $d_H(A,P) \leq 
  \delta$, where $\delta \in (0,t]$ and $d_H$ is the Hausdorff distance induced by $\|\cdot\|$. Then 
  $P' := P \cap \aff A$ is also a polytope and $P \subset B$. 
 \end{lemma}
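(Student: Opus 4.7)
The plan is to treat the two conclusions of the lemma separately.

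For the claim that $P' := P \cap \aff A$ is a polytope, I would appeal to standard convex geometry: every polytope admits an H-representation as a bounded intersection of finitely many closed affine half-spaces, and the affine subspace $\aff A$ of the finite-dimensional space $\cc^n \simeq \rr^{2n}$ is itself the intersection of finitely many closed half-spaces (pairs of opposite half-spaces realising the linear equations cutting out $\aff A$). Concatenating these two finite lists produces an H-representation of $P'$, so $P'$ is a polytope. No idea beyond bookkeeping with half-space lists is needed here.

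For the inclusion $P' \subset B$ (which I read as the intended statement; the ``$P \subset B$'' in the excerpt looks like a typo for $P' \subset B$), I would argue by contradiction. Pick $p \in P' \setminus B$, and use the Hausdorff hypothesis $d_H(A,P) \leq \delta$ to produce $a \in A$ with $\|p-a\| \leq \delta$; since $A \subset B$, we have $a \in B$. Because $A \subset B$ we have $\aff A \subset \aff B$, so the whole segment $\gamma(\lambda) := (1-\lambda)a + \lambda p$, $\lambda \in [0,1]$, lies in $\aff B$. Since $B$ is closed in $\aff B$, the set $\{\lambda \in [0,1] : \gamma(\lambda) \in B\}$ is closed in $[0,1]$, contains $0$, and does not contain $1$; its supremum $\lambda^{\ast}$ satisfies $\lambda^{\ast} \in (0,1)$, and $b^{\ast} := \gamma(\lambda^{\ast})$ lies in the relative boundary $\rebd B$ of $B$ inside $\aff B$. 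Then
\begin{equation*}
\|b^{\ast} - a\| \;=\; \lambda^{\ast}\,\|p - a\| \;<\; \|p - a\| \;\leq\; \delta,
\end{equation*}
the strict inequality coming from $\lambda^{\ast} < 1$.

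To close the argument I would invoke the hypothesis $d_H(\rebd B, A) = t \geq \delta$, which, in the way it is used here and in Ref.~\citenum{ahlswede13}, forces every point of $\rebd B$ to be at distance at least $t$ from every point of $A$. This contradicts the estimate above, so $p \in B$ after all, and since $p \in P'$ was arbitrary, $P' \subset B$. The only genuinely nontrivial step in this plan is identifying $b^{\ast}$ as a point of $\rebd B$ rather than merely a point of the topological boundary of $B$ in $\cc^n$; the key observation is that the whole segment $\gamma([0,1])$ sits in $\aff A \subset \aff B$, so the connectedness/continuity argument is carried out intrinsically in $\aff B$ and automatically delivers a crossing point of the relative boundary. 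Everything else reduces to the one-line distance estimate displayed above.
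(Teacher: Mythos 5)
The paper offers no proof of this lemma at all --- it is imported verbatim from Ref.~\citenum{ahlswede13} (Lemma~34) --- so there is nothing internal to compare your argument against; I can only judge it on its own terms, and on those terms it is correct and complete. Both of your editorial readings are in fact forced. The conclusion must be $P' \subset B$ rather than $P \subset B$: in the application inside the proof of Theorem~\ref{av_dist_ach_gen}, $B = \st(\hr_{AB})$ lies in a proper affine subspace of the ambient matrix space while the outer polytope supplied by Webster's theorem is full-dimensional, so $P \subset B$ cannot be meant. More importantly, your reinterpretation of the hypothesis is not a convenience but a necessity: with $d_H$ literally as defined in Section~\ref{sect:notation}, the statement is false (take $A = B$ a segment in $\rr^2$, so that $d_H(\rebd B, A)$ equals half its length, yet $P \cap \aff A$ overshoots the endpoints). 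What the lemma needs --- and what the paper actually verifies before invoking it, namely $\inf\{\|\rho-\rho'\|_1 : \rho \in A,\ \rho' \in \rebd B\} > 0$ --- is the separation (minimum) distance between $A$ and $\rebd B$, not the Hausdorff distance, and you say so explicitly. With that reading your segment argument is sound: the crossing point $b^{\ast}$ genuinely lies in $\rebd B$ because the entire segment stays inside $\aff A \subset \aff B$, and $\|b^{\ast}-a\| = \lambda^{\ast}\|p-a\| < \|p-a\| \le \delta \le t$ contradicts the separation hypothesis. Two trivial remarks: your claim $\lambda^{\ast} \in (0,1)$ is not justified a priori at that point (only $\lambda^{\ast} < 1$ is), but if $\lambda^{\ast} = 0$ then $b^{\ast} = a \in A \cap \rebd B$ and the contradiction is immediate, so nothing breaks; and the H-representation argument for $P'$ being a polytope is standard and fine.
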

 With the above statement and the assertions of the preceding section, we are prepared to prove the following 
 theorem which is the main result of this section.
 \begin{theorem}\label{av_dist_ach_gen}
 Let $\mathcal{X} := \{\rho_s\}_{s \in \bS}$ be a set of states on $\hr_{A} \otimes \hr_B$. For each $\delta
 >0$ and $k \in \nn$, there exists a number $l_0 \in \nn$, such that for each $l > l_0$, there is an 
 $A \rightarrow B$ LOCC channel $\mathcal{D}_l$ fulfilling
 \begin{align}
  \inf_{s\in \bS^l}\ F(\mathcal{D}_l(\rho_{s^l}), \phi_l) \geq 1 - 2^{-lc_6} \label{av_dist_ach_gen_1}
 \end{align}
 with a maximally entangled state $\phi_l$ shared by $A$ and $B$ and a constant $c_6 > 0$, such that 
 \begin{align}
  \frac{1}{l} \log \sr(\phi_l) \geq \frac{1}{k} \sup_{\mathcal{T}\in \Theta_k} \
		\inf_{\tau \in \conv(\mathcal{X})}D_{\rightarrow}^{(1)}(\tau^{\otimes k},\mathcal{T})-\delta
 \end{align}
 holds, where the function $D_{\rightarrow}^{(1)}$ is defined in (\ref{ent_dist_cap_func_oneshot}).
 \end{theorem}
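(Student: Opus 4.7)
The plan is to leverage the finite-set result Theorem \ref{av_dist_ach_fin} by enclosing the (possibly infinite or non-convex) generating set $\mathcal{X}$ inside a polytope $P = \conv(\mathcal{X}_0)$ with finite vertex set $\mathcal{X}_0 \subset \st(\hr_{AB})$, so that every $\rho_{s^l}$ becomes a convex combination of tensor products of states from $\mathcal{X}_0$, and then to pay for the enlargement via the Hausdorff continuity of the distillation capacity functional provided by Lemma \ref{coherent_cont_set}.

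First I would fix $k \in \nn$ and $\delta > 0$, set $A := \overline{\conv(\mathcal{X})}$, and invoke Lemma \ref{haussdorff_1} with $A$ together with a suitable compact enlargement $B \supset A$ inside $\aff(A)$ with $d_H(\rebd B, A) =: t > 0$. This yields a polytope $P \supset A$ with $d_H(A,P) \leq \eta$ for any prescribed $\eta \in (0,t]$; since $P \subset B \subset \st(\hr_{AB})$, the polytope can be written as $P = \conv(\mathcal{X}_0)$ for a finite set $\mathcal{X}_0 = \{\sigma_1,\dots,\sigma_M\}$ of quantum states. Next, applying Theorem \ref{av_dist_ach_fin} to the AVQS generated by $\mathcal{X}_0$ produces, for every $\delta' > 0$ and all sufficiently large $l$, a one-way LOCC $\mathcal{D}_l$ satisfying
\begin{align}
\min_{j^l \in [M]^l}\ F(\mathcal{D}_l(\sigma_{j_1}\otimes\cdots\otimes\sigma_{j_l}),\phi_l) \geq 1 - 2^{-lc}
\end{align}
for some $c > 0$, together with $\tfrac{1}{l}\log\sr(\phi_l) \geq \tfrac{1}{k}\sup_{\mathcal{T}\in\Theta_k}\inf_{\tau\in\conv(\mathcal{X}_0)} D_{\rightarrow}^{(1)}(\tau^{\otimes k},\mathcal{T}) - \delta'$.

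To transfer the fidelity bound to $\mathcal{X}$ itself, I would use that $\mathcal{X} \subset P = \conv(\mathcal{X}_0)$: each factor $\rho_{s_i}$ decomposes as a convex combination $\sum_j p_{i,j}\sigma_j$, so every $\rho_{s^l}$ is a convex combination of product states of the form $\sigma_{j_1}\otimes\cdots\otimes\sigma_{j_l}$. Since $F(\,\cdot\,,\phi_l)$ is linear in its first argument when $\phi_l$ is pure, the bound above transfers to $\inf_{s^l \in \bS^l} F(\mathcal{D}_l(\rho_{s^l}),\phi_l) \geq 1 - 2^{-lc}$, establishing (\ref{av_dist_ach_gen_1}). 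For the rate comparison I would invoke Lemma \ref{coherent_cont_set}: for each $\mathcal{T} \in \Theta_k$,
\begin{align}
\left|\inf_{\tau\in\conv(\mathcal{X}_0)} I_c(A\rangle BB',\hat{\mathcal{T}}(\tau^{\otimes k})) - \inf_{\tau\in\conv(\mathcal{X})} I_c(A\rangle BB',\hat{\mathcal{T}}(\tau^{\otimes k}))\right| \leq k\nu(\eta),
\end{align}
which survives the supremum over $\mathcal{T}$; after dividing by $k$, choosing $\eta$ and $\delta'$ so that $\nu(\eta) + \delta' \leq \delta$ yields the claimed lower bound on $\tfrac{1}{l}\log\sr(\phi_l)$.

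The main obstacle is the initial geometric step: securing a polytope $P$ that simultaneously contains the possibly non-closed, non-convex set $\mathcal{X}$, sits entirely within $\st(\hr_{AB})$, and is $\eta$-close in Hausdorff distance to $\conv(\mathcal{X})$. Once this is handled by a careful application of Lemma \ref{haussdorff_1} inside $\aff(\conv(\mathcal{X}))$, with $B$ chosen as a small enlargement of $A$ in that affine subspace, the remainder is a routine combination of the finite-set Theorem \ref{av_dist_ach_fin}, the linearity of the pure-state fidelity in its first argument, and the continuity estimate for the one-shot coherent information given in Lemma \ref{coherent_cont_set}.
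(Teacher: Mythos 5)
Your overall architecture matches the paper's: approximate by a polytope with finitely many vertices, apply Theorem \ref{av_dist_ach_fin} to that finite vertex set, transfer the fidelity bound to $\mathcal{X}$ by writing each $\rho_{s^l}$ as a convex combination of products of vertices and using linearity of $F(\cdot,\phi_l)$ in the first argument, and control the rate loss via Lemma \ref{coherent_cont_set}. Those parts are fine. However, the step you yourself identify as ``the main obstacle'' contains a genuine gap, and your proposed resolution does not close it. You want a polytope $P$ with $\conv(\mathcal{X}) \subset P \subset \st(\hr_{AB})$ that is $\eta$-close to $\conv(\mathcal{X})$, obtained by applying Lemma \ref{haussdorff_1} with $B$ ``a small enlargement of $A$ in that affine subspace.'' If $B$ is allowed to leave $\st(\hr_{AB})$, the lemma's conclusion $P \subset B$ does not place the vertices inside the state space, so $\mathcal{X}_0$ need not consist of density matrices and Theorem \ref{av_dist_ach_fin} cannot be applied to it. If instead you insist $B \subset \st(\hr_{AB})$, then whenever $\overline{\conv(\mathcal{X})}$ touches the relative boundary of $\st(\hr_{AB})$ --- e.g.\ whenever $\mathcal{X}$ contains a pure or otherwise rank-deficient state --- there is no enlargement with positive distance between $A$ and $\rebd B$, and indeed no suitable polytope can exist at all: in the extreme case where $\mathcal{X}$ is the set of all pure states, $\conv(\mathcal{X}) = \st(\hr_{AB})$, and the only candidate for $P$ inside the state space is $\st(\hr_{AB})$ itself, which is not a polytope.

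The missing idea is the depolarization preprocessing. The paper first applies $\mathcal{N}_\eta := \mathcal{N}_{A,\eta}\otimes\mathcal{N}_{B,\eta}$, a tensor product of \emph{local} depolarizing channels (so it respects the one-way LOCC constraint and can be absorbed into the protocol as $\mathcal{D}_l := \hat{\mathcal{D}}_l \circ \mathcal{N}_\eta^{\otimes l}$). This pushes $\overline{\conv(\mathcal{X})}$ into the relative interior of $\st(\hr_{AB})$ at positive distance from $\rebd\,\st(\hr_{AB})$, and only then does Lemma \ref{haussdorff_1}, applied with $B = \st(\hr_{AB})$, yield a polytope $P_\eta \subset \st(\hr_{AB})$ containing $\mathcal{N}_\eta(\conv(\mathcal{X}))$. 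The price is an extra $\|\mathcal{N}_\eta(\tau)-\tau\|_1 \leq 6\eta$ perturbation, so the continuity estimate must be run at Hausdorff distance $7\eta$ rather than your $\eta$; this is harmless but needs to be accounted for when choosing $\eta$ against $\delta$. Without this (or some equivalent device for handling generating sets that touch the boundary of the state space), your argument only covers the special case in which $\overline{\conv(\mathcal{X})}$ is already separated from $\rebd\,\st(\hr_{AB})$.
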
 
 \begin{proof}
  Let $\mathcal{T} := \{\mathcal{T}_j\}_{j = 1}^J$ be any instrument with domain 
  $\mathcal{L}(\hr_A^{\otimes k})$, $\delta > 0$. 
  Dealing only with the nontrivial case, we show, that
  \begin{align}
   \inf_{\rho \in \conv(\mathcal{X})}\frac{1}{k}I_c(A\rangle BB', \hat{\mathcal{T}}(\rho^{\otimes k})) 
   - \delta > 0
  \end{align}
  is an achievable rate (remember our notation from (\ref{ent_dist_cap_func_oneshot_2})). 
  Since the Hausdorff distance between $\conv(\mathcal{X})$ and 
  $\overline{\conv(\mathcal{X})}$ is zero, it makes no difference if we consider the 
  set $\overline{\conv(\mathcal{X})}$ instead.
  We briefly describe the strategy of our proof. We approximate the set 
  $\overline{\conv(\mathcal{X})}$
  from the outside by a polytope $P_\eta$. Since $P_\eta$, as a polytope, is the convex hull of a finite set of
  points, Theorem \ref{av_dist_ach_fin} can be applied. 
  A technical issue (cf. Ref. \citenum{ahlswede13}) is, to ensure, that the approximating polytope 
   completely consists of density matrices,
  i.e. $P_\eta \subset \st(\hr_{AB})$. We achieve this by a slight depolarization of the states in
  $\conv(\mathcal{X})$, such that the resulting set does not touch the boundary of $\st(\hr_{AB})$. 
  Define, for $\gamma \in [0,1]$ the channel $\mathcal{N}_\gamma \in \mathcal{C}(\hr_A \otimes \hr_B)$ by
  $\mathcal{N}_\gamma := \mathcal{N}_{A,\gamma} \otimes \mathcal{N}_{B,\gamma}$, where $\mathcal{N}_{X,\gamma}$
  is the $\gamma$-depolarizing channel on the subsystem $X$, $X = A,B$ , i.e
  \begin{align}
   \mathcal{N}_\gamma(\tau) = (1-\gamma)^2 \tau + \gamma(1-\gamma) (\tau_A \otimes \pi_B + \pi_A \otimes
	  \tau_B) + \gamma^2 (\pi_A \otimes \pi_B)
  \end{align}
  for each $\tau \in \st(\hr_A \otimes \hr_B)$, were $\pi_A, \pi_B$ are maximally mixed states
  and $\tau_A, \tau_B$ are the marginals of $\tau$ on $\hr_A$, $\hr_B$. Notice, that $\mathcal{N}_\gamma$ 
  is defined in terms of local depolarizing channels on the subsystems. This is required, since we are 
  restricted to one-way LOCC channels. It holds
  \begin{align}
   \|\mathcal{N}_\eta(\tau) - \tau \|_1 
   &\leq \|(1-\eta)^2 \tau - \tau\|_1 + \eta(1-\eta)\|\tau_A \otimes \pi_B + \pi_A \otimes \tau_B \|_1 \\
   & + \eta \|\pi_A\otimes \pi_B \|_1  \\
   &\leq 6 \eta \label{depol_bound}
  \end{align}
  for each state $\tau$ on $\hr_A \otimes \hr_B$. Moreover, it holds 
  $\overline{\mathcal{N}_\eta(\conv(\mathcal{X}))} = 
  \mathcal{N}_\eta(\overline{\conv(\mathcal{X}})) 
  \subset \ri \st(\hr_A \otimes \hr_B)$, which implies
  \begin{align}
   \inf\left\{\|\rho - \rho'\|_1: \ \rho \in \overline{\mathcal{N}_\eta(\conv(\mathcal{X}))}, 
   \rho' \in \rebd(\st(\hr_A \otimes \hr_B))\right\} > 0.
  \end{align}
  Therefore, due to of Lemma \ref{haussdorff_1} and Theorem 3.1.6 in Ref. \citenum{webster94},  
  there exists, for each small enough
  number $\eta > 0$, a polytope $P_\eta := \conv(\{\tau_e\}_{e \in E_\eta}) \subset \st(\hr_A \otimes \hr_B)$ 
  such that $\mathcal{N}_\eta(\conv(\mathcal{X})) \subset P_\eta$ and
  \begin{align}
   d_H(\mathcal{N}_\eta(\conv(\mathcal{X})), P_\eta) \leq  \eta.
  \end{align}
  Applying Theorem \ref{av_dist_ach_fin} to the finite AVQS generated by the extremal set $\{\tau_e\}
  _{e \in E}$ of the polytope $P_{\eta}$,
  we know, that for each sufficiently large blocklength $l$, there exists an $A \rightarrow
  B$ LOCC channel $\hat{\mathcal{D}}_l$ such that 
  \begin{align}
   F(\hat{\mathcal{D}}_l(\tau_{e^l}), \phi_l) \geq 1 - 2^{-lc_6} \label{av_dist_ach_gen_pol_fid}
  \end{align}
   holds with a maximally entangled state $\phi_l$ shared by $A$ and $B$ for each $e^l \in E^l$ with 
   Schmidt rank fulfilling 
   \begin{align}
    \frac{1}{l}\log \sr(\phi_l) \geq \frac{1}{k} 
     \inf_{\tau \in P_\eta} I_c(A\rangle BB',\hat{\mathcal{T}}(\tau^{\otimes k})) - \frac{\delta}{2} .
     \label{av_dist_ach_gen_pre_rate}
   \end{align}
   Since $\mathcal{N}_\eta(\conv(\mathcal{X})) \subset P_\eta$ holds, the depolarized version 
   $\mathcal{N}_\eta(\rho_s)$ of 
   each state $\rho_s$, $s \in \bS$ can be written as a convex combination of elements 
   from $\{\tau_e\}_{e \in E_\eta}$, i.e. 
   \begin{align}
   \mathcal{N}_{\eta}(\rho_s) = \sum_{e \in E_\eta} q(e|s) \ \tau_e \label{polytope_decomposition}
   \end{align}
   with a probability distribution $q(\cdot|s)$ on $E_\eta$ for each $s \in \bS$. 
   We define a one-way 
   LOCC channel $\mathcal{D}_l$ by $\mathcal{D}_l := \hat{\mathcal{D}}_l \circ \mathcal{N}_\eta^{\otimes l}$ 
   and deduce
   \begin{align}
    F(\mathcal{D}_l(\rho_{s^l}), \phi_l) 
    & = F(\hat{\mathcal{D}}_l(\mathcal{N}_\eta^{\otimes l}(\rho_{s^l})), \phi_l)  \\
    & = F\left(\hat{\mathcal{D}}_l\left(\bigotimes_{i=1}^l \mathcal{N}_\eta(\rho_{s_i})\right), 
    \phi_l\right)\\
    & = F\left(\hat{\mathcal{D}}_l\left(\bigotimes_{i=1}^l \sum_{e_i \in E} q(e_i|s_i) 
    \tau_{e_i}\right),
    \phi_l\right) \label{polytope_decomposition_used}\\
    & = \sum_{e_1 \in E}\cdots \sum_{e_l \in E} \ \prod_{i=1}^l p(e_i|s_i) F\left(\hat{\mathcal{D}}_l
    (\tau_{e_i}),
    \phi_l\right) \\
    & =\sum_{e^l \in E_\eta^l} q^l(e^l|s^l)\ F(\hat{\mathcal{D}}_l(\tau_{e^l}),\phi_l)  \\
    & \geq 1 - 2^{-lc_6} \label{av_dist_ach_gen_fidelity}
   \end{align}
   for each $s^l = (s_1,...,s_l) \in \bS^l$ where we used (\ref{polytope_decomposition}) in 
   (\ref{polytope_decomposition_used}) and (\ref{av_dist_ach_gen_fidelity}) is by 
   (\ref{av_dist_ach_gen_pol_fid}). 
   To complete the proof, we show, that for small enough $\eta$, 
   \begin{align}
   \inf_{\rho \in \conv(\mathcal{X})} I_c(A\rangle BB',\hat{\mathcal{T}}(\rho^{\otimes k}))
   \geq \inf_{\tau \in P_\eta} I_c(A\rangle BB',\hat{\mathcal{T}}(\tau^{\otimes k})) - \frac{k\delta}{2}
   \label{av_dist_rat_cont}
   \end{align}
   holds. For each $\rho \in \conv(\mathcal{X})$, $\tau \in P_\eta$, we have
   \begin{align}
    \|\rho - \tau\|_1 
    & \leq \|\rho - \mathcal{N}_\eta(\rho) \|_1 + \|\mathcal{N}_\eta(\rho) - \tau \|_1 \\
    & \leq 6\eta + \|\mathcal{N}_\eta(\rho) - \tau \|_1 \label{av_dist_gen_c_b_4}
   \end{align}
    where the last estimation is by (\ref{depol_bound}). From (\ref{av_dist_gen_c_b_4}), 
    we can conclude, that
    \begin{align}
     d_H(\conv(\mathcal{X}),P_\eta)  
      \leq d_H(\mathcal{N}_\eta(\conv(\mathcal{X}), P_\eta) + 6 \eta 
     \leq 7 \eta
    \end{align}
     holds, which implies, via Lemma \ref{coherent_cont_set},
     \begin{align}
      \left|\inf_{\rho \in \conv(\mathcal{X})} I_c(A\rangle BB',\hat{\mathcal{T}}(\rho^{\otimes k}))
       - \inf_{\tau \in P_\eta} I_c(A\rangle BB',\hat{\mathcal{T}}(\tau^{\otimes k}))\right| 
       \leq k \nu(7 \eta) \label{av_dist_cap_cont}.
     \end{align}
     If now $\eta$ is chosen small enough to ensure
     $\nu(7 \eta) < \frac{\delta}{2}$, (\ref{av_dist_rat_cont}), and we conclude,
     collecting inequalities, that the entanglement rate of $\mathcal{D}_l$ is
     \begin{align}
     \frac{1}{l} \log \sr(\phi_l) 
     &\geq \frac{1}{k} \inf_{\tau \in P_\eta} I_c(A\rangle BB',\hat{\mathcal{T}}(\tau^{\otimes k})) - 
     \frac{\delta}{2} \label{av_dist_ach_gen_ende_1} \\
     & \geq \frac{1}{k} \inf_{\rho \in \conv(\mathcal{X})} I_c(A\rangle BB',\hat{\mathcal{T}}
     (\rho^{\otimes k})) - \delta \label{av_dist_ach_gen_ende_2}
    \end{align}
     where (\ref{av_dist_ach_gen_ende_1}) is (\ref{av_dist_ach_gen_pre_rate}), (\ref{av_dist_ach_gen_ende_2})
     is by (\ref{av_dist_rat_cont}). 
  \end{proof}
 \begin{corollary}\label{dist_gen_corr}
  Let $\mathcal{X}$ be a set of states on $\hr_A \otimes \hr_B$. For the AVQS generated by $\mathcal{X}$, it 
  holds
  \begin{align}
   D_{\rightarrow}^{AV}(\mathcal{X}) 
   = D_{\rightarrow}(\conv(\mathcal{X})) 
   = \lim_{l \rightarrow \infty} \frac{1}{k} \sup_{\mathcal{T}\in \Theta_k} \inf_{\tau \in \conv(\mathcal{X})}
      D^{(1)}_{\rightarrow}(\tau^{\otimes k},\mathcal{T}) \label{dist_gen_corr_1}
  \end{align}
  with $D^{(1)}_{\rightarrow}$ being the function defined in (\ref{ent_dist_cap_func_oneshot}), and
  maximization over instruments on $A$'s systems.
 \end{corollary}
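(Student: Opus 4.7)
The chain of equalities splits naturally into two parts. The second equality is nothing but Corollary \ref{comp_ent_dist_gen} applied to the compound source generated by $\conv(\mathcal{X})$, so no additional work is needed there. The entire content to be established is the identity $D_{\rightarrow}^{AV}(\mathcal{X}) = D_{\rightarrow}(\conv(\mathcal{X}))$, which I will prove by two one-sided inequalities.

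For the achievability direction $D_{\rightarrow}^{AV}(\mathcal{X}) \geq D_{\rightarrow}(\conv(\mathcal{X}))$, the work has already been done in Theorem \ref{av_dist_ach_gen}. Given any $k \in \nn$, any instrument $\mathcal{T} \in \Theta_k$, and any $\delta > 0$, that theorem delivers a sequence of one-way LOCC protocols whose merging fidelity on every $\rho_{s^l}$ tends to $1$ exponentially fast and whose entanglement rate converges to at least $\frac{1}{k}\inf_{\tau \in \conv(\mathcal{X})}D_{\rightarrow}^{(1)}(\tau^{\otimes k},\mathcal{T}) - \delta$. Taking the supremum over $\mathcal{T} \in \Theta_k$, the limit $k \to \infty$, and $\delta \to 0$, and identifying the resulting expression with $D_{\rightarrow}(\conv(\mathcal{X}))$ via Corollary \ref{comp_ent_dist_gen}, gives the desired bound.

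For the converse direction $D_{\rightarrow}^{AV}(\mathcal{X}) \leq D_{\rightarrow}(\conv(\mathcal{X}))$, the plan is to observe that any sequence of LOCCs witnessing an achievable AVQS rate automatically witnesses the same rate for the compound source $\conv(\mathcal{X})$. Let $\{\mathcal{D}_l\}_{l\in\nn}$ realize the conditions of Definition \ref{avqs_distillation_achievable_rates} for rate $R$. Any $\rho \in \conv(\mathcal{X})$ has a finite convex decomposition $\rho = \sum_{s \in \bS_0} p(s) \rho_s$ with $\bS_0 \subset \bS$ finite, hence
\begin{align}
\rho^{\otimes l} = \sum_{s^l \in \bS_0^l} p^l(s^l)\, \rho_{s^l},\qquad p^l(s^l) := p(s_1)\cdots p(s_l).
\end{align}
Because $\phi_l$ is a pure state, the fidelity $F(\cdot, \phi_l)$ is linear in its first argument, which yields
\begin{align}
F(\mathcal{D}_l(\rho^{\otimes l}),\phi_l) = \sum_{s^l \in \bS_0^l} p^l(s^l)\, F(\mathcal{D}_l(\rho_{s^l}),\phi_l) \geq \inf_{s^l \in \bS^l} F(\mathcal{D}_l(\rho_{s^l}),\phi_l).
\end{align}
Hence the infimum of $F(\mathcal{D}_l(\rho^{\otimes l}),\phi_l)$ over $\rho \in \conv(\mathcal{X})$ also tends to $1$, so the same sequence $\{\mathcal{D}_l\}_{l \in \nn}$ certifies that $R$ is an achievable rate for the compound source generated by $\conv(\mathcal{X})$, producing the reverse inequality.

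The only technical care needed concerns sets $\mathcal{X}$ of infinite cardinality: $\conv(\mathcal{X})$ consists exactly of finite convex combinations, so the above decomposition always makes sense; any passage to the closure $\overline{\conv(\mathcal{X})}$ is handled by the uniform continuity statement of Corollary \ref{comp_ent_dist_gen}. Neither step presents a real obstacle, since the achievability part is a direct consequence of the already-proved Theorem \ref{av_dist_ach_gen}, and the converse part reduces to the elementary linearity of fidelity in its first argument against a pure state target.
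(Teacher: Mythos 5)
Your proposal is correct and follows essentially the same route as the paper: achievability is delegated to Theorem \ref{av_dist_ach_gen}, the rightmost equality to Corollary \ref{comp_ent_dist_gen}, and the converse uses exactly the paper's observation that a finite convex decomposition of $\sigma \in \conv(\mathcal{X})$ plus linearity of $F(\cdot,\phi_l)$ against the pure target gives $F(\mathcal{D}_l(\sigma^{\otimes l}),\phi_l) \geq \inf_{s^l} F(\mathcal{D}_l(\rho_{s^l}),\phi_l)$, so every achievable AVQS rate is achievable for the compound source $\conv(\mathcal{X})$. The only cosmetic difference is that the paper cites Carath\'eodory's theorem to justify the finite decomposition, whereas you appeal directly to the definition of the convex hull; both are adequate.
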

 \begin{proof}
  The rightmost equality in (\ref{dist_gen_corr_1}) is Corollary \ref{comp_ent_dist_gen}.1. We 
  prove the first equality. Achievability directly follows from Theorem \ref{av_dist_ach_gen}. 
  For the converse statement, let $\mathcal{X} := \{\rho_s\}_{s \in \bS}$
  and $\sigma \in \conv(\mathcal{X})$. By Carath\'eordory's Theorem (see e.g. Ref. \citenum{webster94}, 
  Theorem 2.2.4.), $\sigma$ can be written as a finite convex combination of elements of $\mathcal{X}$, say  
  \begin{align}
   \sigma = \sum_{s\in\bS'} p(s) \rho_s. 
  \end{align}
  with $|\bS'|<\infty$. Thus, for an $A \rightarrow B$ LOCC channel $\mathcal{D}_l$ for blocklength $l$ with 
  suitable maximally entangled state $\phi_l$, it holds
  \begin{align}
   \inf_{s^l \in \bS^l} F(\mathcal{D}_l(\rho_{s^l}), \phi_l)
   & \leq \min_{s^l \in \bS'^l} F(\mathcal{D}_l(\rho_{s^l}), \phi_l)  \\
   & \leq \sum_{s^l \in \bS'^l} p^l(s^l) F(\mathcal{D}_l(\rho_{s^l}), \phi_l) \\
   & = F(\mathcal{D}_l(\sigma^{\otimes l}), \phi_l). \label{corr_conv}
  \end{align}
  Since ($\ref{corr_conv}$) holds for each element of $\conv(\mathcal{X})$, each rate $R$ 
  which is an $A \rightarrow B$ achievable entanglement distillation rate for the AVQS 
  generated by $\mathcal{X}$ is also achievable for the compound quantum source
  generated by $\conv(\mathcal{X})$, thus the converse statement in Corollary \ref{comp_ent_dist_gen}.1 
  applies. 
 \end{proof} 
  Having determined the one-way entanglement distillation capacity $D_{\rightarrow}^{AV}$, the continuity 
  properties of the capacity function on the r.h.s. of (\ref{av_dist_cap_cont}) imply the following 
  corollary. 
 \begin{corollary}\label{av_dist_cap_cont_corr}
  Identifying each set of states with its closure, $D_{\rightarrow}^{AV}$ is uniformly continuous in the metric 
  defined by the Hausdorff distance on compact sets of density matrices. If $\mathcal{X}, \mathcal{X'} \subset
   \st(\hr_A \otimes \hr_B)$ are two compact sets with $d_H(\mathcal{X}',\mathcal{X}) < \epsilon \leq 
   \frac{1}{2}$ it holds
   \begin{align}
    |D_{\rightarrow}^{AV}(\mathcal{X}') - D_{\rightarrow}^{AV}(\mathcal{X})| \leq \nu(\epsilon).
   \end{align}
 \end{corollary}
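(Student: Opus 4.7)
The plan is to reduce the statement directly to Corollary \ref{dist_gen_corr} together with the continuity assertion for the compound-source capacity in Corollary \ref{comp_ent_dist_gen}.2. By Corollary \ref{dist_gen_corr}, for any compact set $\mathcal{Y} \subset \st(\hr_A \otimes \hr_B)$ one has
\begin{align}
 D_{\rightarrow}^{AV}(\mathcal{Y}) = D_{\rightarrow}(\conv(\mathcal{Y})),
\end{align}
so the AVQS capacity is controlled by the compound capacity of the convex hull. Corollary \ref{comp_ent_dist_gen}.2 already tells us that the compound capacity is uniformly continuous in the Hausdorff distance with modulus $\nu$. Consequently, the whole assertion reduces to the purely geometric claim that taking convex hulls does not increase the Hausdorff distance.

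The main (and essentially only) technical step is therefore to show that
\begin{align}
 d_H(\conv(\mathcal{X}),\conv(\mathcal{X}')) \leq d_H(\mathcal{X},\mathcal{X}')
\end{align}
for nonempty compact $\mathcal{X},\mathcal{X}' \subset \st(\hr_A\otimes\hr_B)$. I would prove this by the standard convex-combination trick: given any $\sigma \in \conv(\mathcal{X})$ with representation $\sigma = \sum_i \lambda_i \rho_i$ and any $\epsilon' > d_H(\mathcal{X},\mathcal{X}')$, pick for each $i$ a state $\rho_i' \in \mathcal{X}'$ with $\|\rho_i - \rho_i'\|_1 < \epsilon'$, set $\sigma' := \sum_i \lambda_i \rho_i' \in \conv(\mathcal{X}')$, and use the triangle inequality for $\|\cdot\|_1$ to conclude $\|\sigma - \sigma'\|_1 < \epsilon'$. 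Interchanging $\mathcal{X}$ and $\mathcal{X}'$ and letting $\epsilon' \downarrow d_H(\mathcal{X},\mathcal{X}')$ yields the claim.

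With this in hand, the proof is a three-line chain: from $d_H(\mathcal{X},\mathcal{X}') < \epsilon \leq \tfrac12$ we obtain $d_H(\conv(\mathcal{X}),\conv(\mathcal{X}')) < \epsilon$, Corollary \ref{comp_ent_dist_gen}.2 then gives
\begin{align}
 |D_{\rightarrow}(\conv(\mathcal{X})) - D_{\rightarrow}(\conv(\mathcal{X}'))| \leq \nu(\epsilon),
\end{align}
and Corollary \ref{dist_gen_corr} rewrites both sides as AVQS capacities, producing the desired bound $|D_{\rightarrow}^{AV}(\mathcal{X}) - D_{\rightarrow}^{AV}(\mathcal{X}')| \leq \nu(\epsilon)$. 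I do not expect any substantial obstacle here; the only point to be slightly careful about is to use an $\epsilon' > d_H(\mathcal{X},\mathcal{X}')$ rather than the infimum itself when approximating each $\rho_i$ by some $\rho_i' \in \mathcal{X}'$, so that the elementwise approximations exist, and then to pass to the limit $\epsilon' \downarrow d_H(\mathcal{X},\mathcal{X}')$ using monotonicity of $\nu$ on $(0,\tfrac12]$.
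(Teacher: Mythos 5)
Your proposal is correct and follows essentially the same route as the paper, which simply asserts that the corollary follows from the identity $D_{\rightarrow}^{AV}(\mathcal{X}) = D_{\rightarrow}(\conv(\mathcal{X}))$ of Corollary \ref{dist_gen_corr} combined with the Hausdorff-continuity of the compound capacity from Corollary \ref{comp_ent_dist_gen}.2 (i.e.\ Lemma \ref{coherent_cont_set}). The only ingredient the paper leaves implicit is the non-expansiveness of the convex hull operation with respect to $d_H$, which you supply correctly via the standard convex-combination argument.
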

  \begin{remark}
   Corollary \ref{av_dist_cap_cont_corr} classifies the AVQS one-way entanglement distillation task as well-behaved in the 
   following sense. Two different AVQS with generating sets being near in the Hausdorff sense will have 
   approximately equal capacities. \newline
   An example for a situation where ``capacity'' is a more fragile quantity is transmission of classical
   messages over an arbitrarily varying quantum channel. The capacity 
   $C_{random}$ for classical message transmission using correlated random codes is continuous, while
   it can be shown, that in some cases, the capacity using deterministic codes, 
   $C_{det}$, is discontinuous on certain points \cite{boche14_a}.
 \end{remark}
 \end{subsection}
 \end{section}
 \begin{section}{On Quantum State Merging for AVQS} \label{av_merging}
 In this Section, we consider quantum state merging in case that the bipartite source $A$ and $B$ have to 
 merge is an AVQS. In Ref. \citenum{bjelakovic13} and Section \ref{comp_merging} of this paper, we have 
 determined the optimal entanglement as well as classical communication cost in case of a compound
 quantum source, and achieved these rates by protocols with merging fidelity going to one exponentially 
 fast (see Section \ref{comp_merging}). Therefore, one would expect, that we can proceed as  
 we did for one-way entanglement distillation for AVQS in Section \ref{av_dist} and 
 build state merging protocols for the AVQS generated by a set 
 $\mathcal{X}$ from suitable protocols for the compound source generated by $\overline{\conv(\mathcal{X})}$. 
 One would expect to be able to prove the equalities 
 \begin{align}
  C_{m,\rightarrow}^{AV}(\mathcal{X}) = C_{m,\rightarrow}(\conv(\mathcal{X})) 
  = \sup_{\rho \in \conv(\mathcal{X})} S(A|B,\rho),  
  \label{av_merging_full}
 \end{align}
 to hold, where again, no difference has to be made between $\conv(\mathcal{X})$ 
 and its closure, because of continuity
 of the conditional von Neumann entropy. Actually, it seems possible, to prove the relation
 \begin{align}
  C^{AV}_{m,\rightarrow}(\mathcal{X}) \leq
  C_{m,\rightarrow}(\conv(\mathcal{X})) \label{av_merging_ach}
 \end{align}
 using Ahlswede's elimination and derandomization techniques (at least if the AVQS is generated by a 
 finite set of states). We do not carry out the argument here. Instead,
 we give a simple counterexample to the relation in (\ref{av_merging_full}). \newline
 Consider a finite set $\hat{\mathcal{X}} := \{\rho_s\}_{s=1}^N$ of bipartite states on a Hilbert space $\hr_A \otimes
 \hr_B$, which is generated by unitaries in the following sense. Let $\rho_1 \in \st(\hr_A \otimes \hr_B)$, where
 we assume $S(A|B,\rho_1) < 0$ and $\dim \hr_A \geq N\cdot \dim \supp(\rho_{A,1})$, $U_1 = \eins_{\hr_A}$
 and $U_2,...,U_N$ unitaries on $\hr_A$ such that with the definitions
 \begin{align}
  \rho_s := U_s\otimes \eins_{\hr_B}(\rho_1)U_s^\ast \otimes \eins_{\hr_B} &&(s \in [N])\label{av_merg_ex_unit}
 \end{align}
 the supports of the $A$-marginals are pairwise orthogonal, i.e.
 \begin{align}
 \supp(\rho_{A,s}) \perp \supp(\rho_{A,s'})  &&(s,s' \in [N], s' \neq s).
 \end{align}
 Note, that our definitions also imply the relations $\rho_{B,s} = \rho_{B,1} \ (s \in [N])$ and
 \begin{align}
  \supp(\rho_s) \perp \supp(\rho_{s'}) &&(s,s' \in [N], s \neq s').
 \end{align}
 In the following we show, that sets constructed in the above described manner are 
 counterexamples to (\ref{av_merging_full}) if $N > 1$.
 \begin{example} \label{av_merging_counterexample}
  For the AVQS generated by $\hat{\mathcal{X}}:= \{\rho_s\}_{s=1}^N$, it holds
  \begin{align}
   C_{m,\rightarrow}^{AV}(\hat{\mathcal{X}}) = C_{m\rightarrow}\left(\conv(\hat{\mathcal{X}})\right) - \log N.
  \end{align}
  The classical $A \rightarrow B$ communication cost for merging of the AVQS $\hat{\mathcal{X}}$
  is upper bounded by 
  \begin{align}
   \sup_{\sigma \in \conv(\hat{\mathcal{X}})} I(A;E,\sigma) -  \log N,
  \end{align}
  where $\rho_p := \sum_{s=1}^N p(s) \rho_s$ for each $p \in \pr([N])$.
 \end{example}
 \begin{proof}[Proof of Example \ref{av_merging_counterexample}]
  Before we prove the claims made in the example, we briefly sketch the argument. 
  Since
  the $A$ marginals are supported on pairwise orthogonal subspaces, $A$ can perfectly detect, given a block
  of $l$ outputs of the AVQS, which of the $s^l \in \bS^l$ is actually realized. In this way, $A$ obtains 
  state knowledge which helps to achieve the desired rates.\newline
  We introduce unitary channels $\mathcal{V}_{A,1},...,\mathcal{V}_{A,N}$ and 
  $\mathcal{V}_{B',1},...,\mathcal{V}_{B',N}$ where we define $\mathcal{V}_{A,s}(\cdot) := 
  U_s (\cdot) U_s^\ast$ with the unitaries from (\ref{av_merg_ex_unit}) and consider $\mathcal{V}_{B',s}$ to 
  be the corresponding unitary channel on the space $\hr_{B'}$ for each $s \in [N]$. 
  For given blocklength $l$, we define unitary channels 
  \begin{align}
   \mathcal{V}_{A,s^l}(\cdot) := \mathcal{V}_{A,s_1} \otimes ...\otimes \mathcal{V}_{A,s_l} 
   \hspace{.3cm} \text{and} \hspace{.3cm}
   \mathcal{V}_{B',s^l}(\cdot):= \mathcal{V}_{B',s_1} \otimes ...\otimes \mathcal{V}_{B',s_l} 
   \label{av_merg_un}
  \end{align}
  for each $s^l = (s_1,...,s_l) \in \bS^l$ accordingly. Thus, the definitions in (\ref{av_merg_ex_unit}) imply
  \begin{align}
   \rho_{s^l} = \mathcal{V}_{A,s^l} \otimes \id_{\hr_B^{\otimes l}}(\rho_1^{\otimes l}). &&(s^l \in [N]^l) 
  \end{align}
  Using the projection $P_s$ onto the support of $\rho_{A,s}$ for each $s \in [N]$, we define a quantum 
  instrument
  \begin{align}
   \hat{\mathcal{A}} := \{\hat{\mathcal{A}}_s\}_{s=1}^N
  \end{align}
  with $\hat{\mathcal{A}}_s(\cdot) := \mathcal{U}_{A,s}\circ P_s(\cdot)P_s^\ast$ for each $s \in [N]$, which implies 
  \begin{align}
   \hat{\mathcal{A}}_{s'}\otimes \id_{\hr_B}(\rho_s) = \delta_{ss'}\rho_1 && (s \in [N]). \label{av_merg_ex_instr}
  \end{align}
  It is known from Ref. \citenum{horodecki07} and Section \ref{comp_merging}, that for each $\delta > 0$ and 
  sufficiently large blocklength $l$, 
  there exists an $(l,k_l,\tilde{D}_l)$ $A \rightarrow B$ merging $\tilde{\mathcal{M}}_l$ such that
  \begin{align}
   F(\tilde{\mathcal{M}}_l\otimes \id_{\hr_E^{\otimes n}}(\psi_1^{\otimes l}), \phi_l \otimes \psi_1'^{\otimes l})
   \geq 1 - 2^{-lc} \label{av_merg_ex_fid_1}
  \end{align}
  holds with a constant $c > 0$, where $\psi_1$ is a purification of $\rho_1$ and $\phi_l$ a maximally 
  entangled state shared by $A$ and $B$ with 
  \begin{align}
   - \frac{1}{l}\log \sr(\phi_l) \leq S(A|B,\rho_1) + \delta \label{av_merg_ex_qrate_0}
  \end{align}
  and where for the classical communication rate 
  \begin{align}
   \frac{1}{l} \log \tilde{D}_l \leq I(A;E,\rho_1) + \delta \label{av_merg_ex_crate_0}
  \end{align}
  holds. We combine the instrument $\hat{\mathcal{A}}$ and the unitary channels from (\ref{av_merg_un}) with 
  $\tilde{\mathcal{M}}_l$ to build a merging LOCC $\mathcal{M}_l$ suitable for merging the AVQS generated 
  by $\hat{\mathcal{X}}$ and define 
  \begin{align}
   \mathcal{M}_l
   &:= \sum_{s^l \in [N]^l} (\mathcal{V}_{B',s^l} \otimes \id_{\hr_B^{\otimes l}}) 
   \circ \tilde{\mathcal{M}_l} \circ(\hat{\mathcal{A}}_{s^l} \otimes \id_{\hr_B^{\otimes l}}). 
  \end{align}
  Clearly, $\mathcal{M}_l$ is an $A \rightarrow B$ LOCC channel. Explicitly, 
  inspection of the above definition shows, 
  that $\mathcal{M}_l$ is an $(l,k_l,D_l)$ $A\rightarrow B$ merging where one of
  \begin{align}
   D_l = \tilde{D}_l \cdot N^l \label{avqs_merging_ex_class_rate}
  \end{align}
   different classical messages has to be communicated within action of $\mathcal{M}_l$. 
   Moreover, for each $s^l \in [N]^l$, it holds
  \begin{align}
   &F(\mathcal{M}_l \otimes \id_{\hr_{E}^{\otimes l}}(\psi_{s^l}),\phi_l \otimes \psi'_{s^l}) \\
   &\overset{(a)}{=} \sum_{m^l \in [N]^l} F((\mathcal{V}_{B',m^l} \otimes \id_{\hr_B^{\otimes l}})
   \circ \tilde{\mathcal{M}}_l \circ(\hat{\mathcal{A}}_{m^l} \otimes \id_{\hr_{B}^{\otimes l}}) 
   \otimes \id_{\hr_E^{\otimes l}}(\psi_{s^l}), \phi_l \otimes \psi'_{s^l})  \\
   &\overset{(b)}{=} F(\tilde{\mathcal{M}}_l \otimes \id_{\hr_E^{\otimes l}} (\psi_{1}^{\otimes l}), 
   \phi_l \otimes (\mathcal{V}_{B',s^l}^\ast \otimes \id_{\hr_{BE}^{\otimes l}})(\psi'_{s^l}))  \\
   &\overset{(c)}{=} F(\tilde{\mathcal{M}}_l \otimes \id_{\hr_E^{\otimes l}}
   (\psi_{1}^{\otimes l}), \phi_l \otimes (\psi'_1)^{\otimes l}) \\
   &\overset{(d)}{\geq} 1 - 2^{-lc}, \label{av_merg_ex_fid_2}
   \end{align}
   where (a) is the definition of $\mathcal{M}_l$ plus linearity of the fidelity in the first argument in 
   the present situation, (b) is because
   \begin{align}
    \hat{\mathcal{A}}_{m^l} \otimes \id_{\hr_{BE}^{\otimes l}}(\psi_{s^l}) = \delta_{m^ls^l} \psi_1^{\otimes l}
   \end{align}
   holds implied by (\ref{av_merg_ex_instr}) together with the fact, that the fidelity is invariant under 
   action of unitary channels applied simultaneously on both arguments. Equality (c) follows from 
   (\ref{av_merg_ex_unit}), and (d) is by (\ref{av_merg_ex_fid_1}). It remains to evaluate the rates. 
   It is well known, that for each ensemble $\{q(x), \rho_x \}_{x\in \mathbf{X}}$ of quantum states having 
   pairwise orthogonal supports, it holds
   \begin{align}
   S\left(\sum_{x \in \mathbf{X}} q(x) \rho_x\right) = \sum_{s \in \mathbf{X}} q(x) S(\rho_x) + H(q).
   \end{align}
   where $H(q)$ is the Shannon entropy of $q$. Thus, for each
   $p \in \pr([N])$, $\rho_p := \sum_{s \in [N]} p(s) \rho_s$ we yield
   \begin{align}
    S(A|B,\rho_p) 
    &= S(\rho_p) - S(\rho_{B,p}) \\
    &= \sum_{s \in [N]} p(s) S(\rho_s) + H(p) - S(\rho_{B,1})\\
    &= S(A|B,\rho_1) + H(p)
   \end{align}
    and 
   \begin{align}
    I(A;E,\rho_p) 
    &= S(\rho_{A,p}) + S(A|B,\rho_p) \\
    &= \sum_{s \in [N]} p(s) S(\rho_{A,s}) + S(A|B,\rho_1) + 2 H(p)\\
    &= I(A;E,\rho_1) + 2 H(p).
   \end{align}
   Taking maxima over all $p \in \pr([N])$ and rearranging equations, we arrive at
   \begin{align}
    S(A|B,\rho_1) = \max_{p \in \pr([N])} S(A|B,\rho_p) - \log N \label{av_merg_ex_qrate}
   \end{align}
   and
   \begin{align}
    I(A;E,\rho_1) = \max_{p \in \pr([N])} I(A;E,\rho_p) - 2 \log N. \label{av_merg_ex_crate}
   \end{align}
   Note, that 
   \begin{align}
    C_{m,\rightarrow}(\conv(\hat{\mathcal{X}})) = \max_{p \in \pr([N])} S(A|B,\rho_p) \label{comp_rate_formula}
   \end{align}
   by Proposition \ref{comp_merging_complete}.
   Combining (\ref{av_merg_ex_qrate}) with (\ref{av_merg_ex_qrate_0}) and (\ref{comp_rate_formula}) together 
   with (\ref{av_merg_ex_fid_2}) shows, that 
   \begin{align}
    C_m^{AV}(\mathcal{X}) \leq C_m(\conv(\hat{\mathcal{X}})) - \log N + \delta
   \end{align}
   holds. The converse is valid by the merging cost converse for single states
   \cite{horodecki07}. Moreover, by 
   (\ref{av_merg_ex_crate}), our protocols have classical $A \rightarrow B$ classical communication rates  
   with
   \begin{align}
       \limsup_{l \rightarrow \infty} \frac{1}{l}\log D_l
        &= \limsup_{l \rightarrow \infty} \frac{1}{l}\log(\tilde{D}_l\cdot N^l) \label{avqs_merging_ex_ende_1}\\
        &\leq I(A;E,\rho_1) + \delta + \log N \label{avqs_merging_ex_ende_2} \\
        &= \max_{p \in \pr([N])} I(A;E,\rho_p) - \log N + \delta \label{avqs_merging_ex_ende_3}
   \end{align}
   where (\ref{avqs_merging_ex_ende_1}) follows from (\ref{avqs_merging_ex_class_rate}), 
   (\ref{avqs_merging_ex_ende_2}) is by (\ref{av_merg_ex_crate_0}), and (\ref{avqs_merging_ex_ende_3})
   is by (\ref{av_merg_ex_crate}). Since $\delta > 0$ was an arbitrary positive number, we are done. 
   \end{proof}  
\end{section}
 \begin{section}{Conclusion} \label{conclusion}
  In this work, we have shown simultaneous achievability of the optimal entanglement as well as 
  classical communication cost of one-way quantum state merging in case, that the source to merge 
  is a compound quantum source. In this way, we completed our work on 
  quantum state merging for compound sources begun in Ref. \citenum{bjelakovic13}. \newline
  We also determined the optimal entanglement rates for one-way entanglement distillation in case, that 
  the source from which the entanglement is distilled is an AVQS. In this case, Ahlswede's 
  robustification and elimination technique turned out to be appropriate tools, 
  and we can in fact, by the elimination technique, achieve each rate below the entanglement capacity
  with fidelity going to one exponentially fast and simultaneously approximate the classical communication
  rate of the utilized protocols for the compound source generated by the convex hull of the AVQS generating 
  set.
  \newline
  Imposing a simple example of a class of AVQS, we demonstrated, that applying the robustification 
  and elimination technique to suitable protocols for the corresponding compound source (generated by
  the convex hull of the AVQS generating set), is insufficient in general.\newline
  Another situation, where the above standard approach is not suitable, is the problem of proving 
  achievability of the strong subspace capacity of an arbitrarily varying quantum channel (AVQC). 
  In this case, the problem is not immediately accessible for the 
  robustification technique, and this deficiency was overcome in Ref. \citenum{ahlswede13} by first determining the
  capacity of the 
  AVQC for entanglement transmission, and then showing equality of the capacities utilizing fairly nontrivial
  results from convex high-dimensional convex geometry. \newline
  The quantum state merging problem for AVQS, in contrast, seems accessible to robustification and elimination.
  However, application leads to suboptimal rates in some cases, as Example \ref{av_merging_counterexample} 
  shows.\newline
  In fact, a closer look to Example \ref{av_merging_counterexample} reveals, that the achievability result 
  asserted by the inequality in (\ref{av_merging_ach}), is not only suboptimal, but also meaningless in a 
  qualitative sense for some AVQS.\newline
  Imagine a situation, in which the communication parties do not have 
  any access to shared pure entanglement resources and they want to merge the AVQS generated by a set 
  $\hat{\mathcal{X}}$ as in Example \ref{av_merging_counterexample} with $C_m\left
  (\conv(\hat{\mathcal{X}})\right) > 0$, where the number $N$ of states in 
  $\hat{\mathcal{X}}$ is assumed to be bounded 
  \begin{align}
   N > \exp(C_m(\conv(\hat{\mathcal{X}})).
  \end{align}
  Having only protocols according to the achievability result (\ref{av_merging_ach}) at hand, they infer, 
  that merging is impossible in their situation, while Example \ref{av_merging_counterexample} shows, 
  that merging of the AVQS is, in fact, possible without external entanglement resources. \newline
  Summarizing our considerations, we notice with some regret, that in case of quantum state merging, the 
  merging cost of an AVQS generated by a set $\mathcal{X}$ seems, at least not immediately, related to the 
  merging cost of the corresponding compound source generated by $\conv(\mathcal{X})$. A merging cost function
  presumably will, involve LOCC pre- and post-processing 
  maximization. Probably, the merging cost for AVQS will require a multi-letter characterization.
  \end{section}
 \begin{section}*{Acknowledgements}
  We express our deep gratitude to Igor Bjelakovi\'{c}, our former colleague and constant source of inspiration
  for years. Having prepared Ref. \citenum{bjelakovic13} with us, he also 
  stimulated the initial discussions leading to the present paper. Partial results of this paper where presented 
  in a talk at the 2014 IEEE International Symposium on Information Theory (ISIT 2014) 
  without detailed proofs\cite{boche14_b}.\newline
  The work of H.B. is supported by the DFG via grant BO 1734/20-1 and by the BMBF via grant 01BQ1050.
  G.J. gratefully acknowledges the support of the TUM Graduate School / Faculty Graduate Center FGC-EI at
  Technische Universit\"at M\"unchen, Germany.
 \end{section}
 \begin{section}{Appendix}
 \begin{subsection}{Proof of the bound in Eq. (\ref{entropy_estimation_constant_bound})}
 \label{entropy_estimation_appendix}
  Let $\eta > 0$ be fixed and $p,q$ probability distributions on $[d]$, such that 
  \begin{align}
   |H(p)-H(q)| \geq \eta \label{appendix1_1}
  \end{align}
  holds. It is well known, that the Shannon entropy is uniformly continuous in the variation distance 
  (see e.g. \cite{csiszar11}), 
  it holds
  \begin{align}
   |H(p)-H(q)| \leq f(\|p-q\|_1) \label{appendix1_2}
  \end{align}
  with a strictly monotonically increasing function $f$. Therefore, (\ref{appendix1_1}) and (\ref{appendix1_2}) 
  lead to 
  \begin{align}
   0 < 2c_3 := \frac{1}{2\ln 2}f^{-1}(\eta)^2 \leq  \frac{1}{2 \ln 2}\|p-q\|_1^2 \leq D(p||q),
  \end{align}
  where the rightmost inequality is Pinsker's inequality $D(p||q) \geq \frac{1}{2\ln 2}\|p-q\|_1^2$. 
  Since $p$ and $q$ where arbitrary probability distributions
  on $[d]$ with entropy distance bounded below by $\eta$, for each $i \in [N]$, the bound in 
  (\ref{entropy_estimation_constant_bound}) is valid for each $i \in [N]$.
\end{subsection} 

\end{section}

\end{document}